\def\isdraft{1}
\tikzset{every state/.style={minimum size=0pt}}
\newtheorem{theorem}{Theorem}
\newtheorem{corollary}[theorem]{Corollary}
\newtheorem{fact}[theorem]{Fact}
\newtheorem{lemma}[theorem]{Lemma}
\newtheorem{proposition}[theorem]{Proposition}
\theoremstyle{definition} % TPLP remove
\newtheorem{convention}[theorem]{Convention}
\newtheorem{definition}[theorem]{Definition}
\newtheorem{example}[theorem]{Example}
\newtheorem{notation}[theorem]{Notation}
\newtheorem{problem}[theorem]{Problem}
\newtheorem{pseudocode}[theorem]{Pseudocode}
\newtheorem{remark}[theorem]{Remark}
\newcommand{\righttherefore}{:\joinrel\cdot\,}
\title{
    Analogical Proportions 
}
\author{
    Christian Anti\'c
}
\address{
    christian.antic@icloud.com\\
    % Vienna University of Technology\\
    Vienna, Austria
}
\begin{document}
\begin{abstract} 
    Analogy-making is at the core of human and artificial intelligence and creativity with applications to such diverse tasks as proving mathematical theorems and building mathematical theories, common sense reasoning, learning, language acquisition, and story telling. This paper introduces from first principles an abstract algebraic framework of analogical proportions of the form `$a$ is to $b$ what $c$ is to $d$' in the general setting of universal algebra. This enables us to compare mathematical objects possibly across different domains in a uniform way which is crucial for AI-systems. It turns out that our notion of analogical proportions has appealing mathematical properties. As we construct our model from first principles using only elementary concepts of universal algebra, and since our model questions some basic properties of analogical proportions presupposed in the literature, to convince the reader of the plausibility of our model we show that it can be naturally embedded into first-order logic via model-theoretic types and prove from that perspective that analogical proportions are compatible with structure-preserving mappings. This provides conceptual evidence for its applicability. In a broader sense, this paper is a first step towards a theory of analogical reasoning and learning systems with potential applications to fundamental AI-problems like common sense reasoning and computational learning and creativity.
\end{abstract}
\maketitle

\section{Introduction}

Analogy-making is at the core of human and artificial intelligence and creativity with applications to such diverse tasks as proving mathematical theorems and building mathematical theories, common sense reasoning, learning, language acquisition, and story telling \cite<e.g.>{Boden98,Gust08,Hofstadter01,Hofstadter13,Krieger03,Polya54,Sowa03,Winston80,Wos93}. This paper introduces from first principles an abstract algebraic framework of analogical proportions of the form `$a$ is to $b$ what $c$ is to $d$' in the general setting of universal algebra. This enables us to compare mathematical objects possibly across {\em different} domains in a uniform way which is crucial for AI-systems. The main idea is simple and is illustrated in the following example.% to define solutions to analogical equations in terms of maximal sets of algebraic justifications which enables us to find analogous elements in an `unknown' target domain given `known' elements in the source domain.

\begin{example}\label{exa:exa} Imagine two domains, one consisting of positive integers $1,2,\ldots$ and the other made up of words $ab,ba\ldots$ et cetera. The analogical equation
\begin{align}\label{equ:24abz} 
    2:4::ab:x
\end{align} is asking for some word $x$ (here $x$ is a variable) which is to $ab$ what $4$ is to $2$. What can be said about the relationship between 2 and 4? One simple observation is that 4 is the square of 2. Now, by analogy, what is the `square' of $ab$? If we interpret `multiplication' of words as concatenation---a natural choice---then $(ab)^2$ is the word $abab$, which is a plausible solution to \prettyref{equ:24abz}. We can state this more formally as follows. Let $s(z):=z$ and $t(z):=z^2$ be two terms. We have 
\begin{align}\label{equ:2=s(2)} 
    2=s(2),\quad 4=t(2), \quad\text{and}\quad ab=s(ab).
\end{align} By continuing the pattern in \prettyref{equ:2=s(2)}, what could $x$ in \prettyref{equ:24abz} be equal to? In \prettyref{equ:2=s(2)}, we see that transforming $2$ into $4$ means transforming $s(2)$ into $t(2)$. Now what does it mean to transform $ab$ `in the same way' or `analogously'? The obvious answer is to transform $s(ab)$ into $t(ab)=abab$ computed before. As simple as this line of reasoning may seem, it cannot be formalized by some current models of analogical proportions which restrict themselves to proportions between objects of a single domain \cite<cf.>{Stroppa06,Miclet08} and we will return to this specific analogical proportion in a more formal manner in \prettyref{exa:2_4_ab_z}.
\end{example}

It is important to emphasize that we do not want our model to capture only obvious analogical proportions as in the example above. To the contrary, in the more interesting and `creative' cases, analogical proportions provide a tool for deriving {\em unexpected} and often hypothetical conclusions, which if necessary can afterwards be checked for plausibility in a specific context. This process is similar to non-monotonic reasoning, where this kind of `guess and check' paradigm is common \cite<cf.>{Eiter09}. The next example illustrates the idea of formalizing aspects of creativity via unexpected analogical proportions.

\begin{example}\label{exa:2_0_3_z} Consider the analogical equation over the integers given by
\begin{align*} 
    2:0::3:x. 
\end{align*} An intuitive solution is given, for example, by $x=1$ justified via $0=2-2$ and $1=3-2$. However, there is a less obvious but still reasonably justifiable solution to the above equation: $x=1000$! To see why, take a look at the following analogical proportion:
\begin{align*} 
    [\underline{-3}+5]:[1000\cdot(\underline{-3})+3000]::[\underline{-2}+5]:[1000\cdot(\underline{-2})+3000].
\end{align*} Observe that we transform $\underline{-3}+5=2$ into $\underline{-2}+5=3$ by replacing $-3$ with $-2$, and {\em analogously} we transform $1000\cdot(\underline{-3})+3000=0$ into $1000\cdot(\underline{-2})+3000=1000$ again by replacing $-3$ with $-2$. This (together with some technical argument; see \prettyref{exa:0_0_2_40}) justifies the proportion 
\begin{align*} 
    2:0::3:1000.
\end{align*} We believe that finding such hidden transformations between seemingly unrelated objects is crucial for formalizing creativity \cite<cf.>{Boden98}. We will come back to this proportion in \prettyref{exa:0_0_2_40}.
\end{example}

The rest of the paper is devoted to formalizing and studying reasoning patterns as in the examples above within the abstract algebraic setting of universal algebra. The aim of this paper is to introduce our model of analogical proportions---which to the best of our knowledge is novel---in its full generality. The core idea is formulated in \prettyref{def:models} and despite its conceptual simplicity (it consists of three parts, where the second and third parts are symmetrical variants of the first) it has interesting consequences with mathematically appealing proofs, which we plan to explore further in the future. %More precisely, this theoretical paper is declarative in nature---it clarifies {\em what} is meant by an analogical proportion across two (different) domains without saying much about {\em how} analogical proportions are to be (algorithmically) computed, which is highly non-trivial in the general case and beyond the scope of this paper. 
Since `plausible analogical proportion' is an informal concept, we cannot hope to formally prove the soundness and completeness of our framework---the best we can do is to prove that some desirable proportions are derivable within our framework (e.g. Theorems \ref{thm:FPT}, \ref{thm:axioms}, \ref{thm:FIT}, \ref{thm:SIT}) and that some implausible proportions cannot be derived (e.g. \prettyref{thm:axioms} and Examples \ref{exa:eta_hom} and \ref{exa:aaa0}). We compare our framework with two recently introduced frameworks of analogical proportions from the literature \cite{Miclet08,Stroppa06}---introduced for applications to artificial intelligence and machine learning, specifically for natural language processing and handwritten character recognition---within the concrete domains of sets and numbers, and in each case we either disagree with the notion from the literature justified by some counter-example (\prettyref{exa:aaa0}) or we can show that our model yields strictly more justifiable solutions (\prettyref{exa:ab0Z}), which provides evidence for its applicability. Finally, in the 2-element boolean setting, we argue in \prettyref{sec:Related_Work} that our model coincides with Klein's \cite{Klein82} model and that it reasonably disagrees with Miclet and Prade's \cite{Miclet09} framework, which is remarkable as our framework is not geared towards the boolean domain.

\citeA{Lepage03} proposes four axioms---namely symmetry $a:b::c:d\Leftrightarrow c:d::a:b$, central permutation $a:b::c:d\Leftrightarrow a:c::b:d$, strong inner reflexivity $a:a::c:d\Rightarrow d=c$, and strong reflexivity $a:b::a:d\Rightarrow d=b$---as a guideline for formal models of analogical proportions. To be more precise, \citeA{Lepage03} introduces his axioms in the linguistic setting of words and although his axioms appear reasonable in the word domain (but see \prettyref{problem:Lepage}), the following counter-examples show that they cannot be straightforwardly applied to the general case. %For instance, consider the domain $\mathbb Z$ of integers ordered by $<$ without any other structure on $\mathbb Z$---here $0<1$ and $-1<2$ reasonably imply $0:1::-1:2$. However, one can argue that the permuted proportion $0:-1::1:2$ fails since $0>-1$ whereas $1<2$. 
Strong reflexivity fails, for instance, in cases where the relation of $a$ to $b$ and to $d$ is identical, for some distinct elements $b$ and $d$. Strong inner reflexivity fails, for instance, if the relation of $a$ to itself is similar to the relation of $c$ to $d$. In our framework, by making the underlying structures of an analogical proportion explicit, it turns out that except for symmetry none of Lepage's axioms holds in the general case, justified by counter-examples (\prettyref{thm:axioms}). This has critical consequences as his axioms are assumed by many authors \cite<e.g.>{Barbot19,Miclet08} to hold beyond the word domain. % For instance, central permutation and strong reflexivity---a property that is satisfied in every structure---imply strong determinism, which again should not be treated as an axiom (\prettyref{thm:axioms}).
We adapt Lepage's list of axioms by including symmetry, and by adding inner symmetry $a:b::c:d\Leftrightarrow b:a::d:c$, inner reflexivity $a:a::c:c$, and reflexivity $a:b::a:b$, and determinism $a:a::a:d \quad\Leftrightarrow\quad a=d$ to the list. Notice that inner reflexivity and reflexivity are weak forms of Lepage's strong inner reflexivity and strong reflexivity axioms, respectively, whereas inner symmetry is a variant of Lepage's symmetry axiom which requires symmetry to hold within the respective structures. Moreover, we consider the properties of commutativity $a:b::b:a$, transitivity $a:b::c:d\;\&\;c:d::e:f \Rightarrow a:b::e:f$, inner transitivity $a:b::c:d\;\&\;b:e::d:f \Rightarrow a:e::c:f$, and central transitivity $a:b::b:c\;\&\;b:c::c:d \Rightarrow a:b::c:d$. We prove that the inner symmetry, inner reflexivity, reflexivity, and determinism axioms are satisfied within our framework, whereas commutativity, transitivity, inner transitivity, and central transitivity fail in general (\prettyref{thm:axioms}). This shows that the property of being in analogical proportion is a {\em local} property (\prettyref{rem:local}). This is in contrast to category theory \cite<cf.>{Awodey10}---the algebraic field for formalizing mathematical analogies---where the transitivity of arrows leads to a form of `connectedness' which in general is not present in analogical proportions (see the discussion in \prettyref{sec:Category_Theory}). % The interesting fact that everyday human analogy-making is instantaneous although the brain...

Interestingly, analogical proportions turn out to be {\em non-monotonic} in the sense that expanding the underlying structure of an analogical proportion may prevent its derivation (\prettyref{thm:axioms}). This may have interesting connections to non-monotonic reasoning, which itself is crucial for common sense reasoning and which has been prominently formalized within the field of answer set programming \cite{Gelfond91} \cite<cf.>{Brewka11}). % It is interesting to study in which structures central permutation or strong determinism hold, which is beyond the scope of this paper and which we leave as future work.

The functional-based view in \cite{Barbot19} is related to our Functional Proportion \prettyref{thm:FPT} on functional solutions and the preservation of functional dependencies across different domains, which means that in case $t(z)$ is a transformation, satisfying a mild injectivity condition and applicable in the source and target domains, we have $a:t(a)::c:t(c)$. The critical difference is that the authors of \cite{Barbot19} assume Lepage's central permutation axiom, which implies in their framework that the functional transformation $t$ need to be bijective.

Analogical proportions turn out to be compatible with structure-preserving mappings as shown in our First and Second Isomorphism Theorems \ref{thm:FIT} and \ref{thm:SIT}, respectively---a result which is in the vein of Gentner's \cite{Gentner83} Structure-Mapping Theory (SMT) of analogy-making (see the brief discussion in \prettyref{sec:Related_Work}).

As we construct our model from first principles using only elementary concepts of universal algebra, and since our model questions some fundamental properties of analogical proportions presupposed in the literature, to convince the reader of the plausibility of our model we need to validate it either empirically or---what we prefer here---theoretically by showing that it fits naturally into the overall mathematical landscape. For this, we will show in \prettyref{sec:Logical_} that our purely algebraic model can be naturally embedded into first-order logic via model-theoretic types \cite<cf.>[§7.1]{Hinman05}. More precisely, we show that sets of algebraic justifications of analogical proportions are in one-to-one correspondence with so-called `rewrite formulas' and `rewrite types', which is appealing as types play a fundamental role in model theory in general and showing that our model---which is primarily motivated by simple examples---has a natural logical interpretation, provides strong evidence for its applicability. In \prettyref{sec:Isomorphisms}, we then prove from this logical perspective that analogical proportions are compatible with isomorphisms, a further desired property.\\

In a broader sense, this paper is a first step towards a theory of analogical reasoning and learning systems with potential applications to fundamental AI-problems like common sense reasoning and computational learning and creativity.

\section{Preliminaries}\label{sec:Preliminaries}

Given any sequence of objects $\mathbf o=o_1\ldots o_n$, $n\geq 0$, we denote the {\em length} $n$ of $\mathbf o$ by $|\mathbf o|$. We denote the {\em power set} of a set $U$ by $\mathfrak P(U)$. The natural numbers are denoted by $\mathbb N:=\{0,1,2,\ldots\}$, and the natural numbers not containing $0$ and $1$ by $\mathbb N_2:=\{2,3,\ldots\}$; the integers are denoted by $\mathbb Z$ and the rational numbers are denoted by $\mathbb Q$. Moreover, the booleans are denoted by $\mathfrak{BOOL}:=\{0,1\}$, with conjunction $0\land 0:=1\land 0:=0\land 1:=0$ and $1\land 1:=1$, and disjunction $0\lor 0:=0$ and $1\lor 0:=0\lor 1:=1\lor 1:=1$. Given a finite alphabet $\Sigma$, we denote the set of all finite words over $\Sigma$ containing the {\em empty word} $\varepsilon$ by $\Sigma^\ast$ and we define $\Sigma^+:=\Sigma^\ast-\{\varepsilon\}$.

\subsection{Universal Algebra}

We recall some basic notions and notations of universal algebra \cite<e.g.>{Burris00}.

% todo add that we do not distinguish between $x$ and $x+0$ ... 
\subsubsection{Syntax}

A {\em language} $L$ of algebras consists of a set $Fs_L$ of {\em function symbols}, a set $Cs_L$ of {\em constant symbols}, a {\em rank function} $rk_L:Fs_L\to\mathbb N-\{0\}$, and a denumerable set $V=\{x,z_1,z_2,\ldots\}$ of {\em variables}. The sets $Fs_L,Cs_L,$ and $V$ are pairwise disjoint. Moreover, we always assume that $L$ contains the {\em equality relation symbol} $=$ interpreted as the equality relation in every algebra. A language $L$ is a {\em sublanguage} of a language $L'$---in symbols, $L\subseteq L'$---iff $Fs_L\subseteq Fs_{L'}$, $Cs_L\subseteq Cs_{L'}$, and $rk_L$ is the restriction of $rk_{L'}$ to the function symbols in $L$. In the special case that $Fs_L=Fs_{L'}$, we say that $L'$ is an {\em extension by constants}. An {\em $L$-expression} is any finite string of symbols from $L$. An {\em $L$-atomic term} is either a variable or a constant symbol. The set $T_L(V)$ of {\em $L$-terms} is the smallest set of $L$-expressions such that (i) every $L$-atomic term is an $L$-term; and (ii) for any $L$-function symbol $f$ and any $L$-terms $t_1,\ldots,t_{rk_L(f)}$, $f(t_1,\ldots,t_{rk_L(f)})$ is an $L$-term. We denote the set of variables occurring in a term $t$ by $V(t)$. We say that $t$ has {\em rank} $n$ if $V(t)\subseteq\{z_1,\ldots,z_n\}$. A term is {\em ground} if it contains no variables. 

\subsubsection{Semantics}

An {\em $L$-algebra} $\mathfrak A$ consists of (i) a non-empty set $A$, the {\em universe} of $\mathfrak A$; (ii) for each $f\in Fs_L$, a function $f^\mathfrak A:A^{rk_L(f)}\to A$, the {\em functions} of $\mathfrak A$; and (iii) for each $c\in Cs_L$, an element $c^\mathfrak A\in A$, the {\em distinguished elements} of $\mathfrak A$. %In case $\mathfrak A=(A,Fs^\mathfrak A_L,Cs^\mathfrak A_L)$ with {\em finite} sequences $Fs^\mathfrak A_L=(f^\mathfrak A)_{f\in Fs_L}=(f^\mathfrak A_1,\ldots,f^\mathfrak A_m)$ and $Cs^\mathfrak A_L=(c^\mathfrak A)_{c\in Cs_L}=(c^\mathfrak A_1,\ldots,c^\mathfrak A_n)$, we simply write $\mathfrak A=(A,f^\mathfrak A_1,\ldots,f^\mathfrak A_m,c^\mathfrak A_1,\ldots,c^\mathfrak A_n)$.

\begin{notation}\label{not:L}  With a slight abuse of notation, we will not distinguish between an $L$-algebra $\mathfrak A$ and its universe $A$ in case the operations are understood from the context. This means we will write $a\in A$ instead of $a\in A$ et cetera. Moreover, given a subset $A'$ of the universe of $\mathfrak A$, the language $L(A')$ is the language $L$ augmented by a constant symbol $a$ for each element $a\in A'$. % Wird in \prettyref{sec:Set_} verwendet, um $L(\mathfrak P(U)\cap\mathfrak P(W))$-algebras zu definieren
\end{notation}

% see \citeA[Def. 2.6.3]{Hinman05}
Given an $L$-algebra $\mathfrak A$ and an $L'$-algebra $\mathfrak A'$, for some languages $L\subseteq L'$, we say that $\mathfrak A$ is an {\em $L$-reduct} of $\mathfrak A'$ and $\mathfrak A'$ is an {\em $L'$-expansion} of $\mathfrak A$---in symbols, $\mathfrak A=\mathfrak A'\upharpoonright L$---iff $A=A'$, $f^\mathfrak A=f^{\mathfrak A'}$ for all $f\in Fs_L$, and $c^\mathfrak A=c^{\mathfrak A'}$ for all $c\in Cs_L$. % Given two $L$-algebras $\mathfrak A=(A,(f^\mathfrak A)_{f\in Fs_L},(c^\mathfrak A)_{c\in Cs_L})$ and $\mathfrak B=(B,(f^\mathfrak B)_{f\in Fs_L},(c^\mathfrak B)_{c\in Cs_L})$, we define $\mathfrak{A\cap B}:=(A\cap B,(f^\mathfrak A\cap f^\mathfrak B)_{f\in Fs_L},(c^\mathfrak A)_{c\in Cs_L})$\todo{was passiert mit constants?}.
%Given two $L$-algebras $\mathfrak A$ and $\mathfrak A'$, we say that $\mathfrak A$ is a {\em subalgebra} of $\mathfrak A'$---in symbols, $\mathfrak A\subseteq\mathfrak A'$---iff $A\subseteq A'$ and every function $f^\mathfrak A$ is the restriction of $f^{\mathfrak A'}$ to $A^{rk_L(f)}$.

For any $L$-algebra $\mathfrak A$, an {\em $\mathfrak A$-assignment} is a function $\nu:V\to\mathfrak A$. For any assignment $\nu$, let $\nu_{z\mapsto a}$ denote the assignment $\nu'$ such that $\nu'(z):=a$, and for all other variables $z'$, $\nu'(z'):=\nu(z')$. We extend the domain of the $\mathfrak A$-assignment $\nu$ from variables in $V$ to terms in $T_L(V)$ inductively as follows: (i) for every $c\in Cs_L$, $\nu(c):=c^\mathfrak A$; (ii) for every $f\in Fs_L$ and $t_1,\ldots,t_{rk_L(f)}\in T_L(V)$, $\nu(f(t_1,\ldots,t_{rk_L(f)})):=f^\mathfrak A(\nu(t_1),\ldots,\nu(t_{rk_L(f)}))$. Notice that every term $t$ induces a function on $\mathfrak A$
\begin{align*} 
    t^\mathfrak A:A^{rk_L(t)}\to A
\end{align*} given by
\begin{align*} 
    t^\mathfrak A(a_1,\ldots,a_{|V(t)|}):=\nu_{(a_1,\ldots,a_{|V(t)|})}(t),
\end{align*} where $\nu_{(a_1,\ldots,a_{|V(t)|})}(z_i):=a_i$, for all $1\leq i\leq |V(t)|$. %We denote the {\em identity function} on $\mathfrak A$ by $id_\mathfrak A$ or simply by $id$ in case $\mathfrak A$ is understood. 
%Given an $L$-algebra $\mathfrak A$, an {\em $L$-term} is an $L$-term which may contain distinguished elements of $\mathfrak A$ as constant symbols with the obvious interpretation. We denote the set of all $L$-terms with variables among $\mathbf z=z_1,\ldots,z_n$, $n\geq 0$, by $\mathfrak A[\mathbf z]$. For instance, $2z+1$ is a term in $(\mathbb N,+,1)[z]$ as $2z$ is an abbreviation for $z+z$ not containing 2, whereas $2z^2+1$ is {\em not} as $z^2=z\cdot z$ requires multiplication. 
We call a term $t$ {\em constant} in $\mathfrak A$ iff $t^\mathfrak A$ is a constant function, and we call $t$ {\em injective} in $\mathfrak A$ iff $t^\mathfrak A$ is an injective function. For instance, the term $t(z)=0z$ is constant in $(\mathbb N,\cdot,0)$ despite containing the variable $z$. Terms can be interpreted as `generalized elements' containing variables as placeholders for concrete elements, and they will play a central role in our algebraic formulation of analogical proportions given below.

A {\em homomorphism} from $\mathfrak A$ to $\mathfrak B$ is a mapping $ H:\mathfrak A\to\mathfrak B$ such that for any function symbol $f\in Fs_L$ and any elements $a_1,\ldots,a_{rk_L(f)}$,
\begin{align*} 
     H\left(f^\mathfrak A(a_1,\ldots,a_{rk_L(f)})\right)=f^\mathfrak B\left( H(a_1),\ldots, H(a_{rk_L(f)})\right).
\end{align*} An {\em isomorphism} is a bijective homomorphism, and we call two algebras $\mathfrak A$ and $\mathfrak B$ {\em isomorphic}---in symbols, $\mathfrak A\simeq\mathfrak B$---iff there exists an isomorphism from $\mathfrak A$ to $\mathfrak B$.

\subsection{First-Order Logic}

We recall the syntax and semantics of first-order logic, restricted to functional structures containing no relation symbols, by mainly following the lines of \cite{Hinman05}.

\subsubsection{Syntax}

A {\em first-order language} $L$ is a language of algebras extended by {\em relation symbols}, the {\em connectives} $\neg$ and $\lor$, and the {\em existential quantifier} $\exists$. We call $L$ a {\em functional} language if it contains no relation symbols other than equality. Notice that there is no real difference between a language of algebras and a functional first-order language; however, we make the formal distinction here to highlight the fact that full first-order logic and model theory containing relation symbols is strictly more expressive than plain universal algebra, summed up by \citeA{Chang73a} as
\begin{align*} \text{universal algebra} + \text{logic}=\text{model theory}.
\end{align*} 

\begin{notation} In this paper, we always assume that $L$ is functional.
\end{notation}

An {\em $L$-atomic formula} has the form $s=t$, where $s$ and $t$ are $L$-terms and we denote the set of all such formulas by $aFm_L$. The set of {\em $L$-formulas} is the smallest set of $L$-expressions such that (i) every $L$-atomic formula is an $L$-formula; (ii) if $\varphi$ and $\psi$ are $L$-formulas, then so are $\neg\varphi$ and $\varphi\lor\psi$; (iii) if $\varphi$ is an $L$-formula and $z\in V$ is a variable, then $\exists z\,\varphi$ is an $L$-formula. We introduce the following abbreviations: %\footnote{In what follows, we will use the notation `$:\equiv$' to denote a logical definition.}
$\varphi\land\psi:\equiv\neg(\neg\varphi\lor\neg\psi)$ and $\forall z\,\varphi:\equiv\neg\exists z\neg\varphi$. A {\em 2-formula} is a formula containing exactly two free variables.

\subsubsection{Semantics}

An {\em $L$-structure} $\mathfrak A$ is the same as an $L$-algebra in case $L$ is functional. We define the {\em logical entailment relation} inductively as follows: for any functional $L$-structure $\mathfrak A$ and any $\mathfrak A$-assignment $\nu$, (i) for any $L$-terms $s$ and $t$, $\mathfrak A\models (s=t)[\nu]$ iff $\nu(s)=\nu(t)$; (ii) for any formula $\varphi$, $\mathfrak A\models\neg\varphi[\nu]$ iff $\mathfrak A\not\models\varphi[\nu]$; (iii) for any formulas $\varphi$ and $\psi$, $\mathfrak A\models\varphi\lor\psi[\nu]$ iff $\mathfrak A\models\varphi[\nu]$ or $\mathfrak A\models\psi[\nu]$; (iv) for any formula $\varphi$ and variable $z$, $\mathfrak A\models\exists z\;\varphi[\nu]$ iff $\mathfrak A\models\varphi[\nu_{z\mapsto a}]$, for some $a\in A$. With the abbreviations introduced above, we further have, for any formulas $\varphi$ and $\psi$, (i) $\mathfrak A\models\varphi\land\psi[\nu]$ iff $\mathfrak A\models\varphi[\nu]$ and $\mathfrak A\models\psi[\nu]$; and (ii) $\mathfrak A\models\forall z\;\varphi[\nu]$ iff $\mathfrak A\models\varphi[\nu_{z\mapsto a}]$, for all $a\in A$. For any function $ H:\mathfrak A\to\mathfrak B$ and any $\mathfrak A$-assignment $\nu$, we define the {\em $ H$-induced $\mathfrak B$-assignment} by $ H(\nu)(z):= H(\nu(z))$. We say that $ H$ {\em respects} (i) a term $t$ iff for each $\mathfrak A$-assignment $\nu$, $ H(\nu(t))= H(\nu)(t)$; or, in other words, iff for each $\mathbf e\in A^{rk_L(t)}$, $ H(t^\mathfrak A(\mathbf e))=t^\mathfrak B( H(\mathbf e))$, where $ H(\mathbf e)$ means component-wise application; (ii) a formula $\varphi$ iff for each $\mathfrak A$-assignment $\nu$, $\mathfrak A\models\varphi[\nu]$ iff $\mathfrak B\models\varphi[ H(\nu)]$. The following result will be useful in \prettyref{sec:Isomorphisms} for reproving our First Isomorphism \prettyref{thm:FIT} in \prettyref{sec:Isomorphism_Theorems}; its proof can be found in \cite[Lemma 2.3.6]{Hinman05}.

\begin{lemma}\label{lem:respects} For any $\mathfrak A$ and $\mathfrak B$ and any $ H:\mathfrak A\to\mathfrak B$, $ H$ respects all $L$-terms and formulas.
\end{lemma}

\section{Analogical Proportions}\label{sec:Analogical_Proportions}

In the rest of the paper, we may assume some `known' source domain $\mathfrak A$ and some `unknown' target domain $\mathfrak B$, both $L$-algebras of same language $L$. We may think of the source domain $\mathfrak A$ as our background knowledge---a repertoire of elements we are familiar with---whereas $\mathfrak B$ stands for an unfamiliar domain which we want to explore via analogical transfer from $\mathfrak A$. For this we will first consider arrow equations of the form `$a$ transforms into $b$ as $c$ transforms into $x$'---in symbols, $a\to b\righttherefore c\to x$---where $a$ and $b$ are source elements of $\mathfrak A$, $c$ is a target element of $\mathfrak B$, and $x$ is a variable. Solutions to arrow equations will be elements of $\mathfrak B$ which are obtained from $c$ in $\mathfrak B$ as $b$ is obtained from $a$ in $\mathfrak A$ in a mathematically precise way (\prettyref{def:models}). Specifically, we want to functionally relate elements of an algebra via term rewrite rules as follows. Recall from \prettyref{exa:exa} that transforming 2 into 4 in the algebra $(\mathbb N,\cdot)$ means transforming $s(2)$ into $t(2)$,\footnote{To be more precise, we transform $s^{(\mathbb N,\cdot)}(2)$ into $t^{(\mathbb N,\cdot)}(2)$.} where $s(z):=z$ and $t(z):=z^2$ are terms. We can state this transformation more pictorially as the term rewrite rule $s\to t$ or $z\to z^2$. Now transforming the word $ab$ `in the same way' means to transform $s(ab)$ into $t(ab)$, which again is an instance of $s\to t$. Let us make this notation official.

\begin{notation}\label{not:s_to_t} We will always write $s(\mathbf z)\to t(\mathbf z)$ or $s\to t$ instead of $(s,t)$, for any pair of $L$-terms $s$ and $t$ containing variables among $\mathbf z$ such that every variable in $t$ occurs in $s$. We call such expressions {\em $L$-rewrite rules} or {\em $L$-justifications} where we often omit the reference to $L$. We denote the set of all $L$-justifications with variables among $\mathbf z$ by $J(L,\mathbf z)$. We make the convention that $\to$ binds weaker than every other algebraic operation.
\end{notation}

\begin{definition}\label{def:Jus} Define the {\em set of justifications} of two elements $a,b\in A$ in $\mathfrak A$ by
\begin{align*} 
    Jus_\mathfrak A(a,b):=\left\{s\to t\in J(L,\mathbf z) \;\middle|\; a=s^\mathfrak A(\mathbf e)\text{ and }b=t^\mathfrak A(\mathbf e),\text{ for some }\mathbf e\in A^{|\mathbf z|}\right\}.
\end{align*}
\end{definition}

For instance, in the example above, $Jus_{(\mathbb N,\cdot)}(2,4)$ and $Jus_{(\{a,b\}^\ast,\cdot)}(ab,abab)$ both contain the justification $z\to z^2$, witnessed by $\mathbf e_1=2\in\mathbb N$ and $\mathbf e_2=ab\in\{a,b\}^\ast$.

Once we have a definition of $a\to b\righttherefore c\to d$, we can define $a:b\righttherefore c:d$ and, finally, $a:b::c:d$ via appropriate symmetries as follows.

\begin{definition}\label{def:models} We define the {\em analogical proportion relation} in three steps:
\begin{enumerate}
\item An {\em arrow equation} in $(\mathfrak{A,B})$ is an expression of the form `$a$ transforms into $b$ in $\mathfrak A$ as $c$ transforms into $x$ in $\mathfrak B$'---in symbols,
\begin{align}\label{equ:a->b>c->x} 
    a\to b\righttherefore c\to x,
\end{align} where $a$ and $b$ are source elements from $\mathfrak A$, $c$ is a target element from $\mathfrak B$, and $x$ is a variable. Given a target element $d\in B$, define the {\em set of justifications} of an arrow proportion $a\to b\righttherefore c\to d$ in $(\mathfrak{A,B})$ by
\begin{align*} 
    Jus_{(\mathfrak{A,B})}(a\to b\righttherefore c\to d):=Jus_\mathfrak A(a,b)\cap Jus_\mathfrak B(c,d).
\end{align*} A justification $s\to t$ is {\em trivial} in $(\mathfrak{A,B})$ iff it justifies every arrow proportion in $(\mathfrak{A,B})$, and we say that $J$ is a {\em trivial set of justifications} in $(\mathfrak{A,B})$ iff every justification in $J$ is trivial.\footnote{See Examples \ref{exa:tau_set} and \ref{exa:tau_numeric}.} Now we call $d$ a {\em solution} to \prettyref{equ:a->b>c->x} in $(\mathfrak{A,B})$ iff either $Jus_\mathfrak A(a,b)\cup Jus_\mathfrak B(c,d)$ consists only of trivial justifications, in which case there is neither a non-trivial transformation of $a$ into $b$ in $\mathfrak A$ nor of $c$ into $d$ in $\mathfrak B$; or $Jus_{(\mathfrak{A,B})}(a\to b\righttherefore c\to d)$ is maximal with respect to subset inclusion among the sets $Jus_{(\mathfrak{A,B})}(a\to b\righttherefore c\to d')$, $d'\in B$, containing at least one non-trivial justification, that is, for any element $d'\in \mathfrak B$,
\begin{align*} 
    \emptyset\subsetneq Jus_{(\mathfrak{A,B})}(a\to b\righttherefore c\to d)&\subseteq Jus_{(\mathfrak{A,B})}(a\to b\righttherefore c\to d')
\end{align*} implies
\begin{align*} 
    \emptyset\subsetneq Jus_{(\mathfrak{A,B})}(a\to b\righttherefore c\to d')\subseteq Jus_{(\mathfrak{A,B})}(a\to b\righttherefore c\to d).
\end{align*} In this case, we say that $a,b,c,d$ are in {\em arrow proportion} in $(\mathfrak{A,B})$ written
\begin{align*} 
    (\mathfrak{A,B})\models a\to b\righttherefore c\to d.
\end{align*} We denote the set of all solutions to \prettyref{equ:a->b>c->x} in $(\mathfrak{A,B})$ by
\begin{align*} 
    Sol_{(\mathfrak{A,B})}(a\to b\righttherefore c\to x).
\end{align*} We say that $a\to b\righttherefore c\to d$ is a {\em trivial arrow proportion} in $(\mathfrak{A,B})$ iff $(\mathfrak{A,B})\models a\to b\righttherefore c\to d$ and $Jus_{(\mathfrak{A,B})}(a\to b\righttherefore c\to d)$ consists only of trivial justifications.

\item A {\em directed analogical equation} in $(\mathfrak{A,B})$ is an expression of the form
\begin{align}\label{equ:a<->b>c<->x} 
    a:b\righttherefore c:x,
\end{align} where $a$ and $b$ are again source elements from $\mathfrak A$, $c$ is a target element from $\mathfrak B$, and $x$ is a variable. We call $d$ a {\em solution} to \prettyref{equ:a<->b>c<->x} in $(\mathfrak{A,B})$ iff
\begin{align*} 
    (\mathfrak{A,B})\models a\to b\righttherefore c\to d \quad\text{and}\quad (\mathfrak{A,B})\models b\to a\righttherefore d\to c.
\end{align*} In this case, we say that $a,b,c,d$ are in {\em directed analogical proportion} in $(\mathfrak{A,B})$ written
\begin{align*} 
    (\mathfrak{A,B})\models a:b\righttherefore c:d.
\end{align*} We denote the set of all solutions to \prettyref{equ:a<->b>c<->x} in $(\mathfrak{A,B})$ by
\begin{align*} 
    Sol_{(\mathfrak{A,B})}(a:b\righttherefore c:x).
\end{align*} We say that $a:b\righttherefore c:d$ is a {\em trivial directed proportion} in $(\mathfrak{A,B})$ iff $(\mathfrak{A,B})\models a:b\righttherefore c:d$, and $Jus_{(\mathfrak{A,B})}(a\to b\righttherefore c\to d)$ and $Jus_{(\mathfrak{A,B})}(b\to a\righttherefore d\to c)$ consist only of trivial justifications.

\item An {\em analogical equation} in $(\mathfrak{A,B})$ is an expression of the form `$a$ is to $b$ in $\mathfrak A$ what $c$ is to $x$ in $\mathfrak B$'---in symbols,
\begin{align}\label{equ:a<->b<>c<->x} 
    a:b::c:x,
\end{align} where $a$ and $b$ are again source elements from $\mathfrak A$, $c$ is a target element from $\mathfrak B$, and $x$ is a variable. We call $d$ a {\em solution} to \prettyref{equ:a<->b<>c<->x} in $(\mathfrak{A,B})$ iff
\begin{align*} 
    (\mathfrak{A,B})\models a:b\righttherefore c:d \quad\text{and}\quad (\mathfrak{B,A})\models c:d\righttherefore a:b.
\end{align*} In this case, we say that $a,b,c,d$ are in {\em analogical proportion} in $(\mathfrak{A,B})$ written
\begin{align*} 
    (\mathfrak{A,B})\models a:b::c:d.
\end{align*} We denote the set of all solutions to \prettyref{equ:a<->b<>c<->x} in $(\mathfrak{A,B})$ by
\begin{align*} 
    Sol_{(\mathfrak{A,B})}(a:b::c:x).
\end{align*} We say that $a:b::c:d$ is a {\em trivial analogical proportion} in $(\mathfrak{A,B})$ iff $(\mathfrak{A,B})\models a:b::c:d$ and $Jus_{(\mathfrak{A,B})}(a\to b\righttherefore c\to d)=Jus_{(\mathfrak{B,A})}(c\to d\righttherefore a\to b)$ and $Jus_{(\mathfrak{A,B})}(b\to a\righttherefore d\to c)=Jus_{(\mathfrak{B,A})}(d\to c\righttherefore b\to a)$ consist only of trivial justifications.
\end{enumerate}
\end{definition}

\begin{notation}\label{not:AA} We will always write $\mathfrak A$ instead of $(\mathfrak{A,A})$.
\end{notation}

\begin{convention}\label{con:trivial} In what follows, we will usually omit trivial justifications from notation. So, for example, we will write $Jus_{(\mathfrak{A,B})}(a\to b\righttherefore c\to d)=\emptyset$ instead of $Jus_{(\mathfrak{A,B})}(a\to b\righttherefore c\to d)=\{\text{trivial justifications}\}$ in case $a\to b\righttherefore c\to d$ has only trivial justifications in $(\mathfrak{A,B})$, et cetera. The empty set is always a trivial set of justifications. % and so is the set $V_\neq:=\{z_i\to z_j\mid z_i,z_j\in V,i\neq j\}$. (This clearly holds because $z_i^\mathfrak A(\mathbf e)=a$ and $z_j^\mathfrak A(\mathbf e)=b$ for any tuple $\mathbf e$ whose $i$-th and $j$-th components are $a$ and $b$, respectively). 
Every justification is meant to be non-trivial unless stated otherwise. Moreover, we will always write sets of justifications modulo renaming of variables, that is, we will write $\{z\to z\}$ instead of $\{z\to z\mid z\in V\}$ et cetera.
\end{convention}

% \begin{convention} In the sequel, we will write `is maximal' instead of `is maximal with respect to subset inclusion among the sets' when we talk about sets of justifications.
% \end{convention}

Roughly, an element $d$ in the target domain is a solution to an analogical equation of the form $a:b::c:x$ iff there is no other target element $d'$ whose relation to $c$ is more similar to the relation between $a$ and $b$ in the source domain (see \prettyref{rem:in_the_same_way}), expressed in terms of maximal sets of algebraic justifications satisfying appropriate symmetries. Analogical proportions formalize the idea that analogy-making is the task of transforming different objects from the source to the target domain in `the same way';\footnote{This is why `copycat' is the name of a prominent model of analogy-making  \cite{Hofstadter95a}. See \cite{Correa12}.} or as P\'olya \cite{Polya54} puts it:
\begin{quote} Two systems are analogous if they agree in clearly definable relations of their respective parts.
\end{quote} In our formulation, the `parts' are the elements $a,b,c,d$ and the `definable relations' are represented by term rewrite rules relating $a,b$ and $c,d$ in `the same way' via maximal sets of justifications.

\begin{example} First consider the algebra $\mathfrak A_1:=(\{a,b,c,d\})$, consisting of four distinct elements with no functions and no constants (see the forthcoming \prettyref{thm:(A)}):
\begin{center}
\begin{tikzpicture} 
    \node (a)               {$a$};
    \node (b) [above=of a]  {$b$};
    \node (c) [right=of a]  {$c$};
    \node (d) [above=of c]  {$d$};
\end{tikzpicture}
\end{center} Since $Jus_{\mathfrak A_1}(a',b')\cup Jus_{\mathfrak A_1}(c',d')$ contains only trivial justifications for {\em any distinct} elements $a',b',c',d'\in A'$, we have, for example:
\begin{align*} 
    \mathfrak A_1\models a:b::c:d \quad\text{and}\quad \mathfrak A_1\models a:c::b:d.
\end{align*} On the other hand, since (cf. \prettyref{con:trivial})
\begin{align*} 
    Jus_{\mathfrak A_1}(a,a)\cup Jus_{\mathfrak A_1}(a,d)=\{z\to z\}\neq\emptyset
\end{align*} and
\begin{align*} 
    \emptyset=Jus_{\mathfrak A_1}(a\to a\righttherefore a\to d)\subsetneq Jus_{\mathfrak A_1}(a\to a\righttherefore a\to a)=\{z\to z\},
\end{align*} we have
\begin{align*} 
    \mathfrak A_1\not\models a\to a\righttherefore a\to d,
\end{align*} which implies
\begin{align*} 
    \mathfrak A_1\not\models a:a::a:d.
\end{align*} This is an instance of the forthcoming determinism axiom \prettyref{equ:determinism} proved to hold within our framework in \prettyref{thm:axioms}.

Now consider the slightly different algebra $\mathfrak A_2:=(\{a,b,c,d\},f)$, where $f$ is the unary function defined by %(we omit the loops $f(o):=o$, $o\in\{b,c,d\}$, in the figure):
\begin{center}
\begin{tikzpicture} 
\node (a)               {$a$};
\node (b) [above=of a,yshift=1cm]  {$b$};
\node (c) [right=of a,xshift=1cm]  {$c$};
\node (d) [right=of b,xshift=1cm]  {$d$};

% siehe tikz 3.6.1a p. 165/166
\draw[->] (a) to [edge label'={$f$}] (b);
\draw[->] (b) to [edge label'={$f$}] [loop] (b);
\draw[->] (c) to [edge label'={$f$}] [loop] (c);
\draw[->] (d) to [edge label'={$f$}] [loop] (d);
\end{tikzpicture}
\end{center} We expect $a:b::c:d$ to fail in $\mathfrak A_2$ as it has no non-trivial justification. In fact, 
\begin{align*} Jus_{\mathfrak A_2}(a,b)\cup Jus_{\mathfrak A_2}(c,d)=\left\{z\to f^\ell(z) \;\middle|\; \ell\geq 1\right\}\neq\emptyset 
\end{align*} and
\begin{align*} 
    Jus_{\mathfrak A_2}(a\to b\righttherefore c\to d)=\emptyset
\end{align*} show
\begin{align*} \mathfrak A_2\not\models a:b::c:d.
\end{align*}

In the algebra $\mathfrak A_3$ given by %(we omit the loops $f(b):=g(b):=b$ and $f(c):=g(c):=c$ in the figure)
\begin{center}
\begin{tikzpicture} 
    \node (a)               {$a$};
    \node (b) [above=of a]  {$b$};
    \node (c) [right=of a]  {$c$};
    % siehe tikz 3.6.1a p. 165/166
    \draw[->] (a) to [edge label'={$f$}] (b);
    \draw[->] (a) to [edge label'={$g$}] (c);
    \draw[->] (b) to [edge label'={$f,g$}] [loop] (b);
    \draw[->] (c) to [edge label'={$f,g$}] [loop] (c);
\end{tikzpicture}
\end{center} we have
\begin{align*} 
    \mathfrak A_3\not\models a:b::a:c.
\end{align*} The intuitive reason is that $a:b::a:b$ is a more plausible proportion than $a:b::a:c$, which is reflected in the computation
\begin{align*} 
    \emptyset=Jus_{\mathfrak A_3}(a\to b\righttherefore a\to c)\subsetneq Jus_{\mathfrak A_3}(a\to b\righttherefore a\to b)=\{z\to f(z),\ldots\}.
\end{align*}
\end{example}

\begin{remark}\label{rem:in_the_same_way} It is important to emphasize that we interpret the expression `in the same way' as `maximally similar' instead of `identical' as the latter interpretation is too strict to be useful. To see why, consider, for instance, the algebra $(\mathbb N,+,\mathbb N)$ of natural numbers with addition where each number is a distinguished element, and consider the arrow equation in $(\mathbb N,+,\mathbb N)$ given by
\begin{align*} 
    2\to 4\righttherefore 3\to x.
\end{align*} We compute
\begin{align*} 
    Jus_{(\mathbb N,+,\mathbb N)}(2,4)=\{z\to z+z,z\to z+2,\ldots\}
\end{align*} and
\begin{align*} 
    Jus_{(\mathbb N,+,\mathbb N)}(3,5)=\{z\to z+2,\ldots\} \quad\text{and}\quad Jus_{(\mathbb N,+,\mathbb N)}(3,6)=\{z\to z+z,\ldots\},
\end{align*} where we have
\begin{align*} 
    z\to z+2\not\in Jus_{(\mathbb N,+,\mathbb N)}(3,6) \quad\text{and}\quad z\to z+z\not\in Jus_{(\mathbb N,+,\mathbb N)}(3,5).
\end{align*} Neither $Jus_{(\mathbb N,+,\mathbb N)}(3,5)$ nor $Jus_{(\mathbb N,+,\mathbb N)}(3,6)$ is thus {\em identical} to $Jus_{(\mathbb N,+,\mathbb N)}(2,4)$, which means that under a strict interpretation, neither $5$ nor $6$ would be a solution to the above equation. In our interpretation, on the other hand, both $5$ and $6$ are justifiable solutions according to \prettyref{def:models} as expected (cf. \prettyref{exa:2_4_3_z}).

% Now consider the following algebra $\mathfrak A$:

% Of course, the relationship between $a$ and $b$, and between $c$ and $d$ is {\em not identical} in the sense that the same kind of arrows connect $a,b$ and $c,d$; instead, it is {\em maximally similar} in the sense that there is no element $e\in A$ such that the relationship between $a$ and $b$, and between $c$ and $e$ is strictly more similar. Consider the slightly different algebra $\mathfrak B$:
% \begin{center}
% \begin{tikzpicture} 
% \node (a)               {$a$};
% \node (b) [above=of a]  {$b$};
% \node (c) [right=of a]  {$c$};
% \node (d) [above=of c]  {$d$};

% % siehe tikz 3.6.1a p. 165/166
% \draw[->] (a) to [edge label'={$f$}] (b); 
% \draw[->] (c) to [edge label'={$f$}] (b); 
% \end{tikzpicture}
% \end{center} The relation
% \begin{align*} \emptyset=Jus_\mathfrak B(a\to b\righttherefore c\to d)\subsetneq Jus_\mathfrak B(a\to b::c\to b)=\{z\to f(z)\}
% \end{align*} shows
% \begin{align*} \mathfrak B\not\models a:b::c:d. 
% \end{align*} The intuitive reason is that now the relation between $c$ and $b$ is more similar to the relation between $a$ and $b$ than the relation between $c$ and $d$; in fact, we have
% \begin{align*} \mathfrak B\models a:b::c:b. 
% \end{align*}
\end{remark}

\begin{remark} \prettyref{sec:Logical_} provides an alternative {\em logical} interpretation of analogical proportions in terms of model-theoretic types.
\end{remark}

\begin{definition} We call an $L$-term $s(\mathbf z)$ an {\em $\mathfrak A$-generalization} of an element $a$ in $\mathfrak A$ iff $a=s^\mathfrak A(\mathbf e)$, for some $\mathbf e\in A^{|\mathbf z|}$, and we denote the set of all $\mathfrak A$-generalizations of $a$ in $\mathfrak A$ by $gen_\mathfrak A(a)$. Moreover, we define for any elements $a\in A$ and $c\in B$:
\begin{align*} 
    gen_{(\mathfrak{A,B})}(a,c):=gen_\mathfrak A(a)\cap gen_\mathfrak B(c).
\end{align*}
\end{definition}

\begin{convention}\label{con:abcd} Notice that any justification $s(\mathbf z)\to t(\mathbf z)$ of $a\to b\righttherefore c\to d$ in $(\mathfrak{A,B})$ must satisfy
\begin{align}\label{equ:st} 
    a=s^\mathfrak A(\mathbf e_1) \quad\text{and}\quad b=t^\mathfrak A(\mathbf e_1) \quad\text{and}\quad c=s^\mathfrak B(\mathbf e_2) \quad\text{and}\quad d=t^\mathfrak B(\mathbf e_2),
\end{align} for some $\mathbf e_1\in A^{|\mathbf z|}$ and $\mathbf e_2\in B^{|\mathbf z|}$. In particular, this means
\begin{align*} 
    s\in gen_{(\mathfrak{A,B})}(a,c) \quad\text{and}\quad t\in gen_{(\mathfrak{A,B})}(b,d).
\end{align*} We sometimes write $s\xrightarrow{\mathbf e_1\to\mathbf e_2} t$ to make the {\em witnesses} $\mathbf e_1,\mathbf e_2$ and their transition explicit. This situation can be depicted as follows:
\begin{center}
\begin{tikzpicture}[node distance=1cm and 0.5cm]
% siehe tikz 3.6.1a §3.8
\node (a)               {$a$};
\node (d1) [right=of a] {$\to $};
\node (b) [right=of d1] {$b$};
\node (d2) [right=of b] {$\righttherefore$};
\node (c) [right=of d2] {$c$};
\node (d3) [right=of c] {$\to $};
\node (d) [right=of d3] {$d.$};
\node (s) [below=of b] {$s(\mathbf z)$};
\node (t) [above=of c] {$t(\mathbf z)$};

% siehe tikz 3.6.1a p. 165/166
\draw (a) to [edge label'={$\mathbf z/\mathbf e_1$}] (s); 
\draw (c) to [edge label={$\mathbf z/\mathbf e_2$}] (s);
\draw (b) to [edge label={$\mathbf z/\mathbf e_1$}] (t);
\draw (d) to [edge label'={$\mathbf z/\mathbf e_2$}] (t);
\end{tikzpicture}
\end{center} 

% Another way to think about analogical proportions is in terms of commutative diagrams:
% \begin{center}
% \begin{tikzcd}[row sep=2cm, column sep=2cm]
% a\ar[d,"s\rightarrow t"]\ar[r,"\mathbf e_1\to\mathbf e_2"] & c\ar[d,"s\rightarrow t"]\\
% b\ar[r,"\mathbf e_1\to\mathbf e_2"] & d
% \end{tikzcd}
% \end{center}
\end{convention}

The following characterization of solutions to analogical equations in terms of solutions to directed analogical equations is an immediate consequence of \prettyref{def:models}.

\begin{fact}\label{fact:Sol} For any $a,b\in A$ and $c,d\in B$, $d$ is a solution to $a:b\righttherefore c:x$ in $(\mathfrak{A,B})$ iff $d$ is a solution to $a\to b\righttherefore c\to x$ and $c$ is a solution to $b\to a\righttherefore d\to x$ in $(\mathfrak{A,B})$. This is equivalent to
\begin{align*} 
    Sol_{(\mathfrak{A,B})}(a:b\righttherefore c:x)=\left\{d\in Sol_{(\mathfrak{A,B})}(a\to b\righttherefore c\to x) \;\middle|\; c\in Sol_{(\mathfrak{A,B})}(b\to a\righttherefore d\to x)\right\}.
\end{align*} Similarly, $d$ is a solution to $a:b::c:x$ in $(\mathfrak{A,B})$ iff $d$ is a solution to $a:b\righttherefore c:x$ and $b$ is a solution to $c:d\righttherefore a:x$ in $(\mathfrak{A,B})$. This is equivalent to
\begin{align*} 
    Sol_{(\mathfrak{A,B})}(a:b::c:x)=\left\{d\in Sol_{(\mathfrak{A,B})}(a:b\righttherefore c:x) \;\middle|\; b\in Sol_{(\mathfrak{A,B})}(c:d\righttherefore a:x)\right\}.
\end{align*} We can visualize the derivation steps for proving $a:b::c:d$ as follows:
\begin{prooftree}
    \AxiomC{$a\to b\righttherefore c\to d$}
    \AxiomC{$b\to a\righttherefore d\to c$}
    \BinaryInfC{$a:b\righttherefore c:d$}
    \AxiomC{$c\to d\righttherefore a\to b$}
    \AxiomC{$d\to c\righttherefore b\to a$}
    \BinaryInfC{$c:d\righttherefore a:b$}
    \BinaryInfC{$a:b::c:d$}
\end{prooftree} This means that in order to prove $a:b::c:d$, we need to check the four relations in the first line.
\end{fact}

To guide the AI-practitioner, we shall now rewrite the above framework in a more algorithmic style.

\begin{pseudocode}\label{pseudo:Sol} First of all, one has to specify the $L$-algebras $\mathfrak A$ and $\mathfrak B$. Computing the solutions $S$ to an analogical equation $a:b::c:x$ in $(\mathfrak{A,B})$ consists of the following steps:
\begin{enumerate}
\item Compute $S_0:=Sol_{(\mathfrak{A,B})}(a\to b\righttherefore c\to x)$:
    \begin{enumerate}
    \item For each $d\in B$, if $Jus_\mathfrak A(a,b)\cup Jus_\mathfrak B(c,d)$ consists only of trivial justifications, then add $d$ to $S_0$.

    \item For each $L$-term $s(\mathbf z)\in gen_{(\mathfrak{A,B})}(a,c)$ and all witnesses $\mathbf e_1\in A^{|\mathbf z|},\mathbf e_2\in B^{|\mathbf z|}$ satisfying (cf. \prettyref{con:abcd})
    \begin{align*} 
        a=s^\mathfrak A(\mathbf e_1) \quad\text{and}\quad c=s^\mathfrak B(\mathbf e_2),
    \end{align*} and for each $L$-term $t(\mathbf z)\in gen_{\mathfrak{A}}(b)$ containing only variables occurring in $s(\mathbf z)$ and satisfying
    \begin{align*} 
        % b=t^\mathfrak A(\mathbf e_1) \quad\text{and}\quad d=t^\mathfrak B(\mathbf e_2),
        b=t^\mathfrak A(\mathbf e_1),
    \end{align*} add $s\to t$ to $Jus_{(\mathfrak{A,B})}(a\to b\righttherefore c\to t^\mathfrak B(\mathbf e_2))$.

    \item Identify those non-empty sets $Jus_{(\mathfrak{A,B})}(a\to b\righttherefore c\to d)$ which are subset maximal with respect to $d$ and add those $d$'s to $S_0$.
    \end{enumerate}
\item For each $d\in S_0$, check the following relations with the above procedure (cf. \prettyref{fact:Sol}):
    \begin{enumerate}
    \item $c\in Sol_{(\mathfrak{A,B})}(b\to a\righttherefore d\to x)$?
    \item $b\in Sol_{(\mathfrak{B,A})}(c\to d\righttherefore a\to x)$?
    \item $a\in Sol_{(\mathfrak{B,A})}(d\to c\righttherefore b\to x)$?
    \end{enumerate}
    Add those $d\in S_0$ to $S$ which pass all three tests. The set $S$ now contains all solutions to $a:b::c:x$ in $(\mathfrak{A,B})$.
\end{enumerate}
\end{pseudocode}

We now want to demonstrate analogical proportions with two illustrative examples over the natural numbers.

\begin{example}\label{exa:2_4_3_z} Consider the analogical equation
\begin{align}\label{equ:2_4_3_z} 
    2:4::3:x.
\end{align} According to \prettyref{def:models}, solving \prettyref{equ:2_4_3_z} requires three steps:
\begin{enumerate}

\item First, we need to solve the arrow equation
\begin{align*} 
    2\to 4\righttherefore 3\to x.
\end{align*} We can transform 2 into 4 in at least three different ways justified by $z\to 2+z$, $z\to 2z$, and $z\to z^2$. Here it is important to clarify the algebras involved. The first two justifications require addition, whereas the last justification requires multiplication. Moreover, the first justification additionally presupposes that 2 is a distinguished element---this is not the case for the last two justifications as $2z$ and $z^2$ are abbreviations for $z+z$ and $z\cdot z$, respectively, not involving 2. Analogously, transforming 3 `in the same way' as 2 can therefore mean at least three things: $3\to 2+3=5$, $3\to 3+3=6$, and $3\to 3^2=9$. More precisely, $z\to 2+z$ is a justification of $2\to 4\righttherefore 3\to d$ in $(\mathbb N,+,2)$ iff $d=5$ which shows that $Jus_{(\mathbb N,+,2)}(2\to 4\righttherefore 3\to 5)$ is a subset maximal set of justifications with respect to the last argument. This formally proves
\begin{align}\label{equ:N_+_2_models_2-4_3-5} 
    (\mathbb N,+,2)\models 2\to 4\righttherefore 3\to 5.
\end{align} The other two cases being analogous, we can further derive
\begin{align*} 
    (\mathbb N,+)\models 2\to 4\righttherefore 3\to 6 \quad\text{and}\quad (\mathbb N,\cdot)\models 2\to 4\righttherefore 3\to 9.
\end{align*} 

\item We now check whether these solutions are solutions to 
\begin{align*} 
    2:4\righttherefore 3:x
\end{align*} as follows. Let us start with \prettyref{equ:N_+_2_models_2-4_3-5}. We need to check that 3 is a solution to $4\to 2\righttherefore 5\to x$ in $(\mathbb N,+,2)$. The rewrite rule $2+z\to z$ is a justification of $4\to 2\righttherefore 5\to d$ in $(\mathbb N,+,2)$ iff $d=3$, which shows that 3 is indeed a solution. Hence, we have
\begin{align*} 
    (\mathbb N,+,2)\models 2:4\righttherefore 3:5.
\end{align*} The other cases being analogous, we can derive
\begin{align*} 
    (\mathbb N,+)\models 2:4\righttherefore 3:6 \quad\text{and}\quad (\mathbb N,\cdot)\models 2:4\righttherefore 3:9.
\end{align*}

\item Finally, we need to check
\begin{align*} 
    (\mathbb N,+,2)\models 3:5\righttherefore 2:4 \quad\text{and}\quad (\mathbb N,+)\models 3:6\righttherefore 2:4 \quad\text{and}\quad (\mathbb N,\cdot)\models 3:9\righttherefore 2:4.
\end{align*} This can be done by similar computations as above. Hence, we have
\begin{align*} 
    (\mathbb N,+,2)\models 2:4::3:5 \quad\text{and}\quad (\mathbb N,+)\models 2:4::3:6 \quad\text{and}\quad (\mathbb N,\cdot)\models 2:4::3:9.
\end{align*}
\end{enumerate}
\end{example}

\begin{example}\label{exa:20_4_30_x} The analogical equation $$20:4::30:x$$ has the solutions $x_1=6$ and $x_2=9$ in the multiplicative algebra $\mathfrak M:=(\mathbb N_2,\cdot,\mathbb N_2)$ as we show in \prettyref{exa:20_4_30_x_Appendix} (Appendix). The first solution, $x_1=6$, has an intuitive explanation as we obtain $4$ from $20$ by dividing by $5$---analogously, dividing $30$ by $5$ yields $6$. In the expanded algebra of rationals, this can be written as
\begin{align*} 
    20:\frac{20}5::30:\frac{30}5.
\end{align*} The second solution, $x_2=9$, is more subtle and can be roughly justified by the following reasoning (for the complete proof see \prettyref{exa:20_4_30_x_Appendix}):
\begin{align*} 
    (10\cdot 2):2^2::(10\cdot 3):3^2.
\end{align*} This last solution is less obvious than the first one and it therefore appears more interesting and more `creative'. Finally, we shall emphasize that a similar reasoning fails:
\begin{align*} 
    (10\cdot 2):2^2\not{::}(15\cdot 2):2^2.
\end{align*} The reason is that the arrow proportion $4\to 20\righttherefore 4\to 30$ has no justifications in $\mathfrak M$ (see \prettyref{ite:4-20-4-30} in \prettyref{exa:20_4_30_x_Appendix} for details), which indicates that computing (all) solutions to an analogical equation is more complicated than \prettyref{def:models} might suggest.
\end{example}

\section{Properties of Analogical Proportions}\label{sec:Properties_}

This section studies some basic mathematical properties of analogical equations and proportions.

\subsection{Characteristic Sets of Justifications}

Computing all justifications of an arrow proportion is difficult in general (see \prettyref{exa:20_4_30_x}), which fortunately can be omitted in many cases.

% \begin{definition} Given a justification $s(\mathbf z)\to t(\mathbf z)$, we define its {\em dual} by $t(\mathbf z)\to s(\mathbf z)$ if $t$ contains all variables in $\mathbf z$, extended to sets of justifications rule-wise.
% \end{definition}

\begin{definition}\label{def:J} We call a set $J$ of justifications a {\em characteristic set of justifications} of $a\to b\righttherefore c\to d$ in $(\mathfrak{A,B})$ iff $J$ is a sufficient set of justifications of $a\to b\righttherefore c\to d$ in $(\mathfrak{A,B})$, that is, iff
\begin{enumerate}
\item $J\subseteq Jus_{(\mathfrak{A,B})}(a\to b\righttherefore c\to d)$, and
\item $J\subseteq Jus_{(\mathfrak{A,B})}(a\to b\righttherefore c\to d')$ implies $d'=d$, for each $d'\in B$.
\end{enumerate} In case $J=\{s\to t\}$ is a singleton set satisfying both conditions, we call $s\to t$ a {\em characteristic justification} of $a\to b\righttherefore c\to d$ in $(\mathfrak{A,B})$. 
\end{definition}

\begin{notation}\label{not:J} In case $J$ is a characteristic set of justifications, we will occasionally write
\begin{align*} 
    (\mathfrak{A,B})\models_J a\to b\righttherefore c\to d
\end{align*} to make the set of characteristic justifications $J$ explicit.
\end{notation}

\begin{example} In \prettyref{exa:20_4_30_x_Appendix} (Appendix) we argue that $\{z_1z_2\to z_1^2,\;z_1z_2\to 2z_1\}$ is a characteristic set of justifications of $20\to 4\righttherefore 30\to 4$ in $\mathfrak M$.
\end{example}

% \begin{remark}\label{rem:trivial} Notice that the empty set is always a trivial set of justifications. In some cases, given an arrow equation $a\to b\righttherefore c\to x$, the set $Jus_{(\mathfrak{A,B})}(a\to b\righttherefore c\to d)$ of justifications of $a\to b\righttherefore c\to d$ in $(\mathfrak{A,B})$ is empty, for {\em any} $d\in B$, in which case we trivially have $(\mathfrak{A,B})\models a\to b\righttherefore c\to d$. This is, for example, the case in any structure $(A)$, consisting only of a universe $A$ without any functions on $A$---given {\em distinct} elements $a,b,c,d\in A$, we always have $Jus_{(A)}(a\to b\righttherefore c\to d)=\emptyset$ and hence $a\to b\righttherefore c\to d$ is a trivial directed proportion in $(A)$.
% \end{remark}

The following lemma provides a sufficient condition of characteristic justifications in terms of mild injectivity.

\begin{lemma}[Uniqueness Lemma]\label{lem:UL} Let $s(\mathbf z)\to t(\mathbf z)$ be a non-trivial justification of $a\to b\righttherefore c\to d$ in $(\mathfrak{A,B})$.% such that $s$ and $t$ contain the same variables.
\begin{enumerate}
\item If there is a unique $\mathbf e\in B^{|\mathbf z|}$ such that $c=s^\mathfrak B(\mathbf e)$, then $s\to t$ is a characteristic justification of $a\to b\righttherefore c\to d$ in $(\mathfrak{A,B})$.

\item Consequently, if there are unique $\mathbf e_1,\mathbf e_2\in B^{|\mathbf z|}$ satisfying $c=s^\mathfrak B(\mathbf e_1)$ and $d=t^\mathfrak B(\mathbf e_2)$, and every variable in $s$ occurs in $t$, then $(\mathfrak{A,B})\models a:b\righttherefore c:d$.

\item Moreover, if there are unique $\mathbf e_1,\mathbf e_2\in A^{|\mathbf z|}$ and unique $\mathbf e_3,\mathbf e_4\in B^{|\mathbf z|}$ such that
\begin{align*} 
    a=s^\mathfrak A(\mathbf e_1) \quad\text{and}\quad b=t^\mathfrak A(\mathbf e_2) \quad\text{and}\quad c=s^\mathfrak B(\mathbf e_3) \quad\text{and}\quad d=t^\mathfrak B(\mathbf e_4),
\end{align*} and every variable in $s$ occurs in $t$, then $(\mathfrak{A,B})\models a:b::c:d$.

\item Hence, in case $s$ and $t$ are injective in $\mathfrak A$ and $\mathfrak B$ and contain the same variables, then $(\mathfrak{A,B})\models a:b::c:d$.
\end{enumerate}
\end{lemma}
\begin{proof} We prove each item separately:
\begin{enumerate}
\item\label{ite:inj_1} Since $s(\mathbf z)\to t(\mathbf z)$ is a justification of $a\to b\righttherefore c\to d$ in $(\mathfrak{A,B})$ by assumption, there are sequences of elements $\mathbf e_1\in A^{|\mathbf z|}$ and $\mathbf e_2\in B^{|\mathbf z|}$ satisfying \prettyref{equ:st}, and $\mathbf e_2$ is uniquely determined by assumption. Consequently, given any element $d'\in B$, $s\to t$ is a justification of $a\to b\righttherefore c\to d'$ in $(\mathfrak{A,B})$ iff $d'=t^\mathfrak B(\mathbf e_2)=d$, which shows that $s\to t$ is indeed a characteristic justification.

\item\label{ite:inj_2} Since $s\to t$ is a justification of $a\to b\righttherefore c\to d$ in $(\mathfrak{A,B})$ and there are unique $\mathbf e_1,\mathbf e_2\in B^{|\mathbf z|}$ such that $c=s^\mathfrak B(\mathbf e_1)$ and $d=t^\mathfrak B(\mathbf e_2)$ by assumption, $s\to t$ and $t\to s$ (recall that $s$ and $t$ contain the same variables by assumption) are characteristic justifications  of $a\to b\righttherefore c\to d$ and $b\to a\righttherefore d\to c$ in $(\mathfrak{A,B})$, respectively, by the argument in \prettyref{ite:inj_1}.

\item\label{ite:inj_3} Analogous to \prettyref{ite:inj_2}.
\item Direct consequence of \prettyref{ite:inj_3}.
\end{enumerate}
\end{proof}

% \subsection{Trivial Proportion Theorem}

% beachte, dass unter neuer def von jus es sein kann, dass $t\to s$ keine jus ist, da $s$ var enthaelt die nicht in $t$ vorkommen
% Define the {\em dual} of the rewrite rule $s\to t$ by
% \begin{align*} (s\to t)^d:=t\to s.
% \end{align*} By \prettyref{def:Jus}, we have
% \begin{align}\label{equ:Jus^d} Jus_\mathfrak A(a,b)=Jus_\mathfrak A(b,a)^d\quad\text{for any $a,b\in A$}.
% \end{align} This yields the following implication.

% \begin{theorem}[Trivial Proportion Theorem]\label{thm:Trivial_Proportion_Theorem} For any $a,b\in A$ and $c,d\in B$,\footnote{Recall from \prettyref{con:trivial} that $Jus_\mathfrak A(a,b)=\emptyset$ actually means that $Jus_\mathfrak A(a,b)$ consists only of trivial justifications.}
% \begin{align*} Jus_\mathfrak A(a,b)\cup Jus_\mathfrak B(c,d)=\emptyset \quad\Rightarrow\quad (\mathfrak{A,B})\models a:b::c:d.
% \end{align*}
% \end{theorem}
% \begin{proof} From
% \begin{align*} 
%     Jus_\mathfrak A(a,b)\cup Jus_\mathfrak B(c,d)=\emptyset \quad\Leftrightarrow\quad ...
% \end{align*} we deduce
% \begin{align*} Jus_\mathfrak A(a,b)\cup Jus_\mathfrak B(c,d)=\emptyset \quad\Leftrightarrow\quad Jus_\mathfrak A(b,a)\cup Jus_\mathfrak B(d,c)=\emptyset,
% \end{align*} which immediately implies $\mathfrak A\models a:b::c:d$ via \prettyref{fact:Sol}.
% \end{proof}

\subsection{Functional Proportion Theorem}\label{sec:Functional_Proportion_Theorem}

The following reasoning pattern---which roughly says that {\em functional dependencies} are preserved across (different) domains---will often be used in the rest of the paper.

\begin{theorem}[Functional Proportion Theorem]\label{thm:FPT} Let $t(z)$ be an $L$-term.
\begin{enumerate}
\item Given $a\in A$ and $c\in B$, the rewrite rule $z\to t(z)$ characteristically justifies
\begin{align}\label{equ:AB_models_a->t(a)>c->t(c)} 
    (\mathfrak{A,B})\models a\to t^\mathfrak A(a)\righttherefore c\to t^\mathfrak B(c).
\end{align}
\item Given $a\in A$ and $c\in B$, if $e=c\in B$ is the unique element satisfying $t^\mathfrak B(e)=t^\mathfrak B(c)$, then $(\mathfrak{A,B})\models t^\mathfrak A(a)\to a\righttherefore t^\mathfrak B(c)\to c$, which together with \prettyref{equ:AB_models_a->t(a)>c->t(c)} implies
\begin{align*} 
    (\mathfrak{A,B})\models a:t^\mathfrak A(a)\righttherefore c:t^\mathfrak B(c).
\end{align*}
\item Consequently, if $e_1=a\in A$ and $e_2=c\in B$ are unique elements satisfying $t^\mathfrak A(e_1)=t^\mathfrak A(a)$ and $t^\mathfrak B(e_2)=t^\mathfrak B(c)$, then 
\begin{align*} 
    (\mathfrak{A,B})\models a:t^\mathfrak A(a)::c:t^\mathfrak B(c).
\end{align*} In this case, we call $t^\mathfrak B(c)$ a {\em functional solution} of $a:b::c:x$ in $(\mathfrak{A,B})$ {\em characteristically justified by $z\to t(z)$}.
\end{enumerate} Hence, if $t$ is injective in $\mathfrak A$ and $\mathfrak B$, then $(\mathfrak{A,B})\models a:t^\mathfrak A(a)::c:t^\mathfrak B(c)$.
\end{theorem}
\begin{proof} The rewrite rule $z\to t(z)$ is a characteristic justification of $a\to t^\mathfrak A(a)\righttherefore c\to t^\mathfrak B(c)$ in $(\mathfrak{A,B})$ by the Uniqueness \prettyref{lem:UL} as $z$ is injective in $\mathfrak B$. The other cases follow by analogous arguments from the Uniqueness \prettyref{lem:UL}.
\end{proof}

% \prettyref{thm:FPT} motivates the following definition.

% \begin{definition} We call $d\in B$ a {\em functional solution} to $a\to b\righttherefore c\to x$ in $(\mathfrak{A,B})$ iff there is an $L$-term $t(z)$ such that $z\to t(z)$ is a characteristic justification of $a\to b\righttherefore c\to d$ in $(\mathfrak{A,B})$. Moreover, we call $d$ a {\em functional solution} to $a:b::c:x$ in $(\mathfrak{A,B})$ iff $d$ is a solution to $a:b::c:x$ and a functional solution to $a\to b\righttherefore c\to x$ in $(\mathfrak{A,B})$.
% \end{definition}

Functional solutions are plausible since transforming $a$ into $t(a)$ and $c$ into $t(c)$ is a direct implementation of `transforming $a$ and $c$ in the same way', and it is therefore surprising that functional solutions can be nonetheless unexpected and therefore `creative' as will be demonstrated, for instance, in \prettyref{exa:0_0_2_40}.

\begin{remark}\label{rem:t} An interesting consequence of \prettyref{thm:FPT} is that in case $t$ is a ground term, we have the directed analogical proportion
\begin{align}\label{equ:abcb} 
    (\mathfrak{A,B})\models a\to t^\mathfrak A\righttherefore c\to t^\mathfrak B,\quad\text{for {\em all} $a\in A$ and $c\in B$},
\end{align} characteristically justified by \prettyref{thm:FPT} via $z\to t$. This can be intuitively interpreted as follows: every ground term is a `name' in our language for a specific element of the algebra, which means that it is in a sense a `known' element. As the framework is designed to compute `novel' or `unknown' elements in the target domain via analogy-making, \prettyref{equ:abcb} means that `known' target elements can always be computed.
\end{remark}

The following example shows how we can solve analogical equations across different domains within our framework.

\begin{example}\label{exa:2_4_ab_z} We want to formally solve the analogical equation of \prettyref{exa:exa} given by
\begin{align}\label{equ:2_4_ab_z} 
    2:4::ab:x.
\end{align} For this, we first need to specify the algebras involved. Let $L$ be the language consisting of a single binary function symbol $\cdot$, and let $(\mathbb N,\cdot)$ and $(\Sigma^+,\cdot)$, where $\Sigma:=\{a,b\}$, be $L$-algebras. This means we interpret $\cdot$ as multiplication of numbers in $\mathbb N$ and as concatenation of words in $\Sigma^+$. As a direct consequence of \prettyref{thm:FPT} with $t(z):=z\cdot z$, injective in $(\mathbb N,\cdot)$ and $(\Sigma^+,\cdot)$, we can formally derive the solution $abab$ to \prettyref{equ:2_4_ab_z}:
\begin{align*} 
    ((\mathbb N,\cdot),(\Sigma^\ast,\cdot))\models 2:4::ab:abab.
\end{align*}
\end{example}

% \begin{example} Let $L$ be the language consisting of a binary function symbol $\circ$ and a constant symbol $c$. Consider the $L$-algebras
% \begin{align*} \mathfrak A:=(\mathfrak P(\{a,b\}),\cup,\{b\}) \quad\text{and}\quad \mathfrak B:=(\mathbb N,+,1).
% \end{align*} That is, we interpret $\circ$ as union in $\mathfrak A$ and as addition in $\mathfrak B$, and we interpret the constant symbol $c$ as the set $\{b\}$ in $\mathfrak A$ and as the number 1 in $\mathfrak B$. Now consider the analogical equation
% \begin{align*} \{a\}:\{a,b\}::1:x.
% \end{align*} Define the $L$-term
% \begin{align*} t(z):=z\circ c.
% \end{align*} The interpretations of $t$ in $\mathfrak A$ and $\mathfrak B$ are then given by
% \begin{align*} t^\mathfrak A(z):=z\cup\{b\} \quad\text{and}\quad t^\mathfrak B(z):=z+1.
% \end{align*} As a consequence of \prettyref{thm:FPT}, we have
% \begin{align*} (\mathfrak{A,B})\models\{a\}\to\{a,b\}\righttherefore 1\to 2.
% \end{align*}
% \end{example}

% todo fuege referenzen ein zu paper, wo axiome bereits vorkommen
\subsection{Axioms}\label{sec:Axioms}

\citeA{Lepage03} \cite<cf.>[pp. 796-797]{Miclet08} introduces the following axioms in the linguistic context as a guideline for formal models of analogical proportions (over a single universe), adapted here to our framework formulated above:\footnote{\citeA{Lepage03} formulates his axioms to hold in a single domain without any reference to an underlying structure $\mathfrak A$.}
\begin{align}
    (\mathfrak{A,B})&\models a:b::c:d\quad\Leftrightarrow\quad (\mathfrak{B,A})\models c:d::a:b\quad\text{(symmetry)},\\
    \label{equ:central_permutation} \mathfrak A&\models a:b::c:d\quad\Leftrightarrow\quad\mathfrak A\models a:c::b:d\quad\text{(central permutation)},\\
    \label{equ:strong_inner_reflexivity} \mathfrak A&\models a:a::c:d\quad\Rightarrow\quad d=c\quad\text{(strong inner reflexivity)},\\
    \label{equ:strong_reflexivity} \mathfrak A&\models a:b::a:d\quad\Rightarrow\quad d=b\quad\text{(strong reflexivity)}.
\end{align} %Of course, in our framework, \prettyref{equ:central_permutation} makes sense only in case $b,c\in A\cap B$, and \prettyref{equ:strong_inner_reflexivity} only in case $a,b\in A\cap B$.

% Notice that within a single domain, we can interpret $::$ as a binary relation $::\subseteq\mathfrak A^2\times\mathfrak A^2$ between pairs of elements of $\mathfrak A$.

Although Lepage's axioms appear reasonable in the word domain (but see \prettyref{problem:Lepage}), they cannot be straightforwardly applied to the general case. %for example, consider the domain $\mathbb Z$ of integers ordered by $<$ without any other algebraic structure on $\mathbb Z$---here $0<1$ and $-1<2$ reasonably imply $0:1::-1:2$; however, one can argue that the proportion $0:-1::1:2$, obtained by central permutation, fails since $0>-1$ whereas $1<2$.\footnote{We shall emphasize here that this example cannot be formalized within our framework as relation symbols are not part of functional structures.}% (but see the last paragraph of the discussion on future work in \prettyref{sec:Future_Work}).} % Symmetry, on the other hand, fails in `asymmetric' structures as in \prettyref{fig:asymmetric} below (and see \prettyref{rem:asymmetric}). 
Strong inner reflexivity fails, for instance, if the relation of $a$ to itself is similar to the relation between $c$ and $d$. Strong reflexivity fails, for example, in algebras where the relation of $a$ to $b$ and $d$ is identical, for distinct elements $b$ and $d$. In our framework, by making the underlying structures of an analogical proportion explicit, it turns out that except for symmetry none of Lepage's axioms holds in the general case, justified by counter-examples (\prettyref{thm:axioms}). This has critical consequences, as his axioms are assumed by many authors \cite<e.g.>{Barbot19,Miclet08} to hold beyond the word domain.

We replace Lepage's above list by the following list of axioms:
\begin{align}
    \label{equ:symmetry} (\mathfrak{A,B})&\models a:b::c:d\quad\Leftrightarrow\quad (\mathfrak{B,A})\models c:d::a:b\quad\text{(symmetry)},\\
    \label{equ:inner_symmetry} (\mathfrak{A,B})&\models a:b::c:d \quad\Leftrightarrow\quad (\mathfrak{A,B})\models b:a::d:c\quad\text{(inner symmetry)},\\
    \label{equ:inner_reflexivity} (\mathfrak{A,B})&\models a:a::c:c\quad\text{(inner reflexivity)},\\
    % someday/maybe kann man reflexivity von $\mathfrak A$ auf $(\mathfrak{A,B})$ verallgemeinern?
    \label{equ:reflexivity} \mathfrak A&\models a:b::a:b\quad\text{(reflexivity)},\\
    \label{equ:determinism} \mathfrak A&\models a:a::a:d \quad\Leftrightarrow\quad d=a\quad\text{(determinism)}.
\end{align}

Moreover, we consider the following property, for $a,b$ contained in $\mathfrak A$ and in $\mathfrak B$:
\begin{align}\label{equ:commutativity} 
    (\mathfrak{A,B})\models a:b::b:a\quad\text{(commutativity).}
\end{align}

Furthermore, we consider the following properties, for $L$-algebras $\mathfrak{A,B,C}$ and elements $a,b\in A$, $c,d\in B$, $e,f\in C$:
\begin{prooftree}\label{equ:transitivity}
    \AxiomC{$(\mathfrak{A,B})\models a:b::c:d$}
        \AxiomC{$(\mathfrak{B,C})\models c:d::e:f$}
        \RightLabel{(transitivity),}
    \BinaryInfC{$(\mathfrak{A,C})\models a:b::e:f$}
\end{prooftree} and, for elements $a,b,e\in A$ and $c,d,f\in B$, the property
\begin{prooftree}\label{equ:inner_transitivity}
    \AxiomC{$(\mathfrak{A,B})\models a:b::c:d$}
        \AxiomC{$(\mathfrak{A,B})\models b:e::d:f$}
        \RightLabel{(inner transitivity),}
    \BinaryInfC{$(\mathfrak{A,B})\models a:e::c:f$}
\end{prooftree} and, for elements $a\in A$, $b$ in $\mathfrak A$ and $\mathfrak B$, $c$ in $\mathfrak B$ and $\mathfrak C$, and $d\in C$, the property
\begin{prooftree}\label{equ:central_transitivity}
    \AxiomC{$(\mathfrak{A,B})\models a:b::b:c$}
        \AxiomC{$(\mathfrak{B,C})\models b:c::c:d$}
        \RightLabel{(central transitivity).}
    \BinaryInfC{$(\mathfrak{A,C})\models a:b::c:d$}
\end{prooftree} Notice that central transitivity follows from transitivity.

Finally, we consider the following schema, where $\mathfrak A'$ and $\mathfrak B'$ are $L'$-algebras, for some language $L\subseteq L'$:
\begin{prooftree}\label{equ:monotonicity}
    \AxiomC{$(\mathfrak{A,B})\models a:b::c:d$}
        \AxiomC{$\mathfrak A=\mathfrak A'\upharpoonright L$}
            \AxiomC{$\mathfrak B=\mathfrak B'\upharpoonright L$}
            \RightLabel{(monotonicity).}
    \TrinaryInfC{$(\mathfrak{A',B'})\models a:b::c:d$}
\end{prooftree}
% \begin{align}\label{equ:monotonicity} \mathfrak A\models a:b::c:d \quad\text{and}\quad \mathfrak A=\mathfrak A'\upharpoonright L \quad\Rightarrow\quad \mathfrak A'\models a:b::c:d\quad\text{(monotonicity)}.
% \end{align}

\begin{remark}\label{rem:Lepage03} Notice that strong inner reflexivity is a conditional statement, whereas inner reflexivity is an assertion not implied by strong inner reflexivity. The same applies to strong reflexivity and reflexivity. Importantly, central permutation and strong reflexivity imply strong inner reflexivity, and central permutation together with strong inner reflexivity imply strong reflexivity.
\end{remark}

We have the following analysis of the above axioms within our framework.

\begin{theorem}\label{thm:axioms} The analogical proportion relation, as defined in \prettyref{def:models}, satisfies
\begin{itemize}
    \item symmetry \prettyref{equ:symmetry},
    \item inner symmetry \prettyref{equ:inner_symmetry},
    \item inner reflexivity \prettyref{equ:inner_reflexivity},
    \item reflexivity \prettyref{equ:reflexivity},
    \item determinism \prettyref{equ:determinism},
\end{itemize}  and, in general, it does not satisfy
\begin{itemize}
    \item central permutation \prettyref{equ:central_permutation},
    \item strong inner reflexivity \prettyref{equ:strong_inner_reflexivity},
    \item strong reflexivity \prettyref{equ:strong_reflexivity},
    \item commutativity \prettyref{equ:commutativity},
    \item transitivity,
    \item inner transitivity,
    \item central transitivity,
    \item monotonicity.
\end{itemize}
\end{theorem}
\begin{proof} We have the following proofs:
\begin{itemize}
    \item Symmetry \prettyref{equ:symmetry} and inner symmetry \prettyref{equ:inner_symmetry} are immediate consequences of \prettyref{def:models} as the framework is designed to satisfy these axioms.

    \item Inner reflexivity \prettyref{equ:inner_reflexivity} is an immediate consequence of \prettyref{thm:FPT} with $t(z):=z$, injective in $\mathfrak A$ and $\mathfrak B$.

    % todo gilt das auch fuer $(\mathb{A,B})$ statt $\mathfrak A$?
    \item Next, we prove reflexivity \prettyref{equ:reflexivity}. If $Jus_\mathfrak A(a,b)\cup Jus_\mathfrak A(a,b)=Jus_\mathfrak A(a,b)$ consists only of trivial justifications, we are done. Otherwise, there is at least one non-trivial justification in $Jus_\mathfrak A(a,b)=Jus_\mathfrak A(a\to b\righttherefore a\to b)$. We proceed by showing that $Jus_\mathfrak A(a\to b\righttherefore a\to b)$ is subset maximal with respect to the last $b$. For any $d\in A$, we have
    \begin{align*} 
        Jus_\mathfrak A(a\to b\righttherefore a\to d)&=Jus_\mathfrak A(a,b)\cap Jus_\mathfrak A(a,d)\\
        &\subseteq Jus_\mathfrak A(a,b)=Jus_\mathfrak A(a\to b\righttherefore a\to b),
    \end{align*} which shows that $Jus_\mathfrak A(a\to b\righttherefore a\to b)$ is indeed maximal. Hence, $\mathfrak A\models a\to b\righttherefore a\to b$. The same line of reasoning proves $\mathfrak A\models b\to a\righttherefore b\to a$. Hence, $\mathfrak A\models a:b\righttherefore a:b$, which finally proves $\mathfrak A\models a:b::a:b$.

    % todo gilt das auch fuer $(\mathb{A,B})$ statt $\mathfrak A$?
    \item Next, we prove determinism \prettyref{equ:determinism}. ($\Leftarrow$) Inner reflexivity \prettyref{equ:inner_reflexivity} implies
    \begin{align*} 
        \mathfrak A\models a:a::a:a.
    \end{align*} $(\Rightarrow)$ We assume $\mathfrak A\models a:a::a:d$. Since $z\to z\in Jus_\mathfrak A(a,a)$, the set $Jus_\mathfrak A(a,a)\cup Jus_\mathfrak A(a,d)$ cannot consist only of trivial justifications. Every justification $s\xrightarrow{\mathbf e_1\to\mathbf e_2}t$ of $a\to a\righttherefore a\to d$ can be transformed into a justification $s\xrightarrow{\mathbf e_1\to\mathbf e_1}t$ (replace $\mathbf e_2$ by $\mathbf e_1$) of $a\to a\righttherefore a\to a$. On the other hand, we have
    \begin{align*} 
        z\to z\in Jus_\mathfrak A(a\to a\righttherefore a\to a)
    \end{align*} whereas
    \begin{align*} 
        z\to z\not\in Jus_\mathfrak A(a\to a\righttherefore a\to d),\quad\text{for all $d\neq a$}.
    \end{align*} This shows
    \begin{align*} 
        Jus_\mathfrak A(a\to a\righttherefore a\to d)\subsetneq Jus_\mathfrak A(a\to a\righttherefore a\to a),
    \end{align*} which implies
    \begin{align*} 
        \mathfrak A\not\models a:a::a:d,\quad\text{for all $d\neq a$.}
    \end{align*}

    \item Next, we disprove central permutation \prettyref{equ:central_permutation}. For this consider the algebra $\mathfrak A:=(\{a,b,c,d\},f)$, given by (we omit the loops $f(o):=o$, for $o\in\{b,c,d\}$, in the figure)
    \begin{center}
        \begin{tikzpicture} 
        \node (a)               {$a$};
        \node (b) [above=of a]  {$b$};
        \node (c) [right=of a]  {$c$};
        \node (d) [right=of b]  {$d$};
        \draw[->] (a) to [edge label'={$f$}] (c);
    \end{tikzpicture}
    \end{center} We have $\mathfrak A\models a:b::c:d$, whereas $\mathfrak A\not\models a:c::b:d$ as there is an arrow from $a$ to $c$ but not from $b$ to $d$ in $\mathfrak A$. 

    A second proof follows from the forthcoming \prettyref{thm:(A)} (depending only on inner reflexivity already shown above), which yields
    \begin{align*} 
        (\{a,b,c\})\models a:b::a:c \quad\text{whereas}\quad (\{a,b,c\})\not\models a:a::b:c.
    \end{align*}

    \item Next, we disprove strong inner reflexivity \prettyref{equ:strong_inner_reflexivity}. For this consider the algebra $\mathfrak A:=(\{a,c,d\},f)$ given by
    \begin{center}
    \begin{tikzpicture} 
        \node (a)               {$a$};
        \node (c) [right=of a]  {$c$};
        \node (d) [above=of c]  {$d$};
        \draw[->] (a) to [edge label'={$f$}] [loop] (a);
        \draw[<->] (c) to [edge label'={$f$}] (d);
    \end{tikzpicture}
    \end{center} As $f$ is injective in $\mathfrak A$, applying the Functional Proportion \prettyref{thm:FPT} yields $\mathfrak A\models a:f(a)::c:f(c)$ which is equivalent to $\mathfrak A\models a:a::c:d$.

    \item Next, we disprove strong reflexivity \prettyref{equ:strong_reflexivity}. By the forthcoming \prettyref{thm:(A)} (which depends only on inner reflexivity already proved above), we have
    \begin{align*} 
        (\{a,b,d\})\models a:b::a:d.
    \end{align*}

    \item Commutativity fails in the algebra $\mathfrak A:=(\{a,b\},f)$ given by
    \begin{center}
    \begin{tikzpicture} 
        \node (a)               {$a$};
        \node (b) [right=of a]  {$b$};
        \draw[->] (a) to [edge label={$f$}] (b);
        \draw[->] (b) to [edge label'={$f$}] [loop] (b);
    \end{tikzpicture}
    \end{center} This follows from
    \begin{align*}
        Jus_\mathfrak A(a,b)\cup Jus_\mathfrak A(b,a)=\{z\to f(z),\ldots\}\neq\emptyset
    \end{align*} whereas
    \begin{align*} 
        Jus_\mathfrak A(a\to b\righttherefore b\to a)=\emptyset.
    \end{align*}

    \item Transitivity fails in the algebra $\mathfrak A:=(\{a,b,c,d,e,f\},g,h)$ given by (we omit the loops $g(o):=o$ for $o\in\{b,d,e,f\}$, and $h(o):=o$ for $o\in\{a,b,d,f\}$, in the figure)
    \begin{center}
    \begin{tikzpicture} 
        \node (a)               {$a$};
        \node (b) [right=of a]  {$b$};
        \node (c) [right=of b, xshift=0.8cm]  {$c$};
        \node (d) [right=of c]  {$d$};
        \node (e) [right=of d, xshift=0.8cm]  {$e$};
        \node (f) [right=of e]  {$f$};
        \draw[->] (a) to [edge label={$g$}] (b);
        % \draw[->] (b) to [edge label'={$g$}] [loop] (b);
        \draw[->] (c) to [edge label={$g,h$}] (d);
        % \draw[->] (d) to [edge label'={$g,h$}] [loop] (d);
        \draw[->] (e) to [edge label={$h$}] (f);
        % \draw[->] (e) to [edge label'={$g$}] [loop] (e);
        % \draw[->] (f) to [edge label'={$g,h$}] [loop] (f);
    \end{tikzpicture}
    \end{center} The arrow proportions $a\to g(a)\righttherefore c\to g(c)=a\to b\righttherefore c\to d$ and $c\to d\righttherefore a\to b$ are immediate consequences of the Functional Proportion \prettyref{thm:FPT}, and the arrow proportions $b\to a\righttherefore d\to c$ and $d\to c\righttherefore b\to a$ follow from the fact that
    \begin{align*} 
        Jus_\mathfrak A(b\to a\righttherefore d\to c)=\left\{g^\ell(z)\to z \;\middle|\; \ell\geq 0\right\}=Jus_\mathfrak A(d\to c\righttherefore b\to a)
    \end{align*} are non-empty and maximal. This shows $\mathfrak A\models a:b::c:d$. An analogous argument shows $\mathfrak A\models c:d::e:f$. On the other hand,
    \begin{align*} 
        Jus_\mathfrak A(a,b)\cup Jus_\mathfrak A(e,f)\neq\emptyset \quad\text{whereas}\quad Jus_\mathfrak A(a\to b\righttherefore e\to f)=\emptyset
    \end{align*} shows $\mathfrak A\not\models a:b::e:f$.

    \item Inner transitivity fails in the algebra $\mathfrak A:=(\{a,b,c,d,e,f\},g)$ given by (we omit the loops $g(o):=o$, for $o\in\{b,e,c,d,f\}$, in the figure)
    \begin{center}
    \begin{tikzpicture} 
        \node (a) {$a$};
        \node (b) [above=of a] {$b$};
        \node (e) [left=of b] {$e$};
        \node (c) [right=of a,xshift=1cm] {$c$};
        \node (d) [above=of c] {$d$};
        \node (f) [right=of d] {$f$};
        \draw[->] (a) to [edge label={$g$}] (e);
    \end{tikzpicture}
    \end{center} We have $\mathfrak A\models a:b::c:d$ and $\mathfrak A\models b:e::d:f$ by an argument analogous to the proof of \prettyref{thm:(A)}. On the other hand, 
    \begin{align*} 
        Jus_\mathfrak A(a,e)\cup Jus_\mathfrak A(c,f)\neq\emptyset \quad\text{whereas}\quad Jus_\mathfrak A(a\to e\righttherefore c\to f)=\emptyset
    \end{align*} shows $\mathfrak A\not\models a:e::c:f$.

    \item Central transitivity fails in the algebra $\mathfrak A:=(\{a,b,c,d\},g,h)$ given by (we omit the loops $g(o):=o$ for $o\in\{c,d\}$, and $h(o):=o$ for $o\in\{a,d\}$, in the figure)
    \begin{center}
    \begin{tikzpicture} 
        \node (a)               {$a$};
        \node (b) [right=of a]  {$b$};
        \node (c) [right=of b]  {$c$};
        \node (d) [right=of c]  {$d$};
        \draw[->] (a) to [edge label={$g$}] (b);
        \draw[->] (b) to [edge label={$g,h$}] (c);
        \draw[->] (c) to [edge label={$h$}] (d);
    \end{tikzpicture}
    \end{center} The proof is analogous to the above disproof of transitivity.

    % todo expansion by constants only! redefine `expansion' in preliminaries!
    \item Finally, we disprove monotonicity. For this, consider the algebra $\mathfrak A:=(\{a,b,c,d\})$, consisting only of its universe. We have (see the forthcoming \prettyref{thm:(A)} which depends only on inner reflexivity already proved above)
    \begin{align*} 
        \mathfrak A\models a:b::c:d.
    \end{align*} Now consider the expansion $\mathfrak A':=(\{a,b,c,d\},f)$ of $\mathfrak A$ given by (we omit the loops $f(o):=o$, for $o\in\{b,c,d\}$, in the figure)
    \begin{center}
    \begin{tikzpicture} 
        \node (a)               {$a$};
        \node (b) [above=of a]  {$b$};
        \node (c) [right=of a]  {$c$};
        \node (d) [right=of b]  {$d$};
        \draw[->] (a) to [edge label={$f$}] (b);
    \end{tikzpicture}
    \end{center} We have
    \begin{align*} 
        \mathfrak A'\not\models a:b::c:d.
    \end{align*}
\end{itemize}
\end{proof}

% \begin{remark}\label{rem:asymmetric} The outer asymmetry of our framework proved above is a direct consequence of the asymmetric formulation of the problem of solving an analogical equation. More precisely, given an analogical equation of the form $a:b::c:x$, the elements $a,b,c$ are {\em fixed} and one asks for an element $d$ which is to $c$ what the fixed $b$ is to $a$. One can imagine a different interpretation of an analogical proportion in which the condition for the set of justifications of $a:b::c:d$ to be subset maximal with respect to both $d$ {\em and} $b$. Under that interpretation the symmetry axiom is satisfied. However, this has the undesired consequence that reflexivity \prettyref{equ:reflexivity} fails in general since, for example, in the structure $(A)$, consisting only of (at least) two distinct elements $a,b$, we have
% \begin{align*} \emptyset=Jus_{(A)}(a\to b\righttherefore a\to b)\subsetneq Jus_{(A)}(a\to a\righttherefore a\to a)=\{z\to z\},
% \end{align*} which means that $a\to b\righttherefore a\to b$ is not a directed analogical proportion in $(A)$. In this paper, we choose to interpret analogical proportions always as solutions to analogical equations, which are asymmetric in nature as proved above.
% \end{remark}

\begin{remark} Notice that Lepage's axioms (except for symmetry) fail in a single algebra $\mathfrak A$ by \prettyref{thm:axioms}, which according to \prettyref{not:AA} is the special case $(\mathfrak{A,A})$ of $(\mathfrak{A,B})$---this means that, in general, Lepage's axioms fail in $(\mathfrak{A,B})$ as well.
\end{remark}

\citeA{Lepage03} proposed his axioms in the linguistic setting of words. This raises the following important question which is beyond the scope of the paper and which therefore remains an open problem.

\begin{problem}\label{problem:Lepage} Do Lepage's axioms hold within our framework in the word domain?
\end{problem}

\begin{remark}\label{rem:local} We shall emphasize that the negative results in \prettyref{thm:axioms} regarding transitivity show that the property of being in analogical proportion is a {\em local} property. By this we mean that, for example, the relation between $a$ and $b$, and between $b$ and $e$ does not fully determine the relation between $a$ and $e$ as inner transitivity fails in general. The same applies to (central) transitivity.
\end{remark}

\begin{problem}\label{problem:transitivity} In which algebras is the analogical proportion relation transitive and therefore an equivalence relation? The same question can be asked for inner and central transitivity.
\end{problem}

Finally, we have the following characterization of analogical proportions in algebras with no functions and no constants.

\begin{theorem}\label{thm:(A)} Let $\mathfrak A:=(A)$ be an algebra consisting only of its universe. For any $a,b,c,d\in A$, we have
\begin{align*}
    (A)\models a:b::c:d \quad\Leftrightarrow\quad (a=b\text{ and }c=d) \quad\text{or}\quad (a\neq b\text{ and }c\neq d).
\end{align*}
\end{theorem}
\begin{proof} $(\Leftarrow)$ (i) If $a=b$ and $c=d$, then $(A)\models a:b::c:d$ holds by inner reflexivity \prettyref{equ:inner_reflexivity}. (ii) If $a\neq b$ and $c\neq d$, then (cf. \prettyref{con:trivial})
\begin{align*} 
    Jus_{(A)}(a,b)\cup Jus_{(A)}(c,d)=Jus_{(A)}(b,a)\cup Jus_{(A)}(d,c)=\emptyset,
\end{align*} which entails $(A)\models a:b::c:d$.% follows from the Trivial Proportion \prettyref{thm:Trivial_Proportion_Theorem}.

$(\Rightarrow)$ By assumption, we have $(A)\models a\to b\righttherefore c\to d$. We distinguish two cases: (i) if $Jus_{(A)}(a,b)\cup Jus_{(A)}(c,d)$ consists only of trivial justifications, then we must have $a\neq b$ and $c\neq d$ (since otherwise the the non-trivial justification $z\to z$ would be included); (ii) otherwise, $Jus_{(A)}(a\to b\righttherefore c\to d)$ contains the only available non-trivial justification $z\to z$, which implies $a=b$ and $c=d$.
\end{proof}

\begin{corollary} In addition to the positive axioms of \prettyref{thm:axioms}, every algebra $\mathfrak A:=(A)$, consisting only of its universe, satisfies commutativity, inner transitivity, transitivity, central transitivity, and strong inner reflexivity.
\end{corollary}
\begin{proof} A direct consequence of \prettyref{thm:(A)}.
\end{proof}

\subsection{Isomorphism Theorems}\label{sec:Isomorphism_Theorems}

It is reasonable to expect isomorphisms---which are structure-preserving bijective mappings between algebras---to be compatible with analogical proportions. Consider the following simple example.

\begin{example}\label{exa:eta} Let $\Sigma:=\{a\}$ be the alphabet consisting of the single letter $a$, and let $\Sigma^\ast$ denote the set of all words over $\Sigma$ including the empty word $\varepsilon$. We can identify every sequence $a^n=a\ldots a$ ($n$ consecutive $a$'s) with the non-negative integer $n$, for every $n\geq 0$. Therefore, define the isomorphism $ H:(\mathbb N,+)\to (\Sigma^\ast,\cdot)$ via
\begin{align*} 
     H(0):=\varepsilon \quad\text{and}\quad  H(n):=a^n,\quad n\geq 1.
\end{align*} We expect the following analogical proportion to hold:
\begin{align*} 
    ((\mathbb N,+),(\Sigma^\ast,\cdot))\models m:n::a^m:a^n,\quad\text{for all }m,n\geq 0.
\end{align*} That this is indeed the case is the content of the First Isomorphism \prettyref{thm:FIT} below.
\end{example}

\begin{lemma}[Isomorphism Lemma]\label{lem:IL} For any isomorphism $ H:\mathfrak A\to \mathfrak B$ and any elements $a,b\in A$, we have
\begin{align}\label{equ:Jus_eta} 
    Jus_\mathfrak A(a,b)=Jus_\mathfrak B( H(a), H(b)).
\end{align}
\end{lemma}
\begin{proof} ($\subseteq$) Let $s(\mathbf z)\to t(\mathbf z)\in Jus_\mathfrak A(a,b)$. By \prettyref{def:Jus}, this means
\begin{align*} 
    a=s^\mathfrak A(\mathbf e) \quad\text{and}\quad b=t^\mathfrak A(\mathbf e),\quad\text{for some $\mathbf e\in A^{|\mathbf z|}$}.
\end{align*} Since $ H$ is an isomorphism, we have
\begin{align*} 
     H(a)= H(s^\mathfrak A(\mathbf e))=s^\mathfrak B( H(\mathbf e)) \quad\text{and}\quad  H(b)= H(t^\mathfrak A(\mathbf e))=t^\mathfrak B( H(\mathbf e)),
\end{align*} which shows $s\to t\in Jus_\mathfrak B( H(a), H(b))$ and consequently
\begin{align*} 
    Jus_\mathfrak A(a,b)\subseteq Jus_\mathfrak B( H(a), H(b)).
\end{align*}

($\supseteq$) Since $ H$ is an isomorphism, its inverse $ H^{-1}$ is an isomorphism as well, and we can apply the already shown case above to prove
\begin{align*} 
    Jus_\mathfrak B( H(a), H(b))\subseteq Jus_\mathfrak A( H^{-1}( H(a)), H^{-1}( H(b)))=Jus_\mathfrak A(a,b).
\end{align*}
\end{proof}

\begin{theorem}[First Isomorphism Theorem]\label{thm:FIT} For any isomorphism $ H:\mathfrak A\to\mathfrak B$ and any elements $a,b\in A$, we have
\begin{align}\label{equ:a_b_eta(a)_eta(b)} 
    (\mathfrak{A,B})\models a:b:: H(a): H(b).
\end{align}
\end{theorem}
\begin{proof} If $Jus_\mathfrak A(a,b)\cup Jus_\mathfrak B( H(a), H(b))$ consists only of trivial justifications, we are done. Otherwise, there is at least one non-trivial justification $s\to t$ in $Jus_\mathfrak A(a,b)$ or in $Jus_\mathfrak B( H(a), H(b))$, in which case the Isomorphism \prettyref{lem:IL} implies that $s\to t$ is in both $Jus_\mathfrak A(a,b)$ {\em and} $Jus_\mathfrak B( H(a), H(b))$, which means that $Jus_{(\mathfrak{A,B})}(a\to b\righttherefore H(a)\to H(b))$ contains at least one non-trivial justification as well. We proceed by showing that $Jus_{(\mathfrak{A,B})}(a\to b\righttherefore H(a)\to H(b))$ is subset maximal with respect to $ H(b)$:
\begin{align*} 
    Jus_{(\mathfrak{A,B})}(a\to b\righttherefore H(a)\to H(b))
    &=Jus_\mathfrak A(a,b)\cap Jus_\mathfrak B( H(a), H(b))\\
    &=Jus_\mathfrak A(a,b)\\
    &\supseteq Jus_\mathfrak A(a,b)\cap Jus_\mathfrak B( H(a),d)\\
    &=Jus_{(\mathfrak{A,B})}(a\to b\righttherefore  H(a),d),\quad\text{for every $d\in B$,}
\end{align*} where the second identity follows from \prettyref{lem:IL}. This shows that $Jus_{(\mathfrak{A,B})}(a\to b\righttherefore H(a)\to H(b))$ is indeed a subset maximal set of justifications of $a\to b\righttherefore  H(a)\to H(b)$ in $(\mathfrak{A,B})$ with respect to $ H(b)$. An analogous argument shows the remaining directed proportions (see \prettyref{fact:Sol}).
\end{proof}

\begin{remark} In \prettyref{sec:Isomorphisms} we will see a different proof of the First Isomorphism \prettyref{thm:FIT} from the logical perspective of model-theoretic types.
\end{remark}

The following counter-example shows that the First Isomorphism \prettyref{thm:FIT} cannot be generalized to homomorphisms. 
% However, we can prove in \prettyref{thm:FIT_hom} below a directed variant.

\begin{example}\label{exa:eta_hom} Let $\mathfrak A:=(\{a,b,a',b'\},f^\mathfrak A)$ and $\mathfrak B:=(\{c,d\},f^\mathfrak B)$, where $f^\mathfrak A$ and $f^\mathfrak B$ are unary functions defined by
\begin{center}
\begin{tikzpicture} 
    \node (a') {$a'$};
    \node (b') [above=of a',yshift=1cm]  {$b'$};
    \node (a) [above=of b',yshift=1cm]  {$a$};
    \node (b) [above=of a,yshift=1cm]  {$b$};
    \node (d) [right=of a,xshift=1cm]  {$d$};
    \node (c) [right=of b',xshift=1cm]  {$c$};
    \draw[->] (a) to [edge label={$f^\mathfrak A$}] (b);
    \draw[->] (a') to [edge label={$f^\mathfrak A$}] (b');
    \draw[->] (c) to [edge label'={$f^\mathfrak B$}] (d);
    \draw[->, dashed] (b) to [edge label={$ H$}] (d);
    \draw[->, dashed] (b') to (d);
    \draw[->, dashed] (a) to (c);
    \draw[->, dashed] (a') to [edge label'={$ H$}] (c);
    \draw[->] (b) to [edge label'={$f^\mathfrak A$}] [loop] (b);
    \draw[->] (b') to [edge label'={$f^\mathfrak A$}] [loop] (b');
    \draw[->] (d) to [edge label'={$f^\mathfrak B$}] [loop] (d);
    % \draw (d) to [edge label'={$id$}] (d);
\end{tikzpicture}
\end{center} One can verify that $ H$ as defined by the dashed arrows in the figure above is a homomorphism from $\mathfrak A$ to $\mathfrak B$.

We claim
\begin{align*} 
    (\mathfrak{A,B})\not\models a':b:: H(a'): H(b).
\end{align*} We have
\begin{align*} 
    Jus_\mathfrak B( H(a'), H(b))=\left\{z\to f^\ell(z) \;\middle|\; \ell\geq 1\right\}
\end{align*} and therefore
\begin{align*} 
    Jus_\mathfrak A(a',b)\cup Jus_\mathfrak B( H(a'), H(b))\neq\emptyset,
\end{align*} but
\begin{align*} 
    Jus_{(\mathfrak{A,B})}(a'\to b\righttherefore H(a')\to H(b))=\emptyset,
\end{align*} which directly yields
\begin{align*} 
    (\mathfrak{A,B})\not\models a'\to b\righttherefore H(a')\to H(b).
\end{align*} %The intuitive reason is that there is an edge from $ H(a')$ to $ H(b)$ in $\matbb B$, but no edge from $a'$ to $b$ in $\mathfrak B$.
\end{example}

The following theorem shows that the analogical proportion relation is invariant under isomorphic transformations.

\begin{theorem}[Second Isomorphism Theorem]\label{thm:SIT} For any elements $a,b\in A$ and $c,d\in B$, and any isomorphisms $ H:\mathfrak A\to\mathfrak C$ and $ G:\mathfrak B\to\mathfrak D$, we have
\begin{align*} 
    (\mathfrak{A,B})\models a:b::c:d \quad\Leftrightarrow\quad (\mathfrak{C,D})\models  H(a):H(b)::G(c):G(d).
\end{align*}
\end{theorem}
\begin{proof}  An immediate consequence of the Isomorphism \prettyref{lem:IL} which yields
\begin{align*} 
    Jus_\mathfrak A(a,b)=Jus_\mathfrak C(H(a),H(b)) \quad\text{and}\quad Jus_\mathfrak B(c,d)=Jus_\mathfrak D(G(c),G(d)).
\end{align*}
\end{proof}

% todo counter-example for second iso thm and homomorphisms

\section{Sets and Numbers}\label{sec:Sets_}

In this section, we demonstrate our abstract algebraic framework of analogical proportions introduced above by investigating some elementary properties of proportions between sets and numbers, and by comparing our model with the models in \cite{Miclet08,Stroppa06}.

\subsection{Set Proportions}\label{sec:Set_}

This section studies analogical proportions between sets called set proportions.

\begin{notation}\label{not:set} In the rest of this section, let $L:=\{\cap,.^c\}$ be the language of sets, interpreted in the usual way, let $U$ and $W$ be universes, and let
\begin{align*} 
    \mathfrak A(U,W)&:=(\mathfrak P(U),\cap,.^c,\mathfrak P(U)\cap\mathfrak P(W)),\\ 
    \mathfrak B(W,U)&:=(\mathfrak P(W),\cap,.^c,\mathfrak P(U)\cap\mathfrak P(W)),
\end{align*} be $L(\mathfrak P(U)\cap\mathfrak P(W))$-algebras containing the distinguished sets in $\mathfrak P(U)\cap\mathfrak P(W)$ as constants (cf. \prettyref{not:L}). We will write $\mathfrak A(U)$ instead of $\mathfrak A(U,U)$. We introduce the following abbreviations:
\begin{align*} 
    A\cup B:=(A^c\cap B^c)^c \quad\text{and}\quad A-B:=A\cap B^c.
\end{align*} Notice that in case $\mathfrak A=\mathfrak B$, {\em every} set in $\mathfrak A$ is a distinguished set; the empty set is always a distinguished set.
\end{notation}

The following proposition summarizes some elementary properties of set proportions.

\begin{proposition}\label{prop:A_A^c_C_C^c} The following proportions hold in $(\mathfrak{A,B})$, for all $A\in A$, $C\in B$, and $B,E\in{A\cap B}$:
\begin{align}\
    \label{equ:A_A^c_C_C^c} A:A^c&::C:C^c,\\
    \label{equ:A_AuE_C_CuE} A:A\cup E&::C:C\cup E,\\
    \label{equ:A_AnE_C_CnE} A:A\cap E&::C:C\cap E,\\
    \label{equ:A-AuC_C-AuC} A\to A\cup C&\righttherefore C\to A\cup C \quad\text{if }A,C\in{A\cap B},\\
    \label{equ:A-AnC_C-AnC} A\to A\cap C&\righttherefore C\to A\cap C \quad\text{if }A,C\in{A\cap B},\\
    \label{equ:A-U_C-W} A\to U&\righttherefore C\to W,\\
    \label{equ:A-0_C-0} A\to \emptyset&\righttherefore C\to \emptyset.
\end{align} Moreover, in case $B\subseteq A$ and $B\subseteq C$, we further have the arrow proportion
\begin{align}\label{equ:A_B_C_B} A\to B\righttherefore C\to B.
\end{align}
\end{proposition}
\begin{proof} All proportions are immediate consequences of \prettyref{thm:FPT} with $t(Z)$ defined as follows:
\begin{itemize}
    \item The proportions in \prettyref{equ:A_A^c_C_C^c} follow from the fact that $t(Z):=Z^c$ is injective in $\mathfrak A$ and $\mathfrak B$.
    \item The directed proportions $A\to A\cup E\righttherefore C\to C\cup E$ and $C\to C\cup E\righttherefore A\to A\cup E$ in \prettyref{equ:A_AuE_C_CuE} follow with $t(Z):=Z\cup E$, and $A\cup E\to A\righttherefore C\cup E\to C$ and $C\cup E\to C\righttherefore A\cup E\to A$ with $t(Z):=Z-(E-Z)$.
    \item The directed proportions $A\to A\cap E\righttherefore C\to C\cap E$ and $C\to C\cap E\righttherefore A\to A\cap E$ in \prettyref{equ:A_AnE_C_CnE} follow with $t(Z):=Z\cap E$, and $A\cap E\to A\righttherefore C\cap E\to C$ and $C\cap E\to C\righttherefore A\cap E\to A$ with $t(Z):=Z\cup (Z-E)$.
    \item The directed proportion in \prettyref{equ:A-AuC_C-AuC} follows with both $t(Z):=Z\cup A\cup C$ and, since $A\cup C$ is a distinguished set by assumption, $t(Z):=A\cup C$.
    \item The directed proportion in \prettyref{equ:A-AnC_C-AnC} follows with both $t(Z):=Z\cup (A\cap C)$ and, since $A\cap C$ is a distinguished set by assumption, $t(Z):=A\cap C$.
    \item The directed proportion in \prettyref{equ:A-U_C-W} follows with $t(Z):=Z\cup Z^c$.
    \item The directed proportion in \prettyref{equ:A-0_C-0} follows with $t(Z):=Z\cap Z^c$ or $t(Z):=\emptyset$.
    \item From $B\subseteq A$ and $B\subseteq C$ we deduce that $B$ is a distinguished set, which means that $t(Z):=B$ is a valid definition implying \prettyref{equ:A_B_C_B}.
\end{itemize}
\end{proof}

\begin{example}\label{exa:tau_set} The rewrite rule
\begin{align*} 
    (X\cap Y)\cup (X-Y)\to (X\cap Y)\cup (Y-X)
\end{align*} justifies every arrow set proportion $A\to B\righttherefore C\to D$ in $(\mathfrak{A,B})$, which shows that it is a trivial justification in $(\mathfrak{A,B})$. This example shows that trivial justifications may contain useful information about the underlying structures---in this case, it encodes the trivial observation that any two sets $A$ and $B$ are symmetrically related via $A=(A\cap B)\cup (A-B)$ and $B=(A\cap B)\cup (B-A)$.
\end{example}

% In what follows, we want to compare our notion of set proportion with two models due to \cite{Stroppa06} and \cite{Miclet08}.

\subsubsection{Stroppa and Yvon}

The following definition is due to \citeA[Proposition 4]{Stroppa06}.\footnote{We adapt the definition in \cite{Stroppa06} to our schema by making the underlying structure $\mathfrak A$ explicit---recall from \prettyref{not:AA} that $\mathfrak A$ is an abbreviation for $(\mathfrak{A,A})$, which according to \prettyref{not:set} means that {\em every} set in $\mathfrak A$ is a distinguished set---this means, we can use every set in $\mathfrak A$ to form terms.}

\begin{definition}\label{def:models_SY_sets} For any sets $A,B,C,D\in A$, define
\begin{align*} 
    \mathfrak A\models_{SY} A:B::C:D \quad:\Leftrightarrow\quad &A=A_1\cup A_2,\quad B=A_1\cup D_2,\\
    &C=D_1\cup A_2,\quad D=D_1\cup D_2,\\
    &\text{for some $A_1,A_2,D_1,D_2\in A$.}
\end{align*}
\end{definition}

For example, with $A_1:=\{a_1\}$, $A_2:=\{a_2\}$, $D_1:=\{d_1\}$, and $D_2:=\{d_2\}$, we obtain the set proportion
\begin{align}\label{equ:a12} 
    \{a_1,a_2\}:\{a_1,d_2\}::\{d_1,a_2\}:\{d_1,d_2\}.
\end{align} So, roughly, we obtain the set $\{a_1,d_2\}$ from $\{a_1,a_2\}$ by replacing $a_2$ by $d_2$, which coincides with the transformation from $\{d_1,a_2\}$ into $\{d_1,d_2\}$.

Although \prettyref{def:models_SY_sets} works in some cases, in general we disagree with the notion of set proportions in \cite{Stroppa06} justified by the following counter-example.

\begin{example}\label{exa:aaa0} Let $A_1:=A_2:=\{a\}$ and $D_1:=D_2:=\emptyset$. \prettyref{def:models_SY_sets} yields
\begin{align*} 
    \mathfrak A(\{a\})\models_{SY}\{a\}:\{a\}::\{a\}:\emptyset.
\end{align*} This is implausible as it has no non-trivial justification. In fact, determinism \prettyref{equ:determinism} implies that $\{a\}$ is the only solution to $\{a\}:\{a\}::\{a\}:X$ in $\mathfrak A(\{a\})$ according to our \prettyref{def:models} (cf. \prettyref{thm:axioms}).
\end{example}

\subsubsection{Miclet, Bayoudh, and Delhay}

There is at least one more definition of set proportions in the literature due to \citeA[Definition 2.3]{Miclet08}.\footnote{To be more precise, the definition in \cite{Miclet08} is stated informally as
\begin{quote} Four sets $A,B,C$ and $D$ are in analogical proportion $A:B::C:D$ iff $A$ can be transformed into $B$, and $C$ into $D$, by adding and subtracting the same elements to $A$ and $C$.
\end{quote} This definition is ambiguous. One interpretation is the one we choose in \prettyref{def:models_MBD_sets}---another interpretation would be equivalent to \prettyref{def:models_SY_sets}. Moreover, as in the case of \prettyref{def:models_SY_sets}, we adapt the definition of \citeA{Miclet08} to our schema by making the underlying structure $\mathfrak A$ explicit.}

\begin{definition}\label{def:models_MBD_sets} Given a {\em finite} universe $U$ and sets $A,B,C,D\in A$, 
\begin{align*} 
    \mathfrak A\models_{MBD} A:B::C:D \quad:\Leftrightarrow\quad B=(A-E)\cup F \text{ and } D=(C-E)\cup F,\\\text{for some finite sets $E$ and $F$.}
\end{align*}
\end{definition}

% \begin{remark} Notice that both \cite{Stroppa06} and \cite{Miclet08} define set proportions only for sets over the same universe which is a serious restriction to its practical applicability. Even more problematic, \cite{Miclet08} define set proportions only for {\em finite} sets.
% \end{remark}

\begin{remark} Notice that \citeA{Miclet08} define set proportions only for {\em finite} sets which is a serious restriction to its practical applicability.
\end{remark}

For example, 
\begin{align*} 
    \{a_1,d_2\}=(\{a_1,a_2\}-\{a_2\})\cup\{d_2\}\text{ and }\{d_1,d_2\}=(\{d_1,a_2\}-\{a_2\})\cup\{d_2\}
\end{align*} shows that \prettyref{equ:a12} holds with respect to \prettyref{def:models_MBD_sets} as well.

We have the following implication.

\begin{theorem}\label{thm:models_MBD_sets} For any finite sets $A,B,C,D\in A$, we have
\begin{align*} 
    \mathfrak A\models_{MBD} A:B::C:D \quad\Rightarrow\quad \mathfrak A\models A:B::C:D.
\end{align*}
\end{theorem}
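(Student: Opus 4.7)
The plan is to reduce the implication to a direct application of Theorem \ref{thm:atct} by exhibiting a single unary term that witnesses both required transformations $A \mapsto B$ and $C \mapsto D$ simultaneously.

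First, I would unpack the hypothesis: $\mathbb A \models_{MBD} A:B::C:D$ gives finite sets $E, F \in \mathfrak P(U)$ with $B = (A - E) \cup F$ and $D = (C - E) \cup F$. Since here $\mathbb A = \mathbb B = (\mathfrak P(U), \cap, \cup, .^c, \mathfrak P(U))$, i.e.\ we work inside a single algebra, Notation \ref{not:set} tells us that \emph{every} set of $\mathfrak P(U)$, and in particular $E$ and $F$, is a distinguished element and therefore usable as a constant symbol in the language $L(\mathbb A)$.

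Next, I would translate set difference into the signature $\{\cap, \cup, .^c\}$ by defining the unary $L(\mathbb A)$-term
\begin{align*}
  t(Z) := (Z \cap E^c) \cup F.
\end{align*}
Its interpretation in $\mathbb A$ is $t^\mathbb A(Z) = (Z - E) \cup F$, so the hypothesis yields $t^\mathbb A(A) = B$ and $t^\mathbb A(C) = D$ in one stroke. At this point Theorem \ref{thm:atct}, applied with $\mathbb A = \mathbb B$ and the term $t$ above, immediately gives
\begin{align*}
  \mathbb A \models A : t^\mathbb A(A) :: C : t^\mathbb A(C), \qquad \text{i.e.,} \qquad \mathbb A \models A:B::C:D,
\end{align*}
with the characteristic justification $Z \to (Z \cap E^c) \cup F$, since $Z$ is injective (Lemma \ref{lem:injective} via the proof of Theorem \ref{thm:atct}).

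I do not expect a genuine obstacle here; the only subtlety worth double-checking is that $E$ and $F$ really are available as constants so that $t$ is a bona fide $L(\mathbb A)$-term. This is guaranteed by the single-algebra case of Notation \ref{not:set}. The proof therefore amounts to observing that the MBD scheme is precisely the graph of a single $L$-term evaluated at $A$ and $C$, and then invoking the functional-dependency theorem. (Note that the converse fails in general, since our framework admits justifications, such as $Z \to Z$, $Z \to Z \cap A'$ etc., that need not come from an $(E,F)$-rewrite; this is consistent with the paper's claim that the model yields strictly more solutions.)
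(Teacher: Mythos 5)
Your proposal is correct and follows essentially the same route as the paper: the paper's proof is exactly an application of Theorem \ref{thm:atct} with the term $t(Z):=(Z-E)\cup F$, noting (as you do) that $E$ and $F$ are available as constants because every set in $\mathbb A$ is distinguished. Your only addition is to spell out $Z-E$ as $Z\cap E^c$ so that $t$ visibly lies in the signature $\{\cap,\cup,.^c\}$, a detail the paper leaves implicit.
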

\begin{proof} Let $B$ and $D$ be written as in \prettyref{def:models_MBD_sets}. The arrow proportions $A\to B\righttherefore C\to D$ and $C\to D\righttherefore A\to B$ are immediate consequences of \prettyref{thm:FPT} with $t(Z):=(Z-E)\cup F$, and $B\to A\righttherefore D\to C$ and $D\to C\righttherefore B\to A$ with $t(Z):=(Z-(F-Z))\cup (Z\cap E)$.\footnote{Here it is important to emphasize that we assume every set in $\mathfrak A$ to be a distinguished set by \prettyref{not:set} (recall from \prettyref{not:AA} that $\mathfrak A$ is an abbreviation for $(\mathfrak{A,A})$).}
\end{proof}

The following example shows that the converse of \prettyref{thm:models_MBD_sets} fails in general.

\begin{example}\label{exa:ab0Z} Consider the analogical equation in $\mathfrak A(\{a,b\})$ given by
\begin{align}\label{equ:ab0Z} 
    \{a\}:\{b\}::\emptyset:X.
\end{align} As $\{b\}$ is the complement of $\{a\}$ in $\mathfrak A(\{a,b\})$, $\{a,b\}=\emptyset^c$ is a solution of \prettyref{equ:ab0Z} in our framework as a consequence of \prettyref{equ:A_A^c_C_C^c}. In contrast, since there are no finite sets $E$ and $F$ in $\mathfrak A(\{a,b\})$ satisfying $\{b\}=(\{a\}\cup E)-F$ and $\{a,b\}=(\emptyset\cup E)-F$, $\{a,b\}$ is not a solution of \prettyref{equ:ab0Z} according to \prettyref{def:models_MBD_sets}.
\end{example}

% stimmt so noch nicht, da ich nur pfeile ableiten kann
% \prettyref{thm:models_MBD_sets} together with \prettyref{exa:ab0Z} shows that our notion of set proportion yields strictly more justifiable solutions than the notion of \cite{Miclet08}.

\subsection{Arithmetical Proportions}

This section studies analogical proportions between numbers called arithmetical proportions. Let us first summarize some elementary properties.

\begin{proposition} For any integers $a,c\in\mathbb Z$, we have
\begin{align*} 
    (\mathbb Z,+,-)\models a:-a::c:-c,
\end{align*} and
\begin{align*} 
    (\mathbb Q,\cdot,.^{-1})\models a:\frac 1 a::c:\frac 1 c,\quad\text{for $a\neq 0$, $c\neq 0$},
\end{align*} and, given some distinguished integers $k,\ell\in\mathbb Z$, $k,\ell\neq 0$,
\begin{align*} 
    (\mathbb Z,+,\mathbb Z)\models a:ka+\ell::c:kc+\ell \quad\text{and}\quad (\mathbb Z,\cdot,\mathbb Z)\models a:a^k\cdot\ell::c:c^k\cdot\ell.
\end{align*} Moreover, for any integers $a,c,k,\ell,m,n\in\mathbb Z$, $k,m\neq 0$, we have
\begin{align*} 
    (\mathbb Z,+,\mathbb Z)\models ka+\ell:ma+n::kc+\ell:mc+n.
\end{align*}
\end{proposition}
\begin{proof} The first three lines are immediate consequences of \prettyref{thm:FPT} with $t(z)$ defined as follows: the first line is justified via $t(z):=-z$, the second via $t(z):=\frac 1 z$, and the third line is justified via $t(z):=kz+\ell$ and $t(z):=z^k\cdot\ell$, as in each case $t$ is injective in the respective algebra since $k,\ell\neq 0$ holds by assumption. Since the terms $s(z):=kz+\ell$ and $t(z):=mz+n$, $k,m\neq 0$, are injective in $(\mathbb Z,+,\mathbb Z)$ containing the same variable $z$, the Uniqueness \prettyref{lem:UL} implies via the justification $s(z)\to t(z)$ the arithmetical proportion $ka+\ell:ma+n::kc+\ell:mc+n$ in $(\mathbb Z,+,\mathbb Z)$.
\end{proof}

The following result formally proves two well-known arithmetical proportions known as `difference' and `geometric' proportion within our framework.

\begin{theorem}\label{thm:difference_proportion} 
% For any natural numbers $a,b,c,d\in\mathbb N$, % someday/maybe $succ$ definiert?
% \begin{align*} (\mathbb N,succ,0)\models a:b::c:d \quad\Leftrightarrow\quad a-b=c-d\quad\text{{\em (difference proportion)}}.
% \end{align*} 
For any integers $a,b,c,d\in\mathbb Z$,
\begin{align*} 
    &a-b=c-d \quad\Rightarrow\quad (\mathbb Z,+,-,\mathbb Z)\models a:b::c:d\quad\text{{\em (difference proportion)}},\\
    &\frac b a=\frac d c \quad\Rightarrow\quad (\mathbb Q,\cdot,.^{-1},\mathbb Q)\models a:b::c:d,\quad a\neq 0,c\neq 0,\quad\text{{\em (geometric proportion)}}.
\end{align*}
\end{theorem}
\begin{proof} The two proportions are direct consequences of \prettyref{thm:FPT} with $t(z):=z+b-a$ and $t(z):=z\frac b a$, respectively, since both terms are injective in the respective algebras.\footnote{Notice that in the algebra $(\mathbb Z,+,-,\mathbb Z)$, {\em every} integer is a distinguished element, which shows that the constants $a$ and $b$ in $z+b-a$ are syntactically correct.}
\end{proof}

The following counter-examples show that the converse of \prettyref{thm:difference_proportion} fails in general.

\begin{example} \prettyref{thm:FPT} implies%\footnote{Notice that $2m$ is an abbreviation for $m+m$, which does not contain multiplication.}
\begin{align*} 
    (\mathbb Z,+,-,\mathbb Z)\models a:2a::c:2c,\quad\text{for {\em all} integers $a$ and $c$.}
\end{align*} On the other hand, we have $2a-a=2c-c$ iff $a=c$. Similarly, \prettyref{thm:FPT} implies
\begin{align*} 
    (\mathbb Q,\cdot,.^{-1},\mathbb Q)\models a:a^2::c:c^2,\quad\text{for {\em all} integers $a$ and $c$, $a\neq 0$, $c\neq 0$.}
\end{align*} On the other hand, we have $\frac {a^2} a=\frac {c^2} c$ iff $a=c$.
\end{example}

The following result summarizes some unexpected and therefore `creative' arithmetical proportions containing the numbers 0 and 1.

\begin{proposition}\label{prop:a_0_c_0} For any integers $a,c,k,\ell\in\mathbb Z$, $k,\ell\neq 0$, we have
\begin{align} 
    \label{equ:a-0_c-0} (\mathbb Z,+,-,\mathbb Z)&\models a\to 0\righttherefore c\to 0, \\
    (\mathbb Q,\cdot,.^{-1},\mathbb Q)&\models a\to 1\righttherefore c\to 1,\\
    \label{equ:0_0_c_kc^ell} (\mathbb Z,+,\cdot,\mathbb Z)&\models 0:0::c:kc^\ell,\\
    (\mathbb Z,\cdot)&\models 1:1::c:c^\ell.
\end{align}
\end{proposition}
\begin{proof} An immediate consequence of \prettyref{thm:FPT} with $t(z)$ defined as follows: the first two arrow proportions are justified via $t(z):=z-z$ and $t(z):=\frac z z$; and the next two proportions are justified via $t(z):=kz^\ell$ and $t(z):=z^\ell$, both injective for $k,\ell\neq 0$ in the respective algebras.
\end{proof}

The following counter-example shows that the first two arrow proportions in \prettyref{prop:a_0_c_0} cannot be inverted.

\begin{example} To refute the inner converse $0\to a\righttherefore 0\to c$ of \prettyref{equ:a-0_c-0} in $(\mathbb Z,+,-,\mathbb Z)$, for $a\neq c$, notice that we can transform each justification $s\xrightarrow{\mathbf e_1\to\mathbf e_2}t$ of $0\to a\righttherefore 0\to c$ into a justification $s\xrightarrow{\mathbf e_1\to\mathbf e_1}t$ of $0\to a\righttherefore 0\to a$. On the other hand, $0\to a$ is a justification of $0\to a\righttherefore 0\to a$ which does not justify $0\to a\righttherefore 0\to c$ unless $c=a$. This shows
\begin{align*} 
    (\mathbb Z,+,-,\mathbb Z)\not\models 0\to a\righttherefore 0\to c,\quad\text{for $c\neq a$}.
\end{align*}

A similar argument shows
\begin{align*} 
    (\mathbb Q,\cdot,.^{-1},\mathbb Q)\not\models 1\to a\righttherefore 1\to c,\quad\text{for $c\neq a$}.
\end{align*}
\end{example}

\begin{example}\label{exa:0_0_2_40} \prettyref{prop:a_0_c_0} yields `creative' arithmetical proportions which are unexpected if their justifications are not made explicit. For example, given $c:=2$, $k:=10$, and $\ell:=2$, \prettyref{equ:0_0_c_kc^ell} yields the arithmetical proportion:\footnote{In this example it is essential that every coefficient occurring in a justification is a distinguished element.}
\begin{align*} 
    (\mathbb Z,+,\cdot,\mathbb Z)\models 0:0::2:40.
\end{align*} In such cases, where the characteristic justifications of a proportion are not obvious from the context, it is preferable to write (cf. \prettyref{not:J}):
\begin{align*} 
    (\mathbb Z,+,\cdot,\mathbb Z)\models_{\{z\to 10z^2\}} 0:0::2:40.
\end{align*} Here is another arithmetical proportion of this kind formalizing \prettyref{exa:2_0_3_z} (notice that $z+5$ and $1000z+3000$ are injective in $(\mathbb Z,+,\mathbb Z)$ and see \prettyref{lem:UL}):
\begin{align*} 
    (\mathbb Z,+,\mathbb Z)\models_{\{z+5\xrightarrow{-3\to-2}1000z+3000\}}2:0::3:1000.
\end{align*} We believe that analogical proportions of this form, which are unexpected but still reasonably justifiable, are crucial for formalizing creativity.
\end{example}

\begin{example}\label{exa:tau_numeric} The rewrite rule
\begin{align*} 
    x+y-y\to x+y-x
\end{align*} justifies any arrow proportion $a\to b\righttherefore c\to d$ in $(\mathbb Z,+,-)$, which shows that it is a trivial justification encoding the trivial observation that any two integers $a$ and $b$ are symmetrically related via $b=a+b-a$ and $a=b+a-b$.
\end{example}

\subsubsection*{Stroppa and Yvon}

The following notion of arithmetical proportion is an instance of the more general definition due to \citeA[Proposition 2]{Stroppa06} given for abelian semigroups.

\begin{definition}\label{def:models_SY_numbers} For any integers $a,b,c,d\in\mathbb Z$, define
\begin{align*} 
    (\mathbb Z,+,\mathbb Z)\models_{SY}a:b::c:d \quad:\Leftrightarrow\quad &a=a_1+a_2,\; b=a_1+d_2,\\ 
        &c=d_1+a_2,\; d=d_1+d_2,\\
        &\text{for some $a_1,a_2,d_1,d_2\in\mathbb Z$.}
\end{align*}
\end{definition}

For example, with $a:=1+1$, $b:=1+2$, $c:=2+1$, and $d:=2+2$, we obtain the arithmetical proportion
\begin{align*} 
    2:3::3:4.
\end{align*}

We have the following implication.

\begin{theorem}\label{thm:models_SY_numbers} For any integers $a,b,c,d\in\mathbb Z$, we have
\begin{align*} 
    (\mathbb Z,+,\mathbb Z)\models_{SY}a:b::c:d \quad\Rightarrow\quad (\mathbb Z,+,\mathbb Z)\models a:b::c:d.
\end{align*}
\end{theorem}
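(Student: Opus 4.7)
The plan is to reduce the claim directly to the difference proportion result \prettyref{thm:b-a=d-c}, which has already been proved via \prettyref{thm:atct}. Unpacking \prettyref{def:models_SY_numbers}, the hypothesis gives integers $a_1,a_2,d_1,d_2\in\mathbb Z$ with
\begin{align*}
a = a_1+a_2,\quad b = a_1+d_2,\quad c = d_1+a_2,\quad d = d_1+d_2.
\end{align*}
A one-line calculation then yields $b-a = d_2-a_2 = d-c$, which is precisely the hypothesis of \prettyref{thm:b-a=d-c}.

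With the difference condition $b-a = d-c$ in hand, I would simply invoke \prettyref{thm:b-a=d-c} to conclude $(\mathbb Z,+,\mathbb Z)\models a:b::c:d$. Alternatively, and equivalently, one can apply \prettyref{thm:atct} directly with the term $t(z) := z + (b-a)$; since every integer is a distinguished element in $(\mathbb Z,+,\mathbb Z)$, the constant $b-a$ is syntactically available, and one checks $t^{\mathbb Z}(a) = b$ and $t^{\mathbb Z}(c) = c + (b-a) = c + (d-c) = d$.

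There is no real obstacle here: the only subtlety is to make sure the required constant $b-a$ is legitimately a symbol of the language, which is exactly why $\mathbb Z$ appears in the third slot of $(\mathbb Z,+,\mathbb Z)$ (cf. \prettyref{not:L}). The proof is thus essentially a two-step chain: extract the difference equality $b-a=d-c$ from the Stroppa--Yvon decomposition, then cite \prettyref{thm:b-a=d-c}.
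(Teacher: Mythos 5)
Your argument is correct, but it takes a different route from the paper. The paper works with the Stroppa--Yvon decomposition directly: it defines $s(z):=z+a_2$ and $t(z):=z+d_2$, observes that $s\xrightarrow{a_1\to d_1}t$ is a justification of $a:b::c:d$ in $(\mathbb Z,+,\mathbb Z)$, and concludes via \prettyref{lem:injective} (injectivity of $s$) that this justification is characteristic; the rewrite rule thus visibly encodes the decomposition itself. You instead collapse the decomposition into the single difference equality $b-a=d_2-a_2=d-c$ and finish with one application of \prettyref{thm:atct} using $t(z):=z+(b-a)$, which is legitimate precisely because every integer is a distinguished element of $(\mathbb Z,+,\mathbb Z)$, as you note. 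One caveat on your first-cited route: \prettyref{thm:b-a=d-c} is stated for the algebra $(\mathbb Z,+,-,\mathbb Z)$, not $(\mathbb Z,+,\mathbb Z)$, and since the satisfaction relation $\models$ depends on the underlying algebra you cannot literally quote it to conclude a proportion in $(\mathbb Z,+,\mathbb Z)$; your alternative formulation via \prettyref{thm:atct} avoids this mismatch (no minus function symbol is needed, only the constant $b-a$) and is the version to keep. Comparing the two proofs: yours is shorter and exposes the conceptual point that Stroppa--Yvon proportions over $(\mathbb Z,+)$ are a special case of difference proportions, while the paper's proof yields a characteristic justification $z+a_2\to z+d_2$ that mirrors the Stroppa--Yvon decomposition, making explicit \emph{which} rewrite rule witnesses the proportion rather than routing everything through the identity term on the left.
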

\begin{proof} An immediate consequence of \prettyref{thm:FPT} with $t(z):=z-a_2+d_2$, injective in $(\mathbb Z,+,\mathbb Z)$.
\end{proof}
% \begin{proof} Let $a,b,c,d$ be decomposed as in \prettyref{def:models_SY_numbers}. Define the terms
% \begin{align*} s(z):=z+a_2 \quad\text{and}\quad t(z):=z+d_2.
% \end{align*} Then $s\xrightarrow{a_1\to d_1} t$ is a justification of $a:b::c:d$ in $(\mathbb Z,+,\mathbb Z)$---since $s$ is injective in $(\mathbb Z,+,\mathbb Z)$, $s\to t$ is characteristic by \prettyref{lem:UL}.
% \end{proof}

The following counter-example shows that the converse of \prettyref{thm:models_SY_numbers} fails in general.

\begin{example}\label{exa:0_0_1_z} Consider the analogical equation in $(\mathbb Z,+,\mathbb Z)$ given by
\begin{align*} 
    0:0::1:x.
\end{align*} The justification $z\to z+z$, injective in $(\mathbb Z,+,\mathbb Z)$, implies the solution $x=2$ as a consequence of \prettyref{thm:FPT}. This solution cannot be obtained from \prettyref{def:models_SY_numbers} by the following argument. Suppose, towards a contradiction, that $0,1,2$ can be decomposed according to \prettyref{def:models_SY_numbers} into
\begin{align*} 
    0=a_1+a_2,\quad 0=a_1+d_2,\quad 1=d_1+a_2, \quad\text{and}\quad 2=d_1+d_2.
\end{align*} From the first two identities we deduce $a_2=d_2$, which further implies $1=d_1+a_2=d_1+d_2=2$---a contradiction.
\end{example}

\prettyref{thm:models_SY_numbers} together with \prettyref{exa:0_0_1_z} shows that our notion of arithmetical proportion yields strictly more justifiable solutions than the notion of Stroppa and Yvon \cite{Stroppa06}.

\section{Logical Interpretation}\label{sec:Logical_}

As we have constructed our model from first principles using only elementary concepts of universal algebra, and since our model questions some basic properties of analogical proportions presupposed in the literature (\prettyref{sec:Axioms}), to convince the reader of the `soundness' of our model, we show in this section that our purely algebraic model from above can be naturally embedded into the logical setting of first-order logic via model-theoretic types, which play a key role in classical model theory \cite<cf.>[§7.1]{Hinman05}. We then reprove our First Isomorphism \prettyref{thm:FIT}, which says that analogical proportions are compatible with isomorphisms, from this logical perspective. This hopefully convinces the reader that the introduced notions for formalizing analogical proportions from above---which are motivated by simple examples---are theoretically well-founded.

\subsection{Rewrite Types}\label{sec:_Types}

We reformulate our model of analogical proportions from above into first-order logic via a restricted form of model-theoretic types.

\begin{notation} In this section, $\mathfrak A$ and $\mathfrak B$ denote functional $L$-structures, which are $L$-structures containing no relation symbols.
\end{notation}

Recall from \prettyref{sec:Analogical_Proportions} that justifications of analogical proportions are rewrite rules of the form $s\to t$ expressing a functional relationship between elements. We now want to translate such justifications into logical formulas, which motivates the following definition.

\begin{definition}\label{def:varphi} We associate with each rewrite rule $s(\mathbf z)\to t(\mathbf z)$, where $t$ contains only variables occurring in $s$, the {\em $L$-rewrite formula}
\begin{align}\label{equ:varphi_s->t} 
    \varphi_{s(\mathbf z)\to t(\mathbf z)}(x,y):\equiv\exists\mathbf z\;[x=s(\mathbf z)\land y=t(\mathbf z)].
\end{align} We denote the set of all $L$-rewrite formulas by $rwFm_L$.
\end{definition}

Now that we have defined rewrite formulas, we continue by translating sets of justifications into sets of rewrite formulas via a restricted notion of the well-known model-theoretic types \cite<e.g.>[§7.1]{Hinman05}.

\begin{definition} Define the {\em $L$-rewrite type} of two elements $a,b\in A$ by
\begin{align*} 
    rwType_\mathfrak A(a,b):=\left\{\varphi(x,y)\in rwFm_L \;\middle|\; \mathfrak A\models\varphi(a,b)\right\},
\end{align*} extended to arrow proportions $a\to b\righttherefore c\to d$ in $(\mathfrak{A,B})$ by
\begin{align*} 
    rwType_{(\mathfrak{A,B})}(a\to b\righttherefore c\to d):=rwType_\mathfrak A(a,b)\cap rwType_\mathfrak B(c,d).
\end{align*}
\end{definition}

We have the following correspondence between sets of justifications and rewrite formulas and types (cf. \prettyref{con:abcd}):
\begin{align}\label{equ:Jus_rwType} 
    s\to t\in Jus_{(\mathfrak{A,B})}(a\to b\righttherefore c\to d) \quad\Leftrightarrow\quad \varphi_{s\to t}\in rwType_{(\mathfrak{A,B})}(a\to b\righttherefore c\to d).
\end{align} This is interesting as it shows that functional justifications of the form $s\to t$, which were motivated by simple examples, have an intuitive logical meaning. The expression `$a$ transforms into $b$ in $\mathfrak A$ as $c$ transforms into $d$ in $\mathfrak B$' can now be reinterpreted from a logical point of view as follows: the functional relationships between $a$ and $b$ in $\mathfrak A$ and between $c$ and $d$ in $\mathfrak B$ are formally captured by all rewrite formulas $\varphi(x,y)$ such that $\varphi(a,b)$ and $\varphi(b,a)$ holds in $\mathfrak A$ and $\varphi(c,d)$ and $\varphi(d,c)$ holds in $\mathfrak B$, respectively, and the elements $a,b,c,d$ are in directed analogical proportion in $(\mathfrak{A,B})$ iff the set of shared functional properties is maximal with respect to $d$. This leads us to the following logical variant of \prettyref{def:models}.

\begin{definition}\label{def:models_rw} We call an element $d\in B$ an {\em rw-solution} of an analogical equation $a:b::c:x$ in $(\mathfrak{A,B})$---in symbols, $(\mathfrak{A,B})\models_{rw} a:b::c:d$---iff it satisfies the conditions in \prettyref{def:models} with $Jus$ replaced by $rwType$. The notion of triviality is adapted in the obvious way.
\end{definition}

We have the following logical characterization of analogical proportions in terms of model-theoretic rewrite types.

\begin{fact}\label{fact:rw} For any functional $L$-structures $\mathfrak A$ and $\mathfrak B$, and any elements $a,b\in A$ and $c,d\in B$, we have
\begin{align*} 
    (\mathfrak{A,B})\models a:b::c:d \quad\Leftrightarrow\quad (\mathfrak{A,B})\models_{rw} a:b::c:d.
\end{align*}
\end{fact}
\begin{proof} An immediate consequence of \prettyref{equ:Jus_rwType}.
\end{proof}

\subsection{Isomorphisms}\label{sec:Isomorphisms}

We reprove the First Isomorphism \prettyref{thm:FIT} from the logical perspective of types.

\begin{proof}[Proof of \prettyref{thm:FIT}] If $rwType_\mathfrak A(a,b)\cup rwType_\mathfrak B( H(a), H(b))$ consists only of trivial rewrite formulas, we are done. Otherwise, we need to show that $rwType_{(\mathfrak{A,B})}(a\to b\righttherefore H(a)\to H(b))$ contains at least one non-trivial rewrite formula---this can be shown by an analogous argument as in the algebraic proof of \prettyref{thm:FIT} in \prettyref{sec:Isomorphism_Theorems}.

We proceed by showing that $rwType_{(\mathfrak{A,B})}(a\to b\righttherefore H(a)\to H(b))$ is subset maximal with respect to $ H(b)$. From $\mathfrak A\to\mathfrak B$ we deduce with \prettyref{lem:respects} that for any rewrite 2-$L$-formula $\varphi$,
\begin{align*} 
    \mathfrak A\models_{rw}\varphi(a,b) \quad\Leftrightarrow\quad \mathfrak B\models_{rw}\varphi( H(a), H(b)).
\end{align*} This further implies 
\begin{align*} 
    rwType_\mathfrak A(a,b)=rwType_\mathfrak B( H(a), H(b)),
\end{align*} and, consequently,
\begin{align*} 
    rwType_{(\mathfrak{A,B})}(a\to b\righttherefore  H(a)&\to  H(b))\\
    &=rwType_\mathfrak A(a,b)\cap rwType_\mathfrak B( H(a), H(b))\\
    &=rwType_\mathfrak A(a,b)\\
    &\supseteq rwType_\mathfrak A(a,b)\cap rwType_{(\mathfrak{A,B})}( H(a),d)\\
    &=rwType_{(\mathfrak{A,B})}(a\to b\righttherefore  H(a)\to d),\quad\text{for every $d\in B$.}
\end{align*} This shows that $rwType_{(\mathfrak{A,B})}(a\to b\righttherefore  H(a)\to  H(b))$ is indeed a subset maximal set of justifications of $a\to b\righttherefore  H(a)\to  H(b)$ in $(\mathfrak{A,B})$ with respect to $ H(b)$. An analogous argument shows the remaining directed proportions (see \prettyref{fact:Sol}). Now apply \prettyref{fact:rw}.
\end{proof}

\section{Related Work}\label{sec:Related_Work}

Arguably, the most prominent (symbolic) model of analogical reasoning to date is Gentner's \cite{Gentner83} {\em Structure-Mapping Theory} (or {\em SMT}), first implemented by Falkenhainer, Forbus, and Gentner \cite{Falklenhainer89}. Our approach shares with Gentner's SMT its symbolic nature. However, while in SMT mappings are constructed with respect to meta-logical considerations---for instance, Gentner's {\em systematicity principle} prefers connected knowledge over independent facts---in our framework `mappings' are realized via analogical proportions satisfying mathematically well-defined properties. In Theorems \ref{thm:FIT} and \ref{thm:SIT} we have shown that analogical proportions are compatible with structure-preserving mappings---a result which is in the vein of SMT. We leave a more detailed comparison between our algebraic approach and Gentner's SMT as future work.

Formal models of analogy have been studied by artificial intelligence researchers for decades \cite<cf.>{Hall89,Prade14a}. In this paper, we compared our algebraic framework of analogical proportions with two recently introduced models of analogical proportions from the literature \cite{Miclet08,Stroppa06}---introduced for applications to artificial intelligence and machine learning, specifically for natural language processing and handwritten character recognition---in the concrete domains of sets and numbers, and we showed that in each case we either disagree with the notion from the literature justified by some counter-example or we can show that our model yields strictly more justifiable solutions. This provides some evidence for its applicability. We expect similar results in other domains where the models of \cite{Stroppa06} and \cite{Miclet08} are applicable.% (it is not clear how to generalize the frameworks of \cite{Stroppa06} and \cite{Miclet08} to more general algebras). 

The functional-based view in \cite{Barbot19} is related to our \prettyref{thm:FPT} on the preservation of functional dependencies across different domains (\prettyref{sec:Functional_Proportion_Theorem}). The critical difference is that Barbot, Miclet, and Prade \cite{Barbot19} assume Lepage's central permutation axiom \prettyref{equ:central_permutation}, which implies in their framework that functional transformations need to be bijective, a serious restriction. In our framework, on the other hand, we proved in \prettyref{thm:axioms} that central permutation does not hold in general---justified by some reasonable counter-example---and so functional transformations can be functions induced by terms satisfying a mild injectivity condition (\prettyref{thm:FPT}).

To summarize, our framework differs substantially from the aforementioned models of analogical proportions:
\begin{enumerate}
\item Our model is abstract and algebraic in nature, formulated in the general language of universal algebra. Specific structures, like sets and numbers, are instances of the generic framework. To the best of our knowledge, this is not the case for the aforementioned frameworks, which are formulated only for concrete structures.

\item In our model, we make the underlying mathematical structures $\mathfrak A$ (source) and $\mathfrak B$ (target) explicit. This allows us to distinguish, for example, between the similar structures $(\mathbb Z,+)$ and $(\mathbb Z,\cdot)$, which yield different arithmetical proportions. More importantly, it allows us to derive analogical proportions between two {\em different} domains, e.g. numbers and words (\prettyref{exa:2_4_ab_z}). This distinction is not made in any of the aforementioned articles.

\item As a consequence, we could prove in \prettyref{thm:axioms} that all of Lepage's axioms (except for symmetry)---which are taken for granted in the aforementioned articles---are not axioms, but properties which may or may not hold in a specific structure. This has critical consequences, as structures satisfying some or all of Lepage's axioms behave differently than structures in which the properties fail. % For instance, central permutation and strong reflexivity---a property that is satisfied in every structure---imply strong reflexivity, which again should not be treated as an axiom (\prettyref{thm:axioms}).
\end{enumerate}

For further references on analogical reasoning we refer the interested reader to \cite{Hall89} and \cite{Prade14a}.

\subsection{Boolean Proportions}

Miclet and Prade \cite{Miclet09} \cite<cf.>{Prade10,Prade17} and Klein \cite{Klein82} study analogical proportions between boolean elements which correspond in our general framework to boolean proportions within the specific 2-element boolean algebra $\mathfrak{BOOL}$ (defined in \prettyref{sec:Preliminaries}), and which are related to set proportions as studied in \prettyref{sec:Set_}. In a recent paper \cite{Antic21-3}, we compare our model to the frameworks of \cite{Miclet09} and \cite{Klein82} in detail and derive the following conclusions:
\begin{enumerate}
\item In \cite{Antic21-3}, we derive, for every set of boolean constants $B\subseteq\mathfrak{BOOL}$:
\begin{align*} (\mathfrak{BOOL},\lor,\neg,B)\models a:b::c:d \quad\Leftrightarrow\quad (a=b\text{ and }c=d) \quad\text{or}\quad (a\neq b\text{ and }c\neq d).
\end{align*} Surprisingly, this turns out to be equivalent to Klein's \cite{Klein82} characterization of boolean proportions.

\item Miclet and Prade's \cite{Miclet09} definition, on the other hand, does not consider the proportions $0:1::1:0$ and $1:0::0:1$ to be in boolean proportion, `justified' on page 642 as follows:
\begin{quote} 
    The two other cases, namely $0:1::1:0$ and $1:0::0:1$, do not fit the idea that $a$ is to $b$ as $c$ is to $d$, since the changes from $a$ to $b$ and from $c$ to $d$ are not in the same sense. They in fact correspond to cases of maximal analogical dissimilarity, where `$d$ is not at all to $c$ what $b$ is to $a$', but rather `$c$ is to $d$ what $b$ is to $a$'.
\end{quote} Arguably, this is counter-intuitive as in case negation is available (which it implicitly is in \cite{Miclet09}), 
common sense tells us that `$a$ is to its negation $\neg a$ what $c$ is to its negation $\neg c$' makes sense, and $z\to\neg z$ (or, equivalently, $\neg z\to z$) is therefore a plausible characteristic justification of $0:1::1:0$ and $1:0::0:1$ in our framework.
\end{enumerate} 

These correspondences are promising as our model was not explicitly geared towards the boolean setting and provide further evidence for the applicability of our framework.

\subsection{Word Proportions}

A conceptually related approach to solving analogical word equations is given by \citeA{Dastani03}. At this point, it is not entirely clear how our framework relates to the rather complicated model of \cite{Dastani03} built on top of concepts such as `gestalts' of sequential patterns, structural information theory (SIT), algebraic coding systems for SIT, information load, representation systems, local homomorphism, constraints, et cetera. We challenge the reader to find instances where the model of \cite{Dastani03} is more expressive---in the word domain---than our model, which would (partially) justify their heavy machinery. To give a glimpse of what we mean, consider the following simple example \cite<cf.>[p.4]{Navarrete17}.

\begin{example}\label{exa:Dastani03} Let $\Sigma:=\{a,b,c,d\}$ be an alphabet. Define $\mathfrak A:=(\Sigma^+,\cdot^\mathfrak A,succ^\mathfrak A)$ and $\mathfrak B:=(\Sigma^+,\cdot^\mathfrak B,succ^\mathfrak B)$ via
\begin{align}\label{equ:lex} 
    succ^\mathfrak{A}(a):=b,\quad succ^\mathfrak{A}(b):=c,\quad succ^\mathfrak{A}(c):=d,\quad succ^\mathfrak{A}(d):=da
\end{align} and
\begin{align*} 
    succ^\mathfrak{B}(d):=c,\quad succ^\mathfrak{B}(c):=b,\quad succ^\mathfrak{B}(b):=a,\quad succ^\mathfrak{B}(a):=ad,
\end{align*} extended to words lexicographically.\footnote{See \url{https://en.wikipedia.org/wiki/Lexicographic_order}.} Consider the analogical equation in $(\mathfrak{A,B})$ given by
\begin{align*} 
    abc:abcd::dcb:x.
\end{align*} This equation is asking for a word which is to $dcb$ in $\mathfrak B$ what $abcd$ is to $abc$ in $\mathfrak A$. Observe that we obtain $abcd$ from $abc$ by concatenating the `successor' of $c$ at the end of $abc$. Since there is a unique $\mathbf e_2:=(d,c,b)$ satisfying $(z_1z_2z_3)^\mathfrak B(d,c,b)=dcb$ and $(z_1z_2z_3\cdot succ(z_3))^\mathfrak B(d,c,b)=dcba$ (notice that the empty word does not occur in $\mathfrak B$), the solution $z=dcba$ is characteristically justified by the Uniqueness \prettyref{lem:UL} via the justification
\begin{center}
\begin{tikzpicture}[node distance=1.5cm and 0cm]
% siehe tikz 3.6.1a §3.8
\node (a)               {$abc$};
\node (d1) [right=of a] {$\to$};
\node (b) [right=of d1] {$abcd$};
\node (d2) [right=of b] {$\righttherefore$};
\node (c) [right=of d2] {$dcb$};
\node (d3) [right=of c] {$\to$};
\node (d) [right=of d3] {$dcba$.};
\node (s) [below=of b] {$z_1z_2z_3$};
\node (t) [above=of c] {$z_1z_2z_3\cdot succ(z_3)$};

% siehe tikz 3.6.1a p. 165/166
\draw (a) to [edge label'={$\text{(unique)}\quad z_1z_2z_3/abc$}] (s); 
\draw (c) to [edge label={$z_1z_2z_3/dcb\quad\text{(unique)}$}] (s);
\draw (b) to [edge label={$\text{(unique)}\quad z_1z_2z_3/abc$}] (t);
\draw (d) to [edge label'={$z_1z_2z_3/dcb\quad\text{(unique)}$}] (t);
\end{tikzpicture}
\end{center} \citeA{Dastani03} obtain the same solution in a different way by using the algebras generated by the letters in $\Sigma$ and operators (named in their terminology `gestalts') such as `iteration', `successor', `symmetry', `alternation', `representation systems', et cetera, and, finally, by computing the solution $dcba$ via `local homomorphisms'.
\end{example}

\subsection{Category Theory}\label{sec:Category_Theory}

In mathematics, category theory \cite<cf.>{Awodey10} is a branch of algebra devoted to formalizing structural analogies. Roughly, the key axiom for a `collection of arrows' to form a category requires that to any pair of arrows $f:a\to b\in Arr(a,b)$ and $g:b\to c\in Arr(b,c)$ there is an arrow $g\circ f:a\to c\in Arr(a,c)$, called the {\em composite} of $f$ and $g$. There is a striking similarity between an arrow $f:a\to b\in Arr(a,b)$ in a category and a justification $s\to t\in Jus(a,b)$, which leads to the question whether the collection of all justifications (`arrows') of all pairs of elements forms a category; particularly, whether to each pair of justifications $s\to t\in Jus(a,b)$ and $r\to u\in Jus(b,c)$ there is a `composite' justification $s\to t\circ r\to u\in Jus(a,c)$. The following counter-example shows that this is, in general, {\em not} the case: consider the algebra $\mathfrak A:=(\{a,b,c\},f)$, where $f$ is given by (we omit the loop $f(b):=b$ in the figure)
\begin{center}
\begin{tikzpicture} 
\node (a) {$a$};
\node (b) [above=of a] {$b$};
\node (c) [above=of b] {$c$};

\draw[->] (a) to [edge label'={$f$}] (b);
\draw[->] (c) to [edge label={$f$}] (b);
\end{tikzpicture}
\end{center} We have
\begin{align*} 
    z\to f(z)\in Jus(a,b) \quad\text{and}\quad f(z)\to z\in Jus(b,c),
\end{align*} but there is no non-trivial justification in $Jus(a,c)$. It is interesting, however, to study analogical proportions in algebras where the justifications do form a category, which is related to \prettyref{problem:transitivity}. This is left as future work.

\section{Future Work}\label{sec:Future_Work}

This theoretical paper introduces and studies central properties of analogical proportions within the general setting of universal algebra and (the functional fragment of) first-order logic, and within the specific domains of sets and numbers.

\subsection{Multiple Variables}

One way to generalize our notion of an analogical equation is to allow multiple variables and higher-order terms to occur in different parts of an analogical equation as, for example, in 
\begin{align}\label{equ:term_proportion} 
    2+x_1:3+x_1::5+x_1+x_2:x_3.
\end{align} Here we are asking for a triple $\mathbf d=(d_1,d_2,d_3)$---the solution to \prettyref{equ:term_proportion}---satisfying the arithmetical proportion
\begin{align*} 
    2+d_1:3+d_1::5+d_1+d_2:d_3.
\end{align*} For instance, in $(\mathbb Z,+,-,\mathbb Z)$ a solution is given by $(2,0,8)$ yielding the difference proportion (cf. \prettyref{thm:difference_proportion})
\begin{align*} 
    (\mathbb Z,+,-,\mathbb Z)\models 4:5::7:8.
\end{align*} % Notice that \prettyref{equ:term_proportion} is related to {\em term proportions} with ground instances and solutions.

% \subsection{Sequences of Elements}

% Another way to generalize our notion of an analogical equation is to allow sequences of elements to occur in an analogical proportion as for example in
% \begin{align*} (a_1,\ldots,a_k):(b_1,\ldots,b_m)::(c_1,\ldots,c_n):x,\quad k,m,n\geq 1.
% \end{align*}

\subsection{Systems of Analogical Equations and Proportions}\label{sec:Systems_}

In this paper, we have considered only single analogical equations and proportions. We have argued in \prettyref{rem:local} that the property of being in analogical proportion is a {\em local} property. An interesting way to express and analyze {\em global} properties of an algebra is to study {\em systems of analogical equations and proportions}, where we can express multiple relationships between elements, of the form
\begin{align*} 
    a_1:b_1&::c_1:x\\
    &\vdots\\
    a_n:b_n&::c_n:x.
\end{align*} Here solutions are constrained by multiple equations and the identity
\begin{align*} 
    Sol_{(\mathfrak{A,B})}\left(
    \begin{array}{c}
        a_1:b_1::c_1:x\\
        \vdots\\
        a_n:b_n::c_n:x   
    \end{array}
    \right)=Sol_{(\mathfrak{A,B})}(a_1:b_1::c_1:x)\cap\ldots\cap Sol_{(\mathfrak{A,B})}(a_n:b_n::c_n:x)
\end{align*} shows that we can reduce solving systems of equations to solving single equations.

We can further generalize the framework by considering {\em infinite} systems of analogical equations and proportions, for example expressed via universal quantification. For instance, consider the infinite list of proportions, where $e\in A$ is a fixed element and $\mathfrak A=(A,\cdot)$ is an infinite algebra with $\cdot$ being a binary operation:
\begin{align}\label{equ:forall} \forall a\in A:\quad a:a::a:a\cdot e.
\end{align} By determinism \prettyref{equ:determinism}---shown in \prettyref{thm:axioms} to hold in any algebra $\mathfrak A$---we know that \prettyref{equ:forall} implies $a\cdot e=a$, which means that we can interpret \prettyref{equ:forall} as a {\em definition} of $e$ being a neutral element in $\mathfrak A$ in terms of analogical proportions.

\subsection{Algorithms}

From a practical point of view, the main task for future research is to develop algorithms for the computation of some or all solutions to analogical equations as defined in this paper. This problem is highly non-trivial in the general case and fairly complicated even in concrete cases (see \prettyref{exa:20_4_30_x_Appendix}). A reasonable starting point is therefore to first study small concrete mathematical domains such as, for example, the booleans or small number fields (e.g. the integers modulo m, $\mathbb Z_m$, for some small $m$) from the computational perspective. After that a reasonable next step is to study analogical proportions in finitely representable infinite structures \cite<cf.>{Ebbinghaus99,Libkin12}, which are more relevant to computer science and artificial intelligence research than the infinite models studied in classical universal algebra. Here interesting connections between, e.g., word proportions and logics on words studied in algebraic formal language and automata theory will hopefully become available, which may then lead to concrete algorithms for solving analogical equations over words, trees, and related data structures.
%This is at the core of the main task for future research which is to develop algorithms for the computation of some or all solutions to analogical equations as defined in this paper, which is highly non-trivial in general. %At its core, this requires algebraic methods for constructing and solving algebraic equations of the form \yref{equ:fgfg}-\yref{equ:ffgg}. For instance, in the arithmetic setting of \yref{sec:Numbers}, this task amounts to constructing some or all polynomials (i.e., justifications) $f$ and $g$ with integer coefficients, given some positive integers $m,n,$ and $k$, such that some line of Diophantine equations in \yref{equ:fgfg}-\yref{equ:ffgg} has integer solutions, which is non-trivial.

\subsection{Axioms}

In \prettyref{sec:Axioms}, we argued that \citeS{Lepage03} central permutation axiom---a critical axiom assumed by many authors \cite<e.g.>{Barbot19}---strong inner reflexivity, and strong reflexivity axioms are in general not satisfied as there are algebras in which these axioms fail, justified by counter-examples (\prettyref{thm:axioms}). It is interesting to investigate in which structures some of his axioms hold and to provide general characterizations (see \prettyref{problem:Lepage}).

\subsection{Applications to AI}

Another key line of research is to apply our model to various AI-related problems such as, e.g., common sense reasoning, formalizing metaphors, learning by analogy, and computational creativity. In \prettyref{thm:axioms} we have seen that analogical proportions are non-monotonic in nature, and it is important to investigate potential connections to non-monotonic reasoning in detail---itself a central research area of logic-based artificial intelligence prominently formalized by \cite{Gelfond91} in answer set programming \cite<cf.>{Brewka11}. Here interesting and surprising phenomena may occur. Moreover, it will be useful (and challenging) to fully apply our model to logic programming \cite<cf.>{Apt90} by first introducing appropriate algebraic operations on the space of all programs \cite{Antic21-2,Antic21-1}, and then by considering analogical proportions between logic programs of the form $P:Q::R:S$ \cite{Antic23-23}---in combination with unexpected or `creative' proportions, this line of research may lead to interesting formalizations of computational creativity \cite{Boden98}. We wish to expand this study to other domains relevant for computer science and artificial intelligence as, for instance, trees, graphs, automata, neural networks, et cetera. More broadly speaking, from the point of view of computational creativity, it is interesting to analyze unexpected (directed) analogical proportions as in Examples \ref{exa:2_0_3_z}, \ref{exa:0_0_2_40}, and \ref{exa:20_4_30_x}, and to try to find a qualitative notion of the degree of creativity of a solution to an analogical proportion in terms of its set of justifications. For instance, why does the solution $x=9$ of the arrow equation $20\to 4\righttherefore 30\to x$ in \prettyref{exa:20_4_30_x} appear to be more `creative' than $x=6$? Is there a relationship between the degree of unexpectedness or creativity of an analogical proportion and the algebraic structure of its set of justification? Related to this question, it will often be useful to have an evaluation criterion for the plausibility of an analogical proportion. In the setting of logic programming described before, for instance, the user may provide sets of positive and negative examples to impose additional constraints on solutions to analogical equations. In general, algorithmically ranking analogical proportions and formulating general principles for evaluating the plausibility of solutions to analogical equations is crucial and non-trivial.

\subsection{Universal Algebra}

From a mathematical point of view, relating analogical proportions to other concepts of universal algebra and related subjects is an interesting line of research. Specifically, studying analogical proportions in abstract mathematical structures like, for example, various kinds of lattices, semigroups and groups, rings, et cetera, is particularly interesting in the case of proportions between objects from different domains. Here it will be essential to study the relationship between properties of elements like being `neutral' or `absorbing' and their proportional properties (e.g. \prettyref{prop:a_0_c_0} and \prettyref{exa:0_0_2_40}; and see the discussion in the last paragraph of \prettyref{sec:Systems_}). At this point---due to the author's lack of expertise---it is not clear how exactly analogical proportions fit into the overall landscape of universal algebra and relating analogical proportions to other concepts of algebra and logic is therefore an important line of future research.

In this paper, we have studied analogical proportions between elements of possibly different algebras having the {\em same} underlying language $L$. It is challenging to generalize the framework to algebras over different languages, which requires an alignment of operations, possibly of different arity. For this, it might be useful to study meta-proportions between algebras of the form $\mathfrak{A:B::C:D}$.

\subsection{Logical Extensions}

In this paper, we have studied the {\em functional} case expressed via algebras and functional structures containing no relations and via rewrite formulas of the special logical form \prettyref{equ:varphi_s->t}, not containing relation symbols other than equality. An important next step is to enlarge the notion of a rewrite formula to include relation symbols and other logical constructs. This is challenging as allowing arbitrary formulas yields an overfitting where, in some domains, all elements are in analogical proportion. For example, consider the structure $(\mathbb N,succ,0)$, where $succ:\mathbb N\to\mathbb N$ is the successor function. In this structure, we can identify every natural number $a$ with the numeral $\underline a:=succ^a(0)$. Hence, given some natural numbers $a,b,c,d\in\mathbb N$, the formula
\begin{align*} 
    \varphi(x,y):\equiv(x=\underline a\land y=\underline b)\lor(x=\underline c\land y=\underline d)
\end{align*} characteristically justifies $a:b::c:d$ since $(\mathbb N,succ,0)\models\varphi(a',b')$ and $(\mathbb N,succ,0)\models\varphi(c',d')$ holds iff $a=a'$, $b=b'$, $c=c'$, and $d=d'$. The challenge is therefore to find generalizations of rewrite formulas which do not lead to overfitting. A reasonable starting point is to consider variants of rewrite formulas as, for example, formulas of the form
\begin{align*} \exists\mathbf z[xRs(\mathbf z)\land yRt(\mathbf z)],
\end{align*} where $R$ is an arbitrary binary relation symbol (other than equality).

\subsection{Term Rewriting}

Lastly, it is interesting to examine the role of {\em term rewriting} \cite<cf.>{Baader98} in analogical reasoning. More precisely, in this paper justifications of analogical proportions have the form $s\to t$, for some terms $s$ and $t$, which are rewriting rules as studied in term rewriting. This raises the question whether methods of term rewriting can be applied to reasoning about analogical proportions.

\section{Conclusion}\label{sec:Conclusion}

This paper introduced from first principles an abstract algebraic framework of analogical proportions in the general setting of universal algebra. This enabled us to compare mathematical objects possibly across different domains in a uniform way which is crucial for AI-systems. It turned out that our notion of analogical proportions has appealing mathematical properties. We showed that analogical proportions are compatible with injective functional transformations (\prettyref{thm:FPT}) and structure-preserving mappings (\prettyref{thm:FIT}) as desired. We further discussed \citeS{Lepage03} axioms and argued why in general we disagree with all of his axioms except for symmetry, while we agree with four axioms added in this paper, namely inner symmetry, inner reflexivity, reflexivity, and determinism (\prettyref{thm:axioms}). Moreover, it turned out that analogical proportions are non-monotonic in nature, which may have interesting connections to non-monotonic reasoning. We then compared our framework with two recently introduced frameworks of analogical proportions from the literature \cite{Miclet08,Stroppa06} within the concrete domains of sets and numbers, and in each case we either disagreed with the notion from the literature justified by some counter-example or we showed that our model yields strictly more justifiable solutions. Finally, we showed that our model has a natural logical interpretation in terms of model-theoretic types. This provides evidence for its applicability. In a broader sense, this paper is a first step towards a theory of analogical reasoning and learning systems with potential applications to fundamental AI-problems like common sense reasoning and computational learning and creativity.

\section*{Conflict of interest}

The authors declare that they have no conflict of interest.

\bibliographystyle{theapa}
\bibliography{/Users/christianantic/Bibdesk/Bibliography,/Users/christianantic/Bibdesk/Publications,/Users/christianantic/Bibdesk/Preprints,/Users/christianantic/Bibdesk/Preprints2}

\appendix
\newpage

\section*{Appendix}

\begin{example}\label{exa:20_4_30_x_Appendix} Explicating \prettyref{exa:20_4_30_x}, we wish to compute all solutions to the analogical equation in the multiplicative algebra
\begin{align*} 
    \mathfrak M:=(\mathbb N_2,\cdot,\mathbb N_2)
\end{align*} given by
\begin{align}\label{equ:20_4_30_x} 
    20:4::30:x.
\end{align} Recall that $\mathbb N_2=\{2,3,\ldots\}$ consists of the natural numbers not containing 0 and 1---this will allow us to bound the number of generalizations of a given number by the number of its prime factors. Following \prettyref{fact:Sol}, we first compute  all solutions to the arrow equation in $\mathfrak M$ given by
\begin{align}\label{equ:20-4_30-z} 
    20\to 4\righttherefore 30\to x.
\end{align} The justification $z\to 4$ is unique to $x=4$, which immediately entails that it is a solution to \prettyref{equ:20-4_30-z}. Notice that in the algebra $(\mathbb Q,\cdot)$, the rewrite rule $z\to\frac z 5$ is a justification of $20\to 4\righttherefore 30\to d$ iff $d=6$, which yields the solution $x=\frac{30} 5=6$. Unfortunately, this justification is not available in $\mathfrak M$ and to prove that $6$ is a solution to $20\to 4\righttherefore 30\to x$, we need to show that either $Jus_\mathfrak M(20,4)\cup Jus_\mathfrak M(30,6)$ contains only trivial justifications, or that $Jus_{\mathfrak M}(20\to 4\righttherefore 30\to 6)$ is a non-empty maximal set of justifications with respect to 6. Here a natural question arises: Are $x=4,6$ the only solutions to the arrow equation \prettyref{equ:20-4_30-z} in $\mathfrak M$? The answer is `no' as we can show that, unexpectedly, $x=9$ is another justifiable solution!

We begin by computing all $\mathfrak M$-generalizations of 20, 4, and 30. For this, it will be convenient to first compute their unique prime decompositions:
\begin{align}\label{equ:20_4_30} 20=2\cdot 2\cdot 5 \quad\text{and}\quad 4=2\cdot 2 \quad\text{and}\quad 30=2\cdot 3\cdot 5.
\end{align} These numbers are similar in the following sense:
\begin{enumerate}
\item The only difference between 20 and 30 is the second prime factor; both numbers are divisible by 2 and 5.
\item The only difference between 20 and 4 is that 4 is not divisible by 5; both numbers are divisible by 2 and 4.
\item The numbers 4 and 30 have the first prime factor 2 in common.
\end{enumerate} These similarities are reflected in the computation of their $\mathfrak M$-generalizations:
\begin{align*} gen_{\mathfrak M}(20)&= \left\{
    \begin{array}{lll}
        2\cdot 2\cdot 5 & z_1\cdot 2\cdot 5 & z_1\cdot z_1\cdot 5\\
        2\cdot z_2\cdot 5 & 2\cdot 2\cdot z_3\qquad & z_1\cdot z_2\cdot 5\\
        2\cdot z_2\cdot z_3\qquad & z_1\cdot 2\cdot z_3 & z_1\cdot z_2\cdot z_3\\
        z_1\cdot 2 & z_1\cdot 5 & z_1\cdot z_2\\
        z_1\cdot z_1\cdot z_2 & z_1
    \end{array}
    \right\}\\
    &= \left\{
    \begin{array}{lll}
        20 & 10z & 5z^2\\
         & 4z\qquad & 5z_1z_2\\
        2z_1z_2\qquad & & z_1z_2z_3\\
        2z & 5z & z_1z_2\\
        z_1^2z_2 & z
    \end{array}
    \right\}
\end{align*}
\begin{align*} 
    gen_{\mathfrak M}(30)&= \left\{
    \begin{array}{lll}
        2\cdot 3\cdot 5 & z_1\cdot 3\cdot 5 & \\
        2\cdot z_2\cdot 5 & 2\cdot 3\cdot z_3\qquad & z_1\cdot z_2\cdot 5\\
        2\cdot z_2\cdot z_3\qquad & z_1\cdot 3\cdot z_3 & z_1\cdot z_2\cdot z_3\\
        z_1\cdot 2 & z_1\cdot 5 & z_1\cdot z_2\\
        & z_1\cdot 3 & z_1
    \end{array}
    \right\}\\
    &= \left\{
    \begin{array}{lll}
        30 & 15z & \\
        10z & 6z & 5z_1z_2\\
        2z_1z_2\qquad & 3z_1z_3\qquad & z_1z_2z_3\\
        2z & 5z & z_1z_2\\
        & 3z & z
    \end{array}
    \right\}
\end{align*}
\begin{align*} 
    gen_{\mathfrak M}(4)
        &=\{2\cdot 2,2\cdot z_2,z_1\cdot 2,z_1\cdot z_2,z_1\cdot z_1,z_1\}\\
        &=\{4,2z,z_1z_2,z^2,z\}.
\end{align*} This yields
\begin{align*} 
    gen_\mathfrak M(20,30)&=gen_\mathfrak M(20)\cap gen_\mathfrak M(30)\\
    % &=\{2\cdot z_2\cdot 5,\;z_1\cdot z_2\cdot 5,\;2\cdot z_2\cdot z_3,\;z_1\cdot z_2\cdot z_3,\;z_1\cdot 5,\;z_1\cdot 2,\;z_1\cdot z_2,\;z_1\}\\
    &=\{10z,5z_1z_2,2z_1z_2,z_1z_2z_3,5z,2z,z_1z_2,z\}.
\end{align*} We now compute all non-empty sets of justifications $Jus_{\mathfrak M}(20\to 4\righttherefore 30\to d)$, for all $d\in\mathbb N_2$, with the following procedure (cf. \prettyref{pseudo:Sol}): for each $\mathfrak M$-generalization $s(\mathbf z)\in gen_{\mathfrak M}(20,30)$ and each witness $(\mathbf e_1,\mathbf e_2)$ satisfying\footnote{See \prettyref{con:abcd}.}\footnote{We omit here the superscript from notation, that is, we write $s(\mathbf e_1)$ instead of $s^{\mathfrak M}(\mathbf e_1)$ et cetera.}
\begin{align}\label{equ:20_30} 20=s(\mathbf e_1) \quad\text{and}\quad 30=s(\mathbf e_2),
\end{align} and for each $t(\mathbf z)\in gen_{\mathfrak M}(4)$ satisfying
\begin{align*} 4=t(\mathbf e_1),
\end{align*} we add $s\to t$ to $Jus_{\mathfrak M}(20\to 4\righttherefore 30\to t(\mathbf e_2))$. 

First of all, notice that $z\to 4$ is (only) in $Jus_{\mathfrak M}(20\to 4\righttherefore 30\to 4)$, which immediately yields the solution $x=4$ to \prettyref{equ:20-4_30-z} (cf. \prettyref{rem:t}).

\begin{enumerate}
\item $s(z)=10z$: The only witnesses $e_1,e_2\in M$ satisfying \prettyref{equ:20_30} are $e_1=2$ and $e_2=3$, and the only $t_1,t_2\in gen_{\mathfrak M}(4)$ satisfying $4=t(2)$ are $t_1(z)=2z$ and $t_2(z)=z^2$. We therefore have $10z\to 2z\in Jus_{\mathfrak M}(20\to 4\righttherefore 30\to t_1(3))=Jus_{\mathfrak M}(20\to 4\righttherefore 30\to 6)$ and $10z\to z^2\in Jus_{\mathfrak M}(20\to 4\righttherefore 30\to t_2(3))=Jus_{\mathfrak M}(20\to 4\righttherefore 30\to 9)$. This can be depicted as follows:
\begin{center}
\begin{tikzpicture}[node distance=1cm and 0.5cm]
% siehe tikz 3.6.1a §3.8
\node (a)               {$20$};
\node (d1) [right=of a] {$\to $};
\node (b) [right=of d1] {$4$};
\node (d2) [right=of b] {$\righttherefore $};
\node (c) [right=of d2] {$30$};
\node (d3) [right=of c] {$\to $};
\node (d) [right=of d3] {$6;9$.};
\node (s) [below=of b] {$10z$};
\node (t) [above=of c] {$2z; z^2$};

% siehe tikz 3.6.1a p. 165/166
\draw (a) to [edge label'={$z/2$}] (s); 
\draw (c) to [edge label={$z/3\quad\text{(unique)}$}] (s);
\draw (b) to [edge label={$z/2$}] (t);
\draw (d) to [edge label'={$z/3$}] (t);
\end{tikzpicture}
\end{center} Since $3$ is a {\em unique} witness satisfying $s(3)=30$, in case $10z\to 2z$ is a justification of $20\to 4\righttherefore 30\to d$, we must have $d=2\cdot 3=6$, which shows that $Jus_{\mathfrak M}(20\to 4\righttherefore 30\to 6)$ is non-empty and subset maximal with respect to $6$ (see the Uniqueness \prettyref{lem:UL}). We have thus derived
\begin{align*} \mathfrak M\models 20\to 4\righttherefore 30\to 6.
\end{align*} Notice that the justification $10z\to 2z$ emulates the justification $z\to\frac z 5$ from above. The same line of reasoning with $10z\to z^2$ instead of $10z\to 2z$ shows
\begin{align*} \mathfrak M\models 20\to 4\righttherefore 30\to 9.
\end{align*}

\item $s(z_1,z_2)=5z_1z_2$: The only $\mathbf e_1\in M^2$ satisfying $20=s(\mathbf e_1)$ is given by $(2,2)$. There are two $\mathbf e_2^{(1)},\mathbf e_2^{(2)}\in M^2$ satisfying $30=s(\mathbf e_2^{(1)})=s(\mathbf e_2^{(2)})$ given by $\mathbf e_2^{(1)}=(2,3)$ and $\mathbf e_2^{(2)}=(3,2)$. Moreover, we have five $t_1(z_1,z_2),t_2(z_1,z_2),t_3(z_1,z_2),t_4(z_1,z_2),t_5(z_1,z_2)\in gen_{\mathfrak M}(4)$ satisfying $4=t_i(2,2)$, $1\leq i\leq 5$, given by
\begin{align*} t_1(z_1,z_2)=2z_1,\quad t_2(z_1,z_2)=2z_2,\quad t_3(z_1,z_2)=z_1^2,\quad t_4(z_1,z_2)=z_2^2,\quad t_5(z_1,z_2)=z_1z_2.
\end{align*} We therefore have the following justifications:\footnote{To be more succinct, we have summarized here two diagrams into one separated by semicolons.}
\begin{center}
\begin{tikzpicture}[node distance=1cm and 0.5cm]
% siehe tikz 3.6.1a §3.8
\node (a)               {$20$};
\node (d1) [right=of a] {$\to $};
\node (b) [right=of d1] {$4$};
\node (d2) [right=of b] {$\righttherefore $};
\node (c) [right=of d2] {$30$};
\node (d3) [right=of c] {$\to $};
\node (d) [right=of d3] {$4;\;6$,};
\node (s) [below=of b] {$5z_1z_2$};
\node (t) [above=of c] {$2z_1$};

% siehe tikz 3.6.1a p. 165/166
\draw (a) to [edge label'={$(z_1,z_2)/(2,2)$}] (s); 
\draw (c) to [edge label={$(z_1,z_2)/(2,3);(3,2)$}] (s);
\draw (b) to [edge label={$(z_1,z_2)/(2,2)$}] (t);
\draw (d) to [edge label'={$(z_1,z_2)/(2,3);(3,2)$}] (t);
\end{tikzpicture}
\end{center}
\begin{center}
\begin{tikzpicture}[node distance=1cm and 0.5cm]
% siehe tikz 3.6.1a §3.8
\node (a)               {$20$};
\node (d1) [right=of a] {$\to $};
\node (b) [right=of d1] {$4$};
\node (d2) [right=of b] {$\righttherefore $};
\node (c) [right=of d2] {$30$};
\node (d3) [right=of c] {$\to $};
\node (d) [right=of d3] {$6;\;4,$};
\node (s) [below=of b] {$5z_1z_2$};
\node (t) [above=of c] {$2z_2$};

% siehe tikz 3.6.1a p. 165/166
\draw (a) to [edge label'={$(z_1,z_2)/(2,2)$}] (s); 
\draw (c) to [edge label={$(z_1,z_2)/(2,3);(3,2)$}] (s);
\draw (b) to [edge label={$(z_1,z_2)/(2,2)$}] (t);
\draw (d) to [edge label'={$(z_1,z_2)/(2,3);(3,2)$}] (t);
\end{tikzpicture}
\end{center}
\begin{center}
\begin{tikzpicture}[node distance=1cm and 0.5cm]
% siehe tikz 3.6.1a §3.8
\node (a)               {$20$};
\node (d1) [right=of a] {$\to $};
\node (b) [right=of d1] {$4$};
\node (d2) [right=of b] {$\righttherefore $};
\node (c) [right=of d2] {$30$};
\node (d3) [right=of c] {$\to $};
\node (d) [right=of d3] {$4;\;9,$};
\node (s) [below=of b] {$5z_1z_2$};
\node (t) [above=of c] {$z_1^2$};

% siehe tikz 3.6.1a p. 165/166
\draw (a) to [edge label'={$(z_1,z_2)/(2,2)$}] (s); 
\draw (c) to [edge label={$(z_1,z_2)/(2,3);(3,2)$}] (s);
\draw (b) to [edge label={$(z_1,z_2)/(2,2)$}] (t);
\draw (d) to [edge label'={$(z_1,z_2)/(2,3);(3,2)$}] (t);
\end{tikzpicture}
\end{center}
\begin{center}
\begin{tikzpicture}[node distance=1cm and 0.5cm]
% siehe tikz 3.6.1a §3.8
\node (a)               {$20$};
\node (d1) [right=of a] {$\to $};
\node (b) [right=of d1] {$4$};
\node (d2) [right=of b] {$\righttherefore $};
\node (c) [right=of d2] {$30$};
\node (d3) [right=of c] {$\to $};
\node (d) [right=of d3] {$9;\;4.$};
\node (s) [below=of b] {$5z_1z_2$};
\node (t) [above=of c] {$z_2^2$};

% siehe tikz 3.6.1a p. 165/166
\draw (a) to [edge label'={$(z_1,z_2)/(2,2)$}] (s); 
\draw (c) to [edge label={$(z_1,z_2)/(2,3);(3,2)$}] (s);
\draw (b) to [edge label={$(z_1,z_2)/(2,2)$}] (t);
\draw (d) to [edge label'={$(z_1,z_2)/(2,3);(3,2)$}] (t);
\end{tikzpicture}
\end{center}
\begin{center}
\begin{tikzpicture}[node distance=1cm and 0.5cm]
% siehe tikz 3.6.1a §3.8
\node (a)               {$20$};
\node (d1) [right=of a] {$\to $};
\node (b) [right=of d1] {$4$};
\node (d2) [right=of b] {$\righttherefore $};
\node (c) [right=of d2] {$30$};
\node (d3) [right=of c] {$\to $};
\node (d) [right=of d3] {$6$};
\node (s) [below=of b] {$5z_1z_2$};
\node (t) [above=of c] {$z_1z_2$};

% siehe tikz 3.6.1a p. 165/166
\draw (a) to [edge label'={$(z_1,z_2)/(2,2)$}] (s); 
\draw (c) to [edge label={$(z_1,z_2)/(2,3);(3,2)$}] (s);
\draw (b) to [edge label={$(z_1,z_2)/(2,2)$}] (t);
\draw (d) to [edge label'={$(z_1,z_2)/(2,3);(3,2)$}] (t);
\end{tikzpicture}
\end{center} Notice the symmetries between the first and second, and between the third and fourth diagrams above; in what follows, we will only mention one of multiple symmetric cases. %We add $5z_1z_2\to 2z_2$, $5z_1z_2\to 2z_1$, $5z_1z_2\to z_1^2$ and $5z_1z_2\to z_2^2$ to $Jus_{\mathfrak M}(20\to 4\righttherefore 30\to 4)$; $5z_1z_2\to 2z_1$ and $5z_1z_2\to 2z_2$ to $Jus_{\mathfrak M}(20\to 4\righttherefore 30\to 6)$; and $5z_1z_2\to z_1^2$ and $5z_1z_2\to z_2^2$ to $Jus_{\mathfrak M}(20\to 4\righttherefore 30\to 9)$.\\ %It is important to emphasize that at this point, we cannot derive the solution $x=4$ as we do not know whether $Jus_{\mathfrak M}(20\to 4\righttherefore 30\to 4)$ is a proper subset of $Jus_{\mathfrak M}(20\to 4\righttherefore 30\to 6)$ or not (but see the case $s(z_1,z_2)=z_1z_2$ below).\\

\item $s(z_1,z_2)=2z_1z_2$: In the remaining cases, we present only the self-explaining diagrams enumerating all valid arrow proportions:
\begin{center}
\begin{tikzpicture}[node distance=1cm and 0.5cm]
% siehe tikz 3.6.1a §3.8
\node (a)               {$20$};
\node (d1) [right=of a] {$\to$};
\node (b) [right=of d1] {$4$};
\node (d2) [right=of b] {$\righttherefore $};
\node (c) [right=of d2] {$30$};
\node (d3) [right=of c] {$\to$};
\node (d) [right=of d3] {$6;10$.};
\node (s) [below=of b] {$2z_1z_2$};
\node (t) [above=of c] {$2z_1$};

% siehe tikz 3.6.1a p. 165/166
\draw (a) to [edge label'={$(z_1,z_2)/(2,5)$}] (s); 
\draw (c) to [edge label={$(z_1,z_2)/(5,3);(3,5)$}] (s);
\draw (b) to [edge label={$(z_1,z_2)/(2,5)$}] (t);
\draw (d) to [edge label'={$(z_1,z_2)/(5,3);(3,5)$}] (t);
\end{tikzpicture}
\end{center} We see here that $20\to 4\righttherefore 30\to 10$ has the justification $2z_1z_2\to 2z_1$ in $\mathfrak M$, which is also a justification of $20\to 4\righttherefore 30\to 6$; on the other hand, we have seen above that $5z_1z_2\to 2z_1$ and $5z_1z_2\to 2z_2$ are justifications of $20\to 4\righttherefore 30\to 6$, but not of $20\to 4\righttherefore 30\to 10$, which shows that, up to this point, $Jus_{\mathfrak M}(20\to 4\righttherefore 30\to 10)$ is strictly contained in $Jus_{\mathfrak M}(20\to 4\righttherefore 30\to 6)$ and therefore not subset maximal with respect to 10. The case $t(z_1,z_2)=2z_2$ is analogous. We further have the arrow proportion
\begin{center}
\begin{tikzpicture}[node distance=1cm and 0.5cm]
% siehe tikz 3.6.1a §3.8
\node (a)               {$20$};
\node (d1) [right=of a] {$\to$};
\node (b) [right=of d1] {$4$};
\node (d2) [right=of b] {$\righttherefore $};
\node (c) [right=of d2] {$30$};
\node (d3) [right=of c] {$\to$};
\node (d) [right=of d3] {$25;9$.};
\node (s) [below=of b] {$2z_1z_2$};
\node (t) [above=of c] {$z_1^2$};

% siehe tikz 3.6.1a p. 165/166
\draw (a) to [edge label'={$(z_1,z_2)/(2,5)$}] (s); 
\draw (c) to [edge label={$(z_1,z_2)/(5,3);(3,5)$}] (s);
\draw (b) to [edge label={$(z_1,z_2)/(2,5)$}] (t);
\draw (d) to [edge label'={$(z_1,z_2)/(5,3);(3,5)$}] (t);
\end{tikzpicture}
\end{center} The only remaining case $t(z_1,z_2)=z_2^2$ is analogous.

\item $s(z_1,z_2,z_3)=z_1z_2z_3$:
\begin{center}
\begin{tikzpicture}[node distance=1cm and 0.5cm]
% siehe tikz 3.6.1a §3.8
\node (a)               {$20$};
\node (d1) [right=of a] {$\to$};
\node (b) [right=of d1] {$4$};
\node (d2) [right=of b] {$\righttherefore $};
\node (c) [right=of d2] {$30$};
\node (d3) [right=of c] {$\to$};
\node (d) [right=of d3] {$6;10;4$.};
\node (s) [below=of b] {$z_1z_2z_3$};
\node (t) [above=of c] {$2z_2$};

% siehe tikz 3.6.1a p. 165/166
\draw (a) to [edge label'={$(z_1,z_2,z_3)/(2,2,5)$}] (s); 
\draw (c) to [edge label={$(z_1,z_2,z_3)/(2,3,5);(2,5,3);(3,2,5)$}] (s);
\draw (b) to [edge label={$(z_1,z_2,z_3)/(2,2,5)$}] (t);
\draw (d) to [edge label'={$(z_1,z_2,z_3)/(2,3,5);(2,5,3);(3,2,5)$}] (t);
\end{tikzpicture}
\end{center} Here it is sufficient to analyze the three cases $(2,3,5);(2,5,3);(3,2,5)$, where the second argument varies, instead of all six permutations of $(2,3,5)$ since for $2z_2$ only the second argument is relevant. The cases $t(z_1,z_2,z_3)=2z_1$ and $t(z_1,z_2,z_3)=2z_3$ are analogous. We further have the arrow proportions
\begin{center}
\begin{tikzpicture}[node distance=1cm and 0.5cm]
% siehe tikz 3.6.1a §3.8
\node (a)               {$20$};
\node (d1) [right=of a] {$\to$};
\node (b) [right=of d1] {$4$};
\node (d2) [right=of b] {$\righttherefore $};
\node (c) [right=of d2] {$30$};
\node (d3) [right=of c] {$\to$};
\node (d) [right=of d3] {$9;25;4$.};
\node (s) [below=of b] {$z_1z_2z_3$};
\node (t) [above=of c] {$z_2^2$};

% siehe tikz 3.6.1a p. 165/166
\draw (a) to [edge label'={$(z_1,z_2,z_3)/(2,2,5)$}] (s); 
\draw (c) to [edge label={$(z_1,z_2,z_3)/(2,3,5);(2,5,3);(3,2,5)$}] (s);
\draw (b) to [edge label={$(z_1,z_2,z_3)/(2,2,5)$}] (t);
\draw (d) to [edge label'={$(z_1,z_2,z_3)/(2,3,5);(2,5,3);(3,2,5)$}] (t);
\end{tikzpicture}
\end{center} The cases $t(z_1,z_2,z_3)=z_1^2$ and $t(z_1,z_2,z_3)=z_3^2$ are analogous.

\item $s(z_1,z_2)=z_1z_2$:
\begin{center}
\begin{tikzpicture}[node distance=1cm and 0.5cm]
% siehe tikz 3.6.1a §3.8
\node (a)               {$20$};
\node (d1) [right=of a] {$\to $};
\node (b) [right=of d1] {$4$};
\node (d2) [right=of b] {$\righttherefore $};
\node (c) [right=of d2] {$30$};
\node (d3) [right=of c] {$\to $};
\node (d) [right=of d3] {$2;15;6;5;3;10$.};
\node (s) [below=of b] {$z_1z_2$};
\node (t) [above=of c] {$z_1$};

% siehe tikz 3.6.1a p. 165/166
\draw (a) to [edge label'={$(z_1,z_2)/(4,5)$}] (s); 
\draw (c) to [edge label={$(z_1,z_2)/\begin{array}{l}
    (2,15);(15,2);(6,5);\\
    (5,6);(3,10);(10,3)
\end{array}$}] (s);
\draw (b) to [edge label={$(z_1,z_2)/(4,5)$}] (t);
\draw (d) to [edge label'={$(z_1,z_2)/\begin{array}{l}
    (2,15);(15,2);(6,5);\\
    (5,6);(3,10);(10,3)
\end{array}$}] (t);
\end{tikzpicture}
\end{center} The case $t(z_1,z_2)=z_2$ is analogous. We further have the arrow proportion
\begin{center}
\begin{tikzpicture}[node distance=1cm and 0.5cm]
% siehe tikz 3.6.1a §3.8
\node (a)               {$20$};
\node (d1) [right=of a] {$\to $};
\node (b) [right=of d1] {$4$};
\node (d2) [right=of b] {$\righttherefore $};
\node (c) [right=of d2] {$30$};
\node (d3) [right=of c] {$\to $};
\node (d) [right=of d3] {$4;225;36;25;9;100$.};
\node (s) [below=of b] {$z_1z_2$};
\node (t) [above=of c] {$z_1^2$};

% siehe tikz 3.6.1a p. 165/166
\draw (a) to [edge label'={$(z_1,z_2)/(2,10)$}] (s); 
\draw (c) to [edge label={$(z_1,z_2)/\begin{array}{l}
    (2,15);(15,2);(6,5);\\
    (5,6);(3,10);(10,3)
\end{array}$}] (s);
\draw (b) to [edge label={$(z_1,z_2)/(2,10)$}] (t);
\draw (d) to [edge label'={$(z_1,z_2)/\begin{array}{l}
    (2,15);(15,2);(6,5);\\
    (5,6);(3,10);(10,3)
\end{array}$}] (t);
\end{tikzpicture}
\end{center} The case $t(z_1,z_2)=z_2^2$ is analogous. %This case is interesting as $z_1z_2\to z_1^2$ is the first justification of $20\to 4\righttherefore 30\to 4$ which is not a justification of $20\to 4\righttherefore 30\to 6$. We will see later that this partly implies that 4 is a solution to \prettyref{equ:20-4_30-z}---it remains to check that there is no other number $d\in\{225,36,25,9,100\}$ such that $Jus_{\mathfrak M}(20\to 4\righttherefore 30\to 4)\subsetneq Jus_{\mathfrak M}(20\to 4\righttherefore 30\to d)$. This is shown by the following diagram (see explanation below):
The next case is:
\begin{center}
\begin{tikzpicture}[node distance=1cm and 0.5cm]
% siehe tikz 3.6.1a §3.8
\node (a)               {$20$};
\node (d1) [right=of a] {$\to $};
\node (b) [right=of d1] {$4$};
\node (d2) [right=of b] {$\righttherefore $};
\node (c) [right=of d2] {$30$};
\node (d3) [right=of c] {$\to $};
\node (d) [right=of d3] {4;\;30;\;12;\;10;\;6;\;20.};
\node (s) [below=of b] {$z_1z_2$};
\node (t) [above=of c] {$2z_1$};

% siehe tikz 3.6.1a p. 165/166
\draw (a) to [edge label'={$(z_1,z_2)/(2,10)$}] (s); 
\draw (c) to [edge label={$(z_1,z_2)/\begin{array}{l}
    (2,15);(15,2);(6,5);\\
    (5,6);(3,10);(10,3)
\end{array}$}] (s);
\draw (b) to [edge label={$(z_1,z_2)/(2,10)$}] (t);
\draw (d) to [edge label'={$(z_1,z_2)/\begin{array}{l}
    (2,15);(15,2);(6,5);\\
    (5,6);(3,10);(10,3)
\end{array}$}] (t);
\end{tikzpicture}
\end{center} %We see here that the solution 4 is justified by $z_1z_2\to 2z_1$, whereas 225, 36, 25, 9, and 100 are not. Moreover, we have seen above that 4 is justified by $z_1z_2\to z_1^2$, whereas 30, 12, 10, 6, and 20 are not. Hence, the two justifications $z_1z_2\to z_1^2$ and $z_1z_2\to 2z_1$ of $20\to 4\righttherefore 30\to 4$ show that $x=4$ is indeed a solution to \prettyref{equ:20-4_30-z}, that is,
% \begin{align*} \mathfrak M\models 20\to 4\righttherefore 30\to 4.
% \end{align*} 
The case $t(z_1,z_2)=2z_2$ is analogous.

\item $s(z)=5z$:
\begin{center}
\begin{tikzpicture}[node distance=1cm and 0.5cm]
% siehe tikz 3.6.1a §3.8
\node (a)               {$20$};
\node (d1) [right=of a] {$\to$};
\node (b) [right=of d1] {$4$};
\node (d2) [right=of b] {$\righttherefore $};
\node (c) [right=of d2] {$30$};
\node (d3) [right=of c] {$\to$};
\node (d) [right=of d3] {$6$.};
\node (s) [below=of b] {$5z$};
\node (t) [above=of c] {$z$};

% siehe tikz 3.6.1a p. 165/166
\draw (a) to [edge label'={$\text{(unique)}\quad z/4$}] (s); 
\draw (c) to [edge label={$z/6\quad\text{(unique)}$}] (s);
\draw (b) to [edge label={$z/4$}] (t);
\draw (d) to [edge label'={$z/6$}] (t);
\end{tikzpicture}
\end{center} Observe that transforming 20 into 4 means removing the prime factor 5 from 20 in \prettyref{equ:20_4_30}---transforming 30 `in the same way' therefore means here to remove the prime factor 5 from 30 in \prettyref{equ:20_4_30}, yielding the solution $x=6$. This shows that $5z\to z$ emulates the justification $z\to\frac z 5$ mentioned at the beginning of the example not available in $\mathfrak M$.

\item $s(z)=2z$: There is a unique $e_1=10$ with $20=s(e_1)$---there is no $t(z)\in gen_{\mathfrak M}(4)$ satisfying $4=t(10)$. This shows that there is no justification $2z\to t(z)$ of $20\to 4\righttherefore 30\to d$, for any $d\in\mathbb N_2$.

\item $s(z)=z$: There is a unique $e_1=20$ with $20=s(e_1)$---there is no $t(z)\in gen_{\mathfrak M}(4)$ satisfying $4=t(20)$. This shows that there is no justification $z\to t(z)$ of $20\to 4\righttherefore 30\to d$, for any $d\in\mathbb N_2$.
\end{enumerate}

To summarize, we have the following non-empty sets of justifications:
\begin{align*} Jus_{\mathfrak M}(20\to 4\righttherefore 30\to 6)&= \left\{
\begin{array}{l}
    10z\to 2z\\
    5z_1z_2\to 2z_1\\
    5z_1z_2\to 2z_2\\
    5z_1z_2\to z_1z_2\\
    2z_1z_2\to 2z_1\\
    2z_1z_2\to 2z_2\\
    z_1z_2z_3\to 2z_1\\
    z_1z_2z_3\to 2z_2\\
    z_1z_2z_3\to 2z_3\\
    z_1z_2\to z_1\\
    z_1z_2\to z_2\\
    z_1z_2\to 2z_1\\
    z_1z_2\to 2z_2\\
    5z\to z
\end{array}
\right\},
\end{align*}
\begin{align*} Jus_{\mathfrak M}(20\to 4\righttherefore 30\to 4)&= \left\{
\begin{array}{l}
    z\to 4\\
     5z_1z_2\to 2z_1\\
     5z_1z_2\to 2z_2\\
     5z_1z_2\to z_1^2\\
     5z_1z_2\to z_2^2\\
     z_1z_2z_3\to 2z_1\\
     z_1z_2z_3\to 2z_2\\
     z_1z_2z_3\to 2z_3\\
     z_1z_2z_3\to z_1^2\\
     z_1z_2z_3\to z_2^2\\
     z_1z_2z_3\to z_3^2\\
     z_1z_2\to z_1^2\\
     z_1z_2\to z_2^2\\
     z_1z_2\to 2z_1\\
     z_1z_2\to 2z_2
\end{array}
\right\},
\end{align*}
\begin{align*} Jus_\mathfrak M(20\to 4\righttherefore 30\to 9)&= \left\{
    \begin{array}{l}
        10z\to z^2\\
        5z_1z_2\to z_1^2\\
        5z_1z_2\to z_2^2\\
        2z_1z_2\to z_1^2\\
        2z_1z_2\to z_2^2\\
        z_1z_2z_3\to z_1^2\\
        z_1z_2z_3\to z_2^2\\
        z_1z_2z_3\to z_3^2\\
        z_1z_2\to z_1^2\\
        z_1z_2\to z_2^2
    \end{array}
    \right\},
\end{align*}
\begin{align*} Jus_\mathfrak M(20\to 4\righttherefore 30\to 25)&= \left\{
    \begin{array}{l}
        2z_1z_2\to z_1^2\\
        2z_1z_2\to z_2^2\\
        z_1z_2z_3\to z_1^2\\
        z_1z_2z_3\to z_2^2\\
        z_1z_2z_3\to z_3^2\\
        z_1z_2\to z_1^2\\
        z_1z_2\to z_2^2
    \end{array}
    \right\}\subsetneq Jus_\mathfrak M(20\to 4\righttherefore 30\to 9),
\end{align*}
\begin{align*} Jus_{\mathfrak M}(20\to 4\righttherefore 30\to 10)&= \left\{
\begin{array}{l}
    2z_1z_2\to 2z_1\\
    2z_1z_2\to 2z_2\\
    z_1z_2z_3\to 2z_1\\
    z_1z_2z_3\to 2z_2\\
    z_1z_2z_3\to 2z_3\\
    z_1z_2\to z_1\\
    z_1z_2\to z_2\\
    z_1z_2\to 2z_1\\
    z_1z_2\to 2z_2
\end{array}
\right\}\subsetneq Jus_{\mathfrak M}(20\to 4\righttherefore 30\to 6),
\end{align*}
\begin{align*} Jus_{\mathfrak M}(20\to 4\righttherefore 30\to d)&= \{z_1z_2\to 2z_1,\;z_1z_2\to 2z_2\}\quad\text{for all $d\in\{30,12,20\}$}\\
    &\subsetneq Jus_{\mathfrak M}(20\to 4\righttherefore 30\to 6),\\
Jus_{\mathfrak M}(20\to 4\righttherefore 30\to d)&=\{z_1z_2\to z_1^2,\;z_1z_2\to z_2^2\}\quad\text{for all $d\in\{225,36,100\}$}\\
    &\subsetneq Jus_{\mathfrak M}(20\to 4\righttherefore 30\to 4),\\
Jus_{\mathfrak M}(20\to 4\righttherefore 30\to 2)&=\{z_1z_2\to z_1,\;z_1z_2\to z_2\}\quad\text{for all $d\in\{2,15,5,3\}$}\\
    &\subsetneq Jus_{\mathfrak M}(20\to 4\righttherefore 30\to 6).
\end{align*} We see that only $x=4,6,9$ have maximal non-empty sets of justifications, proving
\begin{align*} Sol_{\mathfrak M}(20\to 4\righttherefore 30\to x)=\{4,6,9\}.
\end{align*} The solution $x=4$ follows easily with $z\to 4$ and \prettyref{rem:t}. The solution $x=6$ is intuitive and has, among others, the natural justifications $10z\to 2z$ and $5z\to z$ resembling $z\to\frac z 5$ in $(\mathbb Q,\cdot)$.
%The solution $x=4$, on the other hand, is more subtle---its justifications $z_1z_2\to z_1^2$ and $z_1z_2\to 2z_1$ are, in combination, unique to $x=4$ and prove that $Jus_{\mathfrak M}(20\to 4\righttherefore 30\to 4)$ is indeed non-empty and subset maximal with respect to 4. Intuitively, the justification $z_1z_2\to z_1^2$ means that we can factorize $20$ into $2\cdot 10$ and transform its factor $2$ into $2^2=4$---analogously, we can factorize $30$ into $2\cdot 15$ and again transform $2$ into $2^2=4$. It is important to emphasize, however, that $z_1z_2\to z_1^2$ alone is {\em not} enough to justify $20\to 4\righttherefore 30\to 4$, since $z_1z_2\to z_1^2$ also justifies $20\to 4\righttherefore 30\to d$, {\em for all} $d\in\{225,36,25,9,100\}$! This is the reason why we need the justification $z_1z_2\to 2z_1$ as well. 
Lastly, the solution $x=9$ has the unique justification $10z\to z^2$.

We now want to compute all solutions in $\mathfrak M$ to the directed analogical equation
\begin{align}\label{equ:20_4-30_x} 
    20:4\righttherefore 30:x.
\end{align} Recall from \prettyref{fact:Sol} that $d\in M$ is a solution to $20:4\righttherefore 30:x$ iff $d$ is a solution to $20\to 4\righttherefore 30\to x$ and $30$ is a solution to $4\to 20\righttherefore d\to x$ in $\mathfrak M$. We already know that 4, 6, and 9 are the only solutions to $20\to 4\righttherefore 30\to x$ in $\mathfrak M$. We check whether 4, 6, and 9 are solutions to \prettyref{equ:20_4-30_x} separately:
\begin{enumerate}
\item\label{ite:4-20-4-30} There can be no justification of $4\to 20\righttherefore 4\to 30$ in $\mathfrak M$---for example, given the $\mathfrak M$-generalization $2z$ of $4$ in $\mathfrak M$, there is no $\mathfrak M$-generalization $t$ of $20$ and $30$ such that $t(2)=20$ and $t(2)=30$, which can be depicted as follows:
\begin{center}
\begin{tikzpicture}[node distance=1cm and 0.5cm]
% siehe tikz 3.6.1a §3.8
\node (a)               {$4$};
\node (d1) [right=of a] {$\to$};
\node (b) [right=of d1] {$20$};
\node (d2) [right=of b] {$\righttherefore $};
\node (c) [right=of d2] {$4$};
\node (d3) [right=of c] {$\to$};
\node (d) [right=of d3] {$30$.};
\node (s) [below=of b] {$2z$};
\node (t) [above=of c] {no $\mathfrak M$-generalization};

% siehe tikz 3.6.1a p. 165/166
\draw (a) to [edge label'={$\text{(unique)}\quad z/2$}] (s); 
\draw (c) to [edge label={$z/2\quad\text{(unique)}$}] (s);
\draw (b) to [edge label={$z/2$}] (t);
\draw (d) to [edge label'={$z/2$}] (t);
\end{tikzpicture}
\end{center} This shows
\begin{align*} \emptyset=Jus_\mathfrak M(4\to 20\righttherefore 4\to 30)\subsetneq Jus_\mathfrak M(4\to 20\righttherefore 4\to 20)=\{4\to 20,\ldots\},
\end{align*} which means that 30 is not a solution to $4\to 20\righttherefore 4\to x$ and, hence, 4 is not a solution to \prettyref{equ:20_4-30_x}:
\begin{align*} \mathfrak M\not\models 20:4\righttherefore 30:4.
\end{align*} 

\item On the other hand, the justifications
\begin{center}
\begin{tikzpicture}[node distance=1cm and 0.5cm]
% siehe tikz 3.6.1a §3.8
\node (a)               {$4$};
\node (d1) [right=of a] {$\to$};
\node (b) [right=of d1] {$20$};
\node (d2) [right=of b] {$\righttherefore $};
\node (c) [right=of d2] {$6$};
\node (d3) [right=of c] {$\to$};
\node (d) [right=of d3] {$30$};
\node (s) [below=of b] {$2z$};
\node (t) [above=of c] {$10z$};

% siehe tikz 3.6.1a p. 165/166
\draw (a) to [edge label'={$z/2$}] (s); 
\draw (c) to [edge label={$z/3$\quad\text{(unique)}}] (s);
\draw (b) to [edge label={$z/2$}] (t);
\draw (d) to [edge label'={$z/3$\quad\text{(unique)}}] (t);
\end{tikzpicture}
\end{center} and
\begin{center}
\begin{tikzpicture}[node distance=1cm and 0.5cm]
% siehe tikz 3.6.1a §3.8
\node (a)               {$4$};
\node (d1) [right=of a] {$\to$};
\node (b) [right=of d1] {$20$};
\node (d2) [right=of b] {$\righttherefore $};
\node (c) [right=of d2] {$9$};
\node (d3) [right=of c] {$\to$};
\node (d) [right=of d3] {$30$};
\node (s) [below=of b] {$z^2$};
\node (t) [above=of c] {$10z$};

% siehe tikz 3.6.1a p. 165/166
\draw (a) to [edge label'={$z/2$}] (s); 
\draw (c) to [edge label={$z/3$\quad\text{(unique)}}] (s);
\draw (b) to [edge label={$z/2$}] (t);
\draw (d) to [edge label'={$z/3$\quad\text{(unique)}}] (t);
\end{tikzpicture}
\end{center} show that $30$ is a solution to $4\to 20\righttherefore 6\to x$ and $4\to 20\righttherefore 9\to x$ in $\mathfrak M$ (Uniqueness \prettyref{lem:UL}), thus proving
\begin{align*} \mathfrak M\models 20:4\righttherefore 30:6 \quad\text{and}\quad \mathfrak M\models 20:4\righttherefore 30:9.
\end{align*} This shows that $6$ and $9$ are the only solutions to $20:4\righttherefore 30:x$ in $\mathfrak M$, that is,
\begin{align*}  Sol_\mathfrak M(20:4\righttherefore 30:x)=\{6,9\}.
\end{align*}
\end{enumerate}

It remains to check whether $6$ and $9$ are solutions to $20:4::30:x$ in \prettyref{equ:20_4_30_x}:

\begin{enumerate}
\item We first verify that $6$ is a solution by showing the remaining relation
\begin{align*} \mathfrak M\models 30:6\righttherefore 20:4.
\end{align*} For this, we prove
\begin{align*} \mathfrak M\models 30\to 6\righttherefore 20\to 4 \quad\text{and}\quad \mathfrak M\models 6\to 30\righttherefore 4\to 20.
\end{align*} The first relation is justified by $5z\to z$, that is, $5z\to z$ is a justification of $30\to 6\righttherefore 20\to d$ in $\mathfrak M$, $d\in M$, iff there are $e_1,e_2\in M$ such that
\begin{align*} 30=5e_1 \quad\text{and}\quad 6=e_1 \quad\text{and}\quad 20=5e_2 \quad\text{and}\quad d=e_2,
\end{align*} which is equivalent to $d=4$. An analogous argument using the justification $z\to 5z$ proves the second relation.

\item Finally, we show that $9$ is a solution by showing the remaining relation
\begin{align*} \mathfrak M\models 30:9\righttherefore 20:4.
\end{align*} For this, we prove
\begin{align*} \mathfrak M\models 30\to 9\righttherefore 20\to 4 \quad\text{and}\quad \mathfrak M\models 9\to 30\righttherefore 4\to 20.
\end{align*} The first relation is justified by $10z\to z^2$ and the second by $z^2\to 10z$ by a similar argument as for $6$ in the previous item.
\end{enumerate} We have thus shown
\begin{align*} Sol_\mathfrak M(20:4::30:x)=\{6,9\}.
\end{align*}
\end{example}

\end{document}